\numberwithin{equation}{section}
\theoremstyle{definition}
\newcommand{\E}{\mathbb{E}}
\newcommand{\bm}{\pmb}
\renewcommand{\P}{\mathbb{P}}
\newcommand{\T}{\intercal}
\newcommand{\Ahat}{\widehat{A}}
\newcommand{\Mhat}{\widehat{M}}
\newcommand{\D}{\mathcal{D}}
\newcommand{\A}{\mathcal{A}}
\renewcommand{\hbar}{\overline{h}}
\newcommand{\nubar}{\overline{\nu}}
\newcommand{\qbar}{\overline{q}}
\newcommand{\nut}{{\tilde{\nu}}}
\newcommand{\mfK}{{\mathcal{K}}}
\newcommand{\Pk}{{\P^k}}
\newcommand{\g}{\bm{g}}
\newcommand{\qj}[1]{q^{j,{#1}}}
\newcommand{\qtk}[1]{\tilde{q}^{k,{#1}}}
\def\F{{\mathcal{F}}}
\def\G{{\mathcal{G}}}
\def\N{(N)}
\def\mfQ{{\mathfrak{Q}}}
\def\mfN{{\mathfrak{N}}}
\def\P{{\mathbb{P}}}
\def\R{\mathbb{R}}
\def\nuj{\nu^j}
\def\nujst{\nu^{j,\ast}}
\def\nubarN{\nubar^{\N}}
\def\nubarkN{\nubar^{k,\N}}
\def\bnubar{\bm{\nubar}}
\def\qbar{\bar{q}}
\def\bqbar{\bm{\qbar}}
\def\qt{\tilde{q}}
\def\bqt{\bm{\qt}}
\def\ba{\bm{a}}
\def\bLambda{\bm{\Lambda}}
\def\bPsi{\bm{\Psi}}
\def\bphi{\bm{\phi}}
\def\bg{\bm{g}}
\def\blambda{\bm{\lambda}}
\def\nubark{\nubar^k}
\def\Hbar{\overline{H}}
\def\nutk{\nut^k}
\def\mM{\mathcal{M}}
\def\mMbar{\overline{\mM}}
\def\bmMbar{\bm{\overline{\mM}}}
\def\bmMt{\bm{\tilde{\mM}}}
\def\bnut{\bm{\nut}}
\def\bgt{\tilde{\bg}}
\def\bxi{\bm{\xi}}
\def\HT{\mathbb{H}^2_T}
\def\invmean{\overline{m}}
\def\LT{\mathbb{L}^2_T}
\newcommand{\smallqfm}[1]{\left( \begin{smallmatrix} {#1}_t \\ \qj{#1}_t \end{smallmatrix} \right)}
\definecolor{WowColor}{rgb}{.75,0,.75}
\definecolor{SubtleColor}{rgb}{0,0,.50}
\newcounter{margincounter}
\title {
Mean Field Games with Partial Information for Algorithmic Trading\thanks{SJ would like to acknowledge the support of the Natural Sciences and Engineering Research Council of Canada (NSERC), [funding reference numbers RGPIN-2018-05705 and RGPAS-2018-522715]}
}
\author{
Philippe Casgrain\thanks{Department of Statistical Sciences, University of Toronto, Canada (\email{p.casgrain@utoronto.ca}).}
\and
Sebastian Jaimungal\thanks{Department of Statistical Sciences, University of Toronto, Canada (\email{sebastian.jaimungal@utoronto.ca}, \url{http://sebastian.statistics.utoronto.ca.})}
}
\begin{document}

\maketitle

\begin{abstract}
Financial markets are often driven by latent factors which traders cannot observe. Here, we address an algorithmic trading problem with collections of heterogeneous agents who aim to perform optimal execution or statistical arbitrage, where all agents filter the latent states of the world, and their trading actions have permanent and temporary price impact. This leads to a large stochastic game with heterogeneous agents. We solve the stochastic game by investigating its mean-field game (MFG) limit, with sub-populations of heterogeneous agents, and, using a convex analysis approach, we show that the solution is characterized by a vector-valued forward-backward stochastic differential equation (FBSDE). We demonstrate that the FBSDE admits a unique solution, obtain it in closed-form, and characterize the optimal behaviour of the agents in the MFG equilibrium. Moreover, we prove the MFG equilibrium provides an $\epsilon$-Nash equilibrium for the finite player game. We conclude by illustrating the behaviour of agents using the optimal MFG strategy through simulated examples.
\end{abstract}

\section{Introduction}

Financial markets are incredibly complex systems that have a significant impact on how our society functions. One main source of complexity is the continuous interaction of millions of traders (agents) all interacting simultaneously. Another is the effect of latent factors which drive prices and hence optimal decisions. To analyse such systems through the lens of a model, we must consider how interactions among agents affect decisions of each individual market participant. Furthermore, the population of market participants may be heterogeneous in their trading objectives and their behaviour types.   To perform optimally, each agent must formulate a model of asset prices which takes into account latent information sources, as well as the trading decisions of other market participants.

How agents should optimally process this large amount of information into trading decisions is a question that the algorithmic trading literature has long attempted to answer. Classically, algorithmic trading models only consider a single agent interacting with a stochastically evolving asset price, as in \cite{almgren2001optimal}. Later works such as \cite{cartea2016incorporating} study algorithmic trading where other (multiple) agents interact with the price, but not necessarily in an optimal manner, nor directly with the agent's actions.

To effectively model  the interaction of agents in electronic markets, we instead turn towards a mean-field game (MFG) approach, which, in general, aims at approximating the optimal actions of large populations of non-corporative interacting agents in game-like settings. A large body of research has already been devoted to the study of MFGs. The original works stem from \cite{huang2006large}, \cite{HuangCaines_TAC07}, and \cite{lasry2007mean}. Among the many extensions and generalizations which explore the broad theory of MFGs as well as their applications, we highlight the following works:~\cite{huang2010large}, \cite{gueant2011mean}, \cite{nourian2013e}, and \cite{carmona2013probabilistic}. This theory has seen application in various financial contexts, such as \cite{carmona2013mean} and \cite{huang2017robust} who use it to model systemic risk, \cite{jaimungal2015mean} show use it for algorithmic trading in the presence of a major agent and a population of minor agents,  \cite{cardaliaguet2016mean} who investigate MFG in the context of optimal execution, and \cite{firoozi2015varepsilon} who look at mean-field games in algorithmic trading with partial information on states. \cite{bayraktar2017mini} study a model for mini-flash crashes where agents interact, but each agent optimizes a problem that ignores the optimal actions of other agents, while realized prices incorporate the actions of all agents.

In contrast to other work on MFGs, as well as its specific application to algorithmic trading, here, motivated by \cite{casgrain_jaimungal_2016}, we include latent states so that agents do not have full information about the system dynamics. Under a very general specification for the asset price dynamics, and allowing for heterogeneity in behaviours and objectives, we provide an optimal strategy for all participating agents. This strategy is able to make effective use of market price path data to filter out information about latent processes, and simultaneously account for the actions of all other (heterogenous) market participants.

Our work sets itself apart from the current literature by approaching the algorithmic trading problem using a MFG approach, as well, we avoid the assumption that each agent observes the strategies of other agents, and due to the latent factor, agents do not have complete knowledge of the model driving asset returns. We also take a novel approach to solving the MFG by applying convex analysis techniques directly to the problem, rather than relying on the Stochastic Maximum Principle or the dynamic programming principle. Our approach yields powerful results such as a single optimality equation that directly characterizes the optimal control without the use of auxiliary processes.  Furthermore, we include a latent structure in our model, causing agents to have incomplete information on the dynamics of the market.

The remainder of the paper is structured as follows. Section~\ref{sec:Model} presents our stochastic game model for the finite-population market. Section~\ref{sec:solving-the-mfg} begins by formulating a MFG version of the infinite player limit. The section continues by applying convex analysis tools to characterize the optimal trading strategy as an infinite system of Forward-Backward Stochastic Differential Equations (FBSDEs) coupled through a collection of mean-field processes. The system of FBSDEs is then solved, where we provide closed-form solutions of both the individual agent's optimal control and for the mean-field processes. In Section~\ref{sec:epsilon-Nash}, we show that the optimal solution to the MFG satisfies an $\epsilon$-Nash equilibrium property for the finite population model. Lastly, Section~\ref{sec:Numerical-Experiments} explores some of the optimal strategy's behaviour by analyzing simulated games. Section~\ref{sec:Conclusion} concludes, and the appendices contain most proofs of the results presented in the body of the paper.

\section{Model Setup and Motivation} \label{sec:Model}

In this section, we present a stochastic game where a collection of agents all trade a single asset\footnote{It is possible to generalize to trading multiple assets which have fairly arbitrary dynamics. We, however, opt to restrict the analysis to the single asset case to ensure the key insights on how agents interact, and on how latent information is incorporated, are not obfuscated by the interaction of multiple assets.}. Agents interact with one another by affecting the price of the asset through their buying and selling decisions. The model assumes agents trade at a continuous rate over a specified time period, and mark-to-market the value of their position at the end of the trading horizon. In contrast to most other works, we allow for agent heterogeneity by defining sub-populations of players each with their own unique behavioural parameters.

Each agent seeks to maximize a functional which measures their performance over the course of the trading period. The flow of information available to each agent contains (i) the asset price, but not latent processes which drive the price, (ii) their own state, and (iii) their own actions (in particular we exclude the information about other agents inventories or trading strategies). This section concludes by presenting the agent's optimization problem and by formally describing the Nash equilibrium we seek.

\subsection{The Agent's State Processes}

Define the filtered probability space \\ \noindent $\left( \Omega, \G, \mathfrak{G} = \{ \G_t \}_{t\in[0,T]} , \P \right)$, where $T>0$ is a fixed and finite time horizon. All processes defined in this section are adapted to the filtration $\G_t$ unless otherwise stated. We assume that the population of traders is composed of $N\in\mathbb{N}$ individual agents, each indexed by an integer $j\in\mfN:=\{1,\dots,N\}$, and all agents are trading in a single asset. To allow for different trading behaviour within this large population, the population is divided into $K \leq N$ disjoint sub-populations indexed by $k\in\mfK:=\{1,\dots,K\}$, where all traders within a sub-population are assumed to behave homogeneously. Let us define
\begin{equation}
	\mfK_k^{\N} = \left\{ \, j\in \mfN \,:\,  \text{agent }j \text{ is in sub-population } k \, \right\}
\end{equation}
to represent the collection of agents that belong to sub-population $k$, where the superscript $\N$ is included to show explicit dependence on the total number of agents. We also define $N_k^{\N} = \# \mfK_k^{\N}$ to be the total number of agents within sub-population $k$, and assume the proportion of the total population contained in each sub-population remains finite in the limit as $N$ becomes large, so that
\begin{equation} \label{eq:limit-assumption-proportions}
	\lim_{N\nearrow\infty} \frac{N_k^{\N}}{N} = p_k \in (0,1)
	\, .
\end{equation}

Each agent $j$ controls the rate at which they buy or sell the traded asset via the process $(\nuj_t)_{t\in[0,T]}$, where $\nuj_t>0$ indicates buying and $\nuj_t<0$ indicates selling. The agents keep track of the net amount of shares they have accumulated via their (controlled) inventory process $\qj{\nuj} = ( \qj{\nuj}_t )_{t\in[0,T]}$ which satisfies
\begin{equation} \label{eq:agent-inventory-dynamics}
	\qj{\nuj}_t = \mfQ^j_0 + \int_0^t \nuj_u \, du 	\,,
\end{equation}
where $\{\mfQ^j_0\}_{j=1}^N$ is a collection of independent random variables representing agent-$j$'s inventory at the start of the trading period. We further assume the initial inventory positions of traders have a bounded variance, so that $\exists \; C>0$ (independent of $N$) such that $\E[ \left( \mfQ^j_0 \right)^2 ] < C$, $\forall\;j\in\mfN$. For a fixed strategy $\nuj$, the definition of $\qj{\nuj}_t$ in equation~\eqref{eq:agent-inventory-dynamics} corresponds to each agent $j$ buying or selling roughly $\epsilon \times \nuj_t$ units of the traded asset in any small time interval $[t,t+\epsilon]$. Finally, we assume the mean of each agent's starting inventory is the same within a given sub-population so that $\E[\mfQ^j_0] = \invmean^k_0$, $\forall\;j \in \mfK_k^{\N}$.

The amount of cash any agent has accumulated through their trading is represented via their (controlled) cash process $\{(X_t^{j,\nuj})_{t\in[0,T]}\}_{j\in\mfN}$. When buying or selling the asset, we assume each trader pays an instantaneous transaction cost that is linearly proportional to the amount of shares transacted. This instantaneous cost is expressed through the controlled dynamics of the cash process which, for agent $j$, is given by
\begin{equation}
\label{eqn:Xdef}
	X_t^{j,\nuj} = X_0^j - \int_0^t  \left( S_u^{\nubarN} - a_k \, \nuj_u \right) \nuj_u\,du
	\;,\qquad \forall j\in\mfK_{k}^{\N}\,,
\end{equation}
where $a_k>0$ is a parameter unique to sub-population $k$, and $S^{\nubarN} = ( S_t^{\nubarN} )_{t\in[0,T]}$ is the (controlled) price process of the traded asset. We assume the midprice process $S_t^{\nubarN}$ can be written as
\begin{equation} \label{eq:S-price-dynamics}
	S_t^{\nubarN} = S_0 + \int_0^t \left( \lambda \, \nubarN_u + A_u \right) \, du + M_t
	\;,
\end{equation}
where $A=\left(A_t\right)_{t\in[0,T]}$ is a $\G$-predictable process, $M=\left( M_t \right)_{t\in[0,T]}$ is a $\G$-adapted martingale with $M_0=0$, and $\nubarN_u= \tfrac{1}{N} \sum_{j=1}^N \nu_u^j$ is the average trading rate of all agents. Additionally, we make the technical assumption that $A\in\HT$ and $M\in\LT$, where
\begin{equation}
	\HT =
	\left\{ \nu:\Omega\times[0,T] \rightarrow \R \,,\; \E\left[\textstyle\int_0^T \nu^2_u\,du \right] < \infty \right\}
\end{equation}
is the set of $\mathbb P$-square integrable processes on $[0,T]$, and where
\begin{equation}
	\LT =
	\left\{ \nu:\Omega\times[0,T] \rightarrow \R \,,\; \E\left[ \nu^2_t \right]  < \infty  \;, \forall \; t \in [0,T] \right\}
\end{equation}
is the set of processes with finite $\mathbb P$-second-moment on the interval $[0,T]$. We also assume the components driving the price process $A$ and $M$ are independent of $\{ \mfQ_0^j \}_{j=1}^N$, the initial values of the agent's inventories and that the quantity $\E^{\Pk} \left[ \int_0^T A_u \, du \right]$ is invariant to any agent's choice of trading strategy $\nuj\in\HT$.

Note that, beyond the integrability conditions on $A$ and $M$ there are no a-priori specifications on the dynamics of these processes. We allow enough flexibility so that $S$ may have very general semi-martingale dynamics which may incorporate jumps or even be non-Markovian. The only binding assumption is that order-flow from agents' trading have linear price impact. Having such a general class of models allows us to obtain optimal controls that are robust to specific model choices.

The process $A$ characterizes the `alpha' or mean trajectory of the asset price process, while $M$ represents the noise surrounding this drift, and both can be thought of as incorporating the effects of information sources not available to the agents, i.e., they can be latent. The parameter $\lambda$ controls the scale of the (permanent) price impact of the agent's strategies.
The average population price impact $\nubar_t^{\N}$ admits an alternative representation in terms of averages over sub-populations,
\begin{equation} \label{eq:average-impact-def}
	\nubarN_t = \sum_{k\in\mfK} \tfrac{N_k^{\N}}{N} \,  \nubar_t^{k,\N}\,,
	\quad\text{where }\quad
	\nubar_t^{k,\N} := \tfrac{1}{N_k^{\N}} \sum_{j\in\mfK_k^{\N}} \nuj_t
	\;.
\end{equation}
This representation will be useful in the infinite population size limit studied in Section~\ref{sec:solving-the-mfg}.

\subsection{Information Restriction} \label{sec:information-restriction}
In our model, we restrict the flow of information to the agents by allowing agent $j$ to only have access to information about the paths of the price process $S_t^{\nubarN}$ and their own inventory, but not others.
More explicitly, we allow an agent $j$ to choose a control $\nuj$ from the space of admissible controls (for agent-$j$)
\begin{equation}
	\A^j := \left\{ \nu\text{ is }\F^j\text{-predictable}\,,\; \nu \in \HT \right\}
	\;,
\end{equation}
where for each $j\in\mfN$ we define the filtration $\F^j=(\F^j_t)_{t\in[0,T]}$, where
\[
\F_t^j=\sigma\left( (S_u^{\nubarN})_{u\in[0,t]} \right) \vee \sigma\left( \mfQ_0^j \right)
\]
is the sigma algebra generated by the paths of the asset midprice process and agent $j$'s starting inventory level\footnote{Note: Although we do not consider it in this paper, the definition of $\F$ can just as easily be generated to include paths of additional sources of information to the agents.}.  If $\nu\in\A^j$, then $X_t^{j,\nu}$ and $\qj{\nu}_t$ are $\F^j$-adapted as well, therefore we omit the sigma algebras generated by these processes from the definition of $\A^j$ and $\F^j$. Let us also define the filtration $\F=(\F_t)_{0\leq t\leq T}$, such that $\F_t=\sigma\left( (S_u^{\nubarN})_{0\leq u\leq t} \right)$, to be the filtration generated solely by the paths of the asset midprice process.

To understand the implications of the above restriction, we can take a close look at the dynamics of $S_t^{\nubarN}$ in equation~\eqref{eq:S-price-dynamics}. Firstly, since agent-$j$ is restricted to information in $\F^j$, they will be unable to observe the $\G$-adapted components $A$ and $M$, and they are also unable to observe the inventory levels of other agents. Consequently, they are unable to observe the trading strategies of all other agents. Instead, the trader must reconstruct the values of each of these individual components only through observations of the paths of $S_t^{\nubarN}$. Both $A$ and $M$ can be regarded as latent processes which, potentially, must be estimated to account for the effect of price movements.

We now provide a simple example that illustrates this framework in the single agent case, along the lines of the latent alpha models that \cite{casgrain_jaimungal_2016} study. To this end, assume there is a  latent Markov chain $\Theta = (\Theta_t)_{t\in[0,T]}$ unobservable to any agent, and further assume that
\begin{equation}
	S_t^{\nubarN} = S_0 + \lambda \int_0^t \nubar_u^{\N} \,du + \int_0^t f(\Theta_u)\,du + \sigma \, W_t
	\;.
\end{equation}
In this example, the large filtration $\G$ is given by $\G_t=\sigma\left( (\Theta_t,W_t)_{t\in[0,T]} \right) \vee \sigma( \{\mfQ^j_0 \}_{j=1}^N )$. By restricting agent-$j$ to trade on the filtration $\F^j$, they may only trade based on the observed path of the asset price and not its individual components. In an algorithmic trading setting, it is important for agents to have a predictive model for the asset price process. Since the dynamics of $S^{\nubarN}$ are not fully known to the agents, they will have to infer values of $\Theta_t$ based on the filtration at time $t$ in order to make predictions on the future value of the asset price.

Our set-up allows for more general latent models than this one example, but it is useful to keep this example in mind when thinking about a concrete case. We refer the reader to \cite{casgrain_jaimungal_2016} for more details on latent alpha models such as the example presented here.

\subsection{The Agent's Optimization Problem}

Each agent wishes to maximize an objective functional which measures their trading performance over the trading period $[0,T]$. For each $j\in\mfN$, let $\A^{-j}:=\bigtimes_{i\in\mfN, i\ne j} \A^i$, we assume that agent-$j$ within a sub-population $k\in\mfK$ chooses a control $\nuj\in\A^j$ to maximize a functional $H_j:\A^j\times\A^{-j}\rightarrow\R$, which is defined as
\begin{equation} \label{eq:sub-pop-objective-rep}
	H_j(\nuj,\nu^{-j})=\E \left[
	X_T^{\nuj}
	+ \qj{\nuj}_T \left( S_T^{\nubarN} - \Psi_k\, \qj{\nuj}_T \right)
	- \phi_k \int_0^T \left( \qj{\nuj}_u \right)^2 \, du
	\right]
	\;,
\end{equation}
where $\phi_k\geq0$, $\Psi_k>0$ are constants that may vary by sub-population $k$, and where $\nu^{-j}:=\left( \nu^1 , \dots , \nu^{j-1},\nu^{j+1}, \dots, \nu^N \right)$ indicates the dependence of the objective on the controls of all other agents.

The agent's objective is composed of three distinct parts. The first component, $X_T^{\nuj}$, is the agent's total accumulated cash. The second component, $\qj{\nuj}_T \left( S_T^{\nubarN} - \Psi_k \, \qj{\nuj}_T \right)$, represents the mark-to-market value of the terminal inventory and includes a liquidation penalty. Indeed, as $\Psi_k\rightarrow\infty$, the agent completely liquidates their inventory by the end of the trading interval, since, in this limit, the cost of holding non-zero inventory at time $T$ goes to infinity. Lastly, $- \phi_k \int_0^T \left( \qj{\nuj}_u \right)^2 \, du$ represents a running penalty that penalizes an agent for holding large long or short inventory positions throughout the trading period. The parameter $\phi_k$ can be regarded as controlling the risk appetite of the agent, since for large values of $\phi_k$, agent $j$ will have a great dis-incentive to take on any market exposure. This penalty can also be understood from the perspective of the agent accounting for model uncertainty as analysed in \cite{cartea2017algorithmic}.

Agents interact through the price impact term $\nubarN$, which appears implicitly in the dynamics of $S^{\nubarN}$. Agents within the same sub-population have the same objective functional, this implies that sub-populations act in a similar manner, albeit each individual agent's strategy is adapted to their own inventory (in addition to the midprice), and hence agents' strategies are not identical.

Substituting $\qj{\nuj}_T$, $X_T^j$, and $S_T$ using \eqref{eq:agent-inventory-dynamics}, \eqref{eqn:Xdef}, and \eqref{eq:S-price-dynamics}, respectively into definition~\eqref{eq:sub-pop-objective-rep}, using integration by parts, and taking expectations, we obtain the alternative form of the objective functional:
\begin{equation} \label{eq:alternative-objective-def}
\begin{aligned}
	H_j(\nuj,\nu^{-j})
	= \E\Bigg[
	\; X_0^j &+ \mfQ^j_0 \left( S_0 - \Psi \mfQ_0^j \right)
	\\
	&+ \int_0^T
	\left\{
	\qj{\nuj}_t dS_t^{\nubarN}
	-
	\begin{pmatrix}
		\nuj_t \\ \qj{\nuj}_t
	\end{pmatrix}^\T
	\begin{pmatrix}
	a_k & \Psi_k \\ \Psi_k & \phi_k
	\end{pmatrix}
	\begin{pmatrix}
		\nuj_t \\ \qj{\nuj}_t
	\end{pmatrix}
	\, dt
	\right\}
	\Bigg]
	\;.
\end{aligned}
\end{equation}
This representation makes the influence of the parameter triplet $(a_k,\phi_k,\Psi_k)$ on the objective function explicit. The triplet, which is shared amongst all members of sub-population $k$, will have a direct impact on the agent's behaviour. The variation of this triplet across sub-populations allows us to incorporate heterogeneous agents.

\begin{remark}
	The model presented above is designed so that agents are incentivized to gradually bring their inventory levels $\qj{}_t$ towards zero over the course of the trading period. In particular, an agent-$j$ in sub-population $k$ is penalized for non-zero exposure by the terminal liquidation penalty $- \Psi_k\, (\qj{\nuj}_T)^2$ and the running penalty $-\phi_k \int_0^T (\qj{\nuj}_t)^2 \, dt$ which both appear in the expression for the objective function \eqref{eq:sub-pop-objective-rep}.

	It is possible to generalize this model to instead pressure agents to bring their inventory levels towards some stochastic trading target $\{\mfQ^j_T \}_{j=1}^N$, so that agent-$j$ is instead penalized for deviating from its trading target $\mfQ^j_T$ at time $T$, rather than deviating from $0$. This can be achieved by replacing the former terminal liquidation and running penalties with $-\Psi_k (\qj{\nuj}_T - \mfQ^j_T)^2$ and $-\phi_k \int_0^T (\qj{\nuj}_t - \mfQ^j_t)^2  dt$, respectively. Because of the linear structure in the midprice model, it is easy to show that this modification to the objective function is exactly equivalent to modifying the initial condition of the process $\qj{}$ from $\qj{\nuj}_0=\mfQ^j_0$ to $\qj{\nuj}_0=\mfQ^j_0 - \mfQ^j_T $ for each $j\in\mfN$. If we impose the conditions that $\{\mfQ^j_0 - \mfQ^j_T\}_{j=1}^T$ are independent, that $\E (\mfQ^j_0 - \mfQ^j_T)^2 < C \; \forall \; j\in\mfN$, and that $\E[\mfQ^j_0 - \mfQ^j_T] = \invmean^k_0$ $\forall\;j \in \mfK_k^{\N}$, then all of the results that follow in the remainder of the paper apply for the generalized model with stochastic trading targets, which is done by simply replacing the initial condition of each inventory process with $\mfQ^j_0 - \mfQ^j_T$.

	\end{remark}

\subsection{The Stochastic Game}

As mentioned earlier, all agents seek to maximize their own objective function, and we seek the optimal strategy for all agents. More formally, we seek a collection of controls ${ \{ \omega^j \in \A^j \, : j\in\mfN\} }$ such that
\begin{equation}
	\omega^j = \arg\sup_{\omega\in\A^j} H_j(\omega,\omega^{-j})\;, \qquad \forall j\in\mfN\,.
\end{equation}
Identifying this collection of controls is no easy feat, since the objective functional for each agent is affected by the controls of all agents. Furthermore, the set of admissible controls differs between agents -- recall each agent has access to the filtration generated by the midprice and their inventory only. This latter observation posses difficulties since the set of optimal controls we are searching for, and the random processes present in the objective function \eqref{eq:sub-pop-objective-rep}, are adapted to different filtrations. This fact prevents us from (directly) applying the standard set of dynamic programming or stochastic maximum principle  tools to solve the problem.

\section{Solving the Mean-Field Stochastic Game} \label{sec:solving-the-mfg}

The stochastic game we aim to solve presents a number of obstacles which prohibit it from being solved directly. In this section, we overcome these obstacles by instead solving a MFG version of the stochastic game. To construct the MFG, we take the limit as the population size tends to infinity. In the limit, the finite player game becomes a (stochastic) MFG where agents no longer interact directly with one another, but instead interact through a set of mean-field processes (one for each sub-population). In the remainder of this section, we present the MFG that results from the infinite population limit, provide a closed-form representation of each agent's optimal strategy, and a closed-form representation of the mean-field processes within the game. Although we do not explicitly solve the finite player game presented in Section~\ref{sec:Model}, by establishing an $\epsilon$-Nash equilibrium property in Section~\ref{sec:epsilon-Nash}, we show that the equilibrium solution obtained for the MFG provides an good approximation to the finite population game, provided that the population size is large enough.

We begin the section by taking the population limit of the stochastic game, and of each agent's objective functional as $N\rightarrow\infty$, to yield a new limiting objective function and a (stochastic) MFG. Focusing on this limiting objective functional, we proceed by applying tools from convex analysis to obtain the optimal action for each agent in the form of the solution to a vector-valued forward-backward stochastic differential equation (FBSDE). Next, we obtain the equilibrium in the MFG by explicitly solving the FBSDE.


\subsection{The Limiting Mean-Field Game}

From \eqref{eq:alternative-objective-def}, we see the only term that depends on the population size $N$, within each objective functional, is the average trading rate of all agents $(\nubarN_t)_{t\in[0,T]}$. Moreover, this dependence appears only through the midprice dynamics $S^{\nubarN}$.

To formulate the limiting problem, we make some additional assumptions regarding the existence of a limit of the average trading rate. Let us assume that there exist processes $\nubark=(\nubark_t)_{t\in[0,T]}$ for $k \in\mfK$ so that $\nubark \in \HT$ and $\nubark$ is $\F$-predictable, where
\begin{equation}
	\lim_{N\nearrow\infty} \nubarkN_t = \nubark_t
	\, , \quad \P \times \mu \text{ a.e.}
	\; ,
\end{equation}
$\mu$ is the Lebesgue measure on the Borel sigma algebra on $[0,T]$, and $\P \times \mu$ is the canonical product measure of $\P$ and $\mu$.
\begin{remark}
The assumption that each $\nubark$ is $\F$-predictable can be relaxed to being predictable w.r.t to the finer filtration $\vee_{j\in\mfN}\F^j$ without any change to the results that follow.
\end{remark}
We call each of the processes $\{\nubark\}_{k\in\mfK}$ the sub-population mean-fields, where each component represents the limiting average trading rate within a given sub-population. Due to the assumptions on the  relative size of the sub-populations (see \eqref{eq:limit-assumption-proportions}),  in the limit as $N\rightarrow\infty$, the total average rate of trading $\nubarN$ exists. More specifically, let us define the population mean-field to be the process $\nubar = (\nubar_t)_{t\in[0,T]}$ where $\nubar_t = \lim_{N\nearrow\infty} \nubarN_t$, then $\nubar_t$ exists and admits the representation
\begin{equation} \label{eq:nubar-subpop-rep}
	\nubar_t = \sum_{k\in\mfK} p_k \, \nubark_t
	\,, \qquad \P\times\mu \text{ a.e.}
	\;,
\end{equation}
where $\{p_k\}_{k\in\mfK}$ represent the limiting proportions of each sub-population as defined in equation~\eqref{eq:limit-assumption-proportions}. With these assumptions, the limiting dynamics of the asset price process satisfies the (controlled) SDE
\begin{equation}
	dS_t^{\nubar} =
	\left( A_t + \lambda \sum_{k\in\mfK} p_k \, \nubark_t \right) \, dt
	+ dM_t
	\;,
\end{equation}
where we replace all of the price impact terms with their limits.

Since we restrict agent-$j$ to trading actions $\nuj$ from the admissible set $\A^j \subset \HT$, as $N\rightarrow\infty$, each agent's individual contribution to $\nubar^{\N}_t$ vanishes. Furthermore, upon inspection of the definition of agent-$j$'s objective functional in equation~\eqref{eq:alternative-objective-def}, we see the dependence on $\nu^{-j}$ appears only through the process $\nubarN$, which converges to $\nubar_t$ in the limit. These two remarks imply that as $N\rightarrow\infty$, each agent's objective functional no longer depends directly on $\nu^{-j}$, but rather it depends on the population statistic $\nubar_t$, representing the average price impact of all agents in the limit. For ease of notation, we therefore suppress the second argument of the objective functional in the limit. Hence, an agent $j$ in sub-population $k$, seeks to maximize the functional $\Hbar_j : \A^j \rightarrow \R$,
\begin{equation} \label{eq:Hbar-obj-def}
	\Hbar_j(\nuj)
	= \E\Bigg[
	\int_0^T
	\left\{
	\qj{\nuj}_t
	dS_t^{\nubar}
	-
	\begin{pmatrix}
		\nuj_t \\ \qj{\nuj}_t
	\end{pmatrix}^\T
	\begin{pmatrix}
	a_k & \Psi_k \\ \Psi_k & \phi_k
	\end{pmatrix}
	\begin{pmatrix}
		\nuj_t \\ \qj{\nuj}_t
	\end{pmatrix}
	\right\}
	\, dt
	\Bigg]
	\;,
\end{equation}
which we obtain using representation~\eqref{eq:alternative-objective-def} and omitting the constant terms. This above objective function implicitly depends on the processes $\{\nubark\}_{k=1}^K$, which we will need to determine when proceeding with the agent's optimization problem.

We next aim to solve the mean-field stochastic game by identifying a set of strategies which form a Nash equilibrium. In other words, we seek a collection of controls $\{ \nuj \}_{j=1}^\infty$ so that
\begin{equation}
	\nuj = \arg \sup_{\nuj \in \A^j} \Hbar_j (\nu)
	\;,
\end{equation}
for all $j\in\mfN$. Because of the definition of $\nubar_t$, we require the collection of controls to simultaneously satisfy the consistency condition
\begin{equation} \label{eq:consistency-condition}
	\nubar_t = \lim_{N \nearrow \infty} \frac{1}{N} \sum_{j=1}^N \nuj_t
	\;.
\end{equation}
This optimal control problem is of a similar form to the discrete population game, with the main difference lying in the consistency condition imposed on the collection of controls at the optimum.

The model we use here falls into the class of \emph{extended} mean-field games of controls, such as in in~\cite{gomes2014existence,carmona2018probabilistic}. The particular `fixed-point' formulation of the mean-field which appears through the consistency condition~\eqref{eq:consistency-condition} follows the approach to mean-field games of~\cite{HuangCaines_TAC07,nourian2013e,firoozi2015varepsilon}, amongst others.

\subsection{Solving the Agent's Optimization Problem}

In this section, we solve for the collection of controls that form a Nash equilibrium for the MFG. To achieve this, we use techniques from the convex analysis literature in a similar in spirit to the approach used in~\cite{bank2017hedging} for the problem of optimal hedging. We conclude by demonstrating that the agent's optimal control can be represented as the solution to a particular linear vector-valued FBSDE.

Solving for an optimal control for objective function $\Hbar_j$, defined in~\eqref{eq:Hbar-obj-def}, presents some challenges due to the latent information structure, stemming from the agents inability to observe the individual components of the midprice process. Each agent aims to find an $\F^j$-adapted control to maximize an objective function containing costs that are adapted to the filtration $\G\supseteq\F^j$. These $\G$ adapted processes appear solely in the dynamics of the midprice process $S_t^{\nubar}$. Due to this latent information, it is not possible to directly apply standard stochastic control techniques to obtain the agent's optimal behaviour.

Instead of taking a direct approach, we first represent the midprice process in terms of a pair of $\F$-adapted processes. This will allow us to re-write $\Hbar_j$ in terms of processes that are $\F^j$-adapted, thus resolving the issue of latent information. This can be achieved by applying the result in the following lemma.
\begin{lemma} \label{thm:predictable-representation}
Define the process $\Ahat=(\Ahat_t)_{t\in[0,T]}$ where $\Ahat_t = \E\left[ A_t \mid \F_t \right]$. Then $\Ahat$ is an $\HT$,  $\F$-adapted process. Furthermore, there exists an $\F$-adapted $\LT$ martingale $\Mhat = ( \Mhat_t )_{t\in[0,T]}$ such that
	\begin{equation} \label{eq:SinTermsOfMhat}
		S_t^{\nubar} = S_0+\int_0^t \left( \Ahat_u + \lambda \, \nubar_u \right)\, du + \Mhat_u
		\;.
	\end{equation}
	The process $\Mhat$ is known as the innovations process for the filter $\Ahat$.
\end{lemma}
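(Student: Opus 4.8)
The plan is to treat this as a classical innovations-process construction from nonlinear filtering: define $\Mhat$ explicitly as the remainder left after subtracting the filtered drift from $S^{\nubar}$, and then verify directly that this remainder is a square-integrable $\F$-martingale.

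First I would dispense with the claimed properties of $\Ahat$. By construction $\Ahat_t = \E[A_t \mid \F_t]$ is $\F_t$-measurable, hence $\F$-adapted (after passing to a predictable version, which exists by the predictable projection theorem). For the $\HT$ bound, conditional Jensen gives $\Ahat_u^2 \le \E[A_u^2 \mid \F_u]$; taking expectations, applying Tonelli to exchange $\E$ and $\int_0^T\,\cdot\,du$, and using the tower property yields $\E\big[\int_0^T \Ahat_u^2\,du\big] \le \E\big[\int_0^T A_u^2\,du\big] < \infty$, since $A \in \HT$. Thus $\Ahat \in \HT$.

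Next I would define $\Mhat$ by rearranging the target identity,
\[
\Mhat_t := S_t^{\nubar} - S_0 - \int_0^t (\Ahat_u + \lambda\,\nubar_u)\,du
= M_t + \int_0^t (A_u - \Ahat_u)\,du,
\]
where the second equality substitutes the original price dynamics~\eqref{eq:S-price-dynamics} and cancels the common $\lambda\nubar$ term. Adaptedness is immediate: $S^{\nubar}$ generates $\F$, while $\Ahat$ is $\F$-adapted and $\nubar$ is $\F$-predictable, so the subtracted drift integral is $\F$-adapted. Square-integrability of $\Mhat_t$ follows from $M \in \LT$ together with the Cauchy--Schwarz estimate $\big(\int_0^t (A_u - \Ahat_u)\,du\big)^2 \le T \int_0^T (A_u - \Ahat_u)^2\,du$ and $A, \Ahat \in \HT$.

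The crux is the $\F$-martingale property. For $s \le t$ I would write $\Mhat_t - \Mhat_s = (M_t - M_s) + \int_s^t (A_u - \Ahat_u)\,du$ and show each term has vanishing $\F_s$-conditional expectation. For the martingale increment, the key observation is that $\F_s \subseteq \G_s$, so by the tower property and the $\G$-martingale property of $M$, $\E[M_t \mid \F_s] = \E[\E[M_t \mid \G_s] \mid \F_s] = \E[M_s \mid \F_s]$, and subtracting $\E[M_s \mid \F_s]$ gives $\E[M_t - M_s \mid \F_s] = 0$. For the drift term, conditional Fubini lets me move $\E[\,\cdot \mid \F_s]$ inside the $u$-integral; then for each $u \ge s$ I have $\F_s \subseteq \F_u$, so the tower property and the definition $\Ahat_u = \E[A_u \mid \F_u]$ give $\E[\Ahat_u \mid \F_s] = \E[A_u \mid \F_s]$, whence $\E[A_u - \Ahat_u \mid \F_s] = 0$ and the whole integral vanishes in conditional expectation. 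I expect the main obstacle to be exactly this step, and specifically the subtle point that $M$ is a martingale only for the large filtration $\G$, not the observation filtration $\F$: the argument works precisely because $M_s$ need not be $\F_s$-measurable, so one must subtract $\E[M_s \mid \F_s]$ rather than $M_s$ itself, and this same coarsening is what forces the $A - \Ahat$ correction to appear in the drift. The remaining details — choosing a predictable version of $\Ahat$ and justifying the conditional Fubini interchange via the established $\HT$ integrability — are routine.
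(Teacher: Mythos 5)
Your proposal is correct and follows essentially the same route as the paper's proof: define the innovations process as the remainder $\Mhat_t = M_t + \int_0^t (A_u - \Ahat_u)\,du$, establish $\Ahat\in\HT$ via conditional Jensen, Fubini/Tonelli and the tower property, and verify the $\F$-martingale property of $\Mhat$ by combining the tower property (using $\F_s\subseteq\G_s$) for the $M$-increment with conditional Fubini and the filter identity $\E[A_u - \Ahat_u \mid \F_u]=0$ for the drift correction. The only cosmetic difference is that the paper routes the construction through the intermediate process $F_t = S_t^{\nubar} - \int_0^t \lambda\,\nubar_u\,du$, which changes nothing of substance.
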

\begin{proof}
	The proof is found in \ref{sec:proof-thm-predictable-representation}
\end{proof}

Since $A$ and $M$ are independent of the agent's initial inventories $\{\mfQ_0^j\}_{j=1}^\infty$, the projection of the dynamics of $S^{\nubar}$ onto the filtration $\F$ will be identical to the projection onto $\F^j$. Moreover, since $\F_t \subseteq \F_t^j$, the processes $\Ahat$ and $\Mhat$ are adapted to the each individual agent's filtration. Lemma~\ref{thm:predictable-representation} provides us with a representation of $S^{\nubar}$ in terms of $\F$-adapted processes rather than the $\G$-adapted version in \ref{eq:S-price-dynamics}. Plugging in these $\F$-dynamics into the expression for $\Hbar_j$ in~\eqref{eq:alternative-objective-def}, and noticing that the martingale terms vanish under the expectation, we obtain an objective functional entirely in terms of $\F^j$-adapted processes,
\begin{equation} \label{eq:F-adapted-objective}
	\Hbar_j(\nuj)
	= \E\Bigg[
	\int_0^T
	\left\{
	\qj{\nuj}_t
	\left(
	\Ahat_t + \lambda \, \nubar_t
	\right)
	-
	\begin{pmatrix}
		\nuj_t \\ \qj{\nuj}_t
	\end{pmatrix}^\T
	\begin{pmatrix}
	a_k & \Psi_k \\ \Psi_k & \phi_k
	\end{pmatrix}
	\begin{pmatrix}
		\nuj_t \\ \qj{\nuj}_t
	\end{pmatrix}
	\right\}
	\, dt
		\Bigg]
	\;.
\end{equation}

In this representation, the partial information problem is cast into a full information problem, and we can now apply convex analysis tools to this objective functional. The essence of the steps used to obtain the solution resembles very closely those used in elementary calculus to find critical points of functions. First, show that the objective function is `differentiable' and strictly concave. Since the argument of $\Hbar_j$ is a stochastic process, we refer to `differentiability' in the sense of the G\^ateaux directional derivative\footnote{For more information on the G\^ateaux derivative and its role in convex optimization, see \cite[Section 5]{ekeland1999convex}.}. Next, identify where the G\^ateaux derivative vanishes to characterize the objective functional's critical points. Finally, knowing that the objective function is strictly concave, guarantees that the critical point is unique and that it is a maximum. The lemmas that follow show that $\Hbar^j$ is both concave and everywhere G\^ateaux differentiable in $\A^j$.

\begin{lemma} \label{lemma:obj-is-concave}
	The functional $\Hbar_j$ defined in equation~\eqref{eq:sub-pop-objective-rep} is  strictly concave in $\A^j$.
\end{lemma}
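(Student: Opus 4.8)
The plan is to verify strict concavity directly from the definition, exploiting the fact that the inventory map $\nu \mapsto \qj{\nu}$ is affine. Fix an agent $j$ in sub-population $k$, take two distinct admissible controls $\nu,\nu' \in \A^j$ and a weight $\alpha \in (0,1)$, and set $\nu^\alpha = \alpha\nu + (1-\alpha)\nu'$. Since $\qj{\nu}_t = \mfQ^j_0 + \int_0^t \nu_u\,du$ is affine in $\nu$, the vector $(\nu_t,\qj{\nu}_t)^\T$ is affine in $\nu$, so using the $\F$-adapted representation~\eqref{eq:F-adapted-objective} the objective splits into a part that is affine in $\nu$ (the term $\qj{\nu}_t(\Ahat_t + \lambda\,\nubar_t)$, noting $\nubar$ does not depend on the individual $\nu$) and a part that is the negative of a quadratic form composed with this affine map. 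The affine part contributes nothing to the concavity defect $\Hbar_j(\nu^\alpha) - \alpha\Hbar_j(\nu) - (1-\alpha)\Hbar_j(\nu')$, so everything reduces to the quadratic piece.

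Next I would apply the elementary identity that, for a symmetric form $Q(\xi)=\xi^\T B\,\xi$, one has $Q(\alpha x + (1-\alpha)y) = \alpha Q(x) + (1-\alpha)Q(y) - \alpha(1-\alpha)\,Q(x-y)$. Applying this pointwise in $(t,\omega)$ with $B=\left(\begin{smallmatrix} a_k & \Psi_k \\ \Psi_k & \phi_k\end{smallmatrix}\right)$ and taking $\E\int_0^T$, the concavity defect equals $\alpha(1-\alpha)\,\E\!\left[\int_0^T (\delta_t,\gamma_t)\,B\,(\delta_t,\gamma_t)^\T\,dt\right]$, where $\delta = \nu - \nu'$ and $\gamma_t = \qj{\nu}_t - \qj{\nu'}_t = \int_0^t \delta_u\,du$. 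Because $\nu$ and $\nu'$ share the same initial inventory $\mfQ^j_0$, the difference satisfies $\gamma_0 = 0$ and $\tfrac{d}{dt}\gamma_t = \delta_t$. Strict concavity is then equivalent to showing this expectation is strictly positive whenever $\delta \not\equiv 0$ in $\HT$.

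The step I expect to be the main obstacle is that the matrix $B$ is in general not positive semidefinite: its determinant $a_k\phi_k - \Psi_k^2$ may be negative (indeed $\phi_k$ is allowed to vanish), so the integrand $a_k\delta_t^2 + 2\Psi_k\delta_t\gamma_t + \phi_k\gamma_t^2$ need not be pointwise nonnegative and a naive appeal to definiteness of $B$ fails. The resolution is to exploit the functional constraint $\gamma_t = \int_0^t\delta_u\,du$ rather than treating $(\delta,\gamma)$ as free. Integrating the cross term by parts gives $2\Psi_k\int_0^T \delta_t\gamma_t\,dt = \Psi_k\int_0^T \tfrac{d}{dt}(\gamma_t^2)\,dt = \Psi_k\,\gamma_T^2$, where the boundary term at $0$ vanishes because $\gamma_0 = 0$. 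Hence, for a.e. $\omega$, $\int_0^T (\delta_t,\gamma_t)\,B\,(\delta_t,\gamma_t)^\T\,dt = a_k\int_0^T \delta_t^2\,dt + \Psi_k\,\gamma_T^2 + \phi_k\int_0^T \gamma_t^2\,dt$, and since $a_k>0$, $\Psi_k>0$, $\phi_k\geq 0$, every term is nonnegative and the first is strictly positive whenever $\delta\not\equiv 0$. Taking expectations yields strict positivity of the defect, which is exactly strict concavity.

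Finally I would tidy up the integrability needed to justify these manipulations: since $\nu,\nu'\in\HT$ we have $\delta\in\HT$, and by Cauchy--Schwarz $\E[\gamma_t^2] \leq T\,\E[\int_0^T \delta_u^2\,du] < \infty$, so all integrals above are finite, the pathwise integration by parts is valid because $t\mapsto\gamma_t$ is absolutely continuous, and the passage to expectations is justified by Fubini--Tonelli.
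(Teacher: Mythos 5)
Your proposal is correct and follows essentially the same route as the paper's proof: both cancel the affine terms using linearity of the inventory map, reduce the concavity defect to $\alpha(1-\alpha)\,\E\bigl[\int_0^T (\delta_t,\gamma_t)\,\bm\Gamma_k\,(\delta_t,\gamma_t)^\T dt\bigr]$ via the quadratic identity (the paper phrases it as completing the square), and then handle the indefinite cross term by the same integration by parts, $2\Psi_k\int_0^T\delta_t\gamma_t\,dt = \Psi_k\gamma_T^2$, before concluding from $a_k>0$. Your explicit remark that $\bm\Gamma_k$ need not be positive semidefinite, and the integrability tidying at the end, are nice additions but do not change the argument.
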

\begin{proof}
	The proof is found in \ref{sec:appendix-proof-obj-is-concave}.
\end{proof}

\begin{lemma} \label{lemma:GateauxDeriv}
The objective function $\Hbar_j$ is everywhere G\^ateaux differentiable in $\A^j$. Its G\^ateaux derivative at a point $\nu\in\A^j$ in a direction $\omega\in\A^j$ can be expressed as
\begin{equation} \label{eq:j-gateax-deriv}
\begin{split}
	\left\langle \D\Hbar_j(\nu) , \omega \right\rangle
	=
	\E\left[
	\int_0^T
	\omega_t
	\;
	\E\Bigg[ \right.
&
	-2 \,a_k \,\nu_t - 2 \,\Psi_k \,\qj{\nu}_T \\
& \left.	
+ \left.\int_t^T \left\{ \Ahat_u + \lambda \,\nubar_u - 2\,\phi_k \,\qj{\nu}_u  \right\} \, du \; \right\lvert \F_t^j  \Bigg]
	 \, dt \,
	\right].
\end{split}
\end{equation}
\end{lemma}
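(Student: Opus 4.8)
The plan is to compute the G\^ateaux derivative directly from its definition, $\left\langle \D\Hbar_j(\nu),\omega\right\rangle = \frac{d}{d\epsilon}\Hbar_j(\nu+\epsilon\omega)\big|_{\epsilon=0}$, exploiting the fact that the inventory depends \emph{linearly} on the control. First I would use \eqref{eq:agent-inventory-dynamics} to write $\qj{\nu+\epsilon\omega}_t = \qj{\nu}_t + \epsilon\,q^\omega_t$, where $q^\omega_t := \int_0^t \omega_u\,du$ is the (zero-initial-condition) inventory generated by $\omega$. Substituting into the $\F$-adapted representation \eqref{eq:F-adapted-objective} and expanding, the map $g(\epsilon):=\Hbar_j(\nu+\epsilon\omega)$ becomes a genuine quadratic polynomial in $\epsilon$, $g(\epsilon)=\Hbar_j(\nu)+c_1\epsilon+c_2\epsilon^2$. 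The coefficients are finite: since $\nu,\omega\in\HT$, a Cauchy--Schwarz estimate together with the bounded-variance assumption on $\mfQ_0^j$ gives $\qj{\nu},q^\omega\in\HT$ with $\sup_t\E[(\qj{\nu}_t)^2]<\infty$, and since $\Ahat\in\HT$ (Lemma~\ref{thm:predictable-representation}) and $\nubar\in\HT$, every product under the expectation is integrable. Consequently $g$ is differentiable with $g'(0)=c_1$, which establishes everywhere G\^ateaux differentiability and, crucially, sidesteps any interchange-of-limits argument.

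It then remains to identify $c_1$ with the stated expression. Collecting the terms linear in $\epsilon$ yields
\[
c_1 = \E\left[\int_0^T \left\{ -2a_k\,\nu_t\,\omega_t + q^\omega_t\left(\Ahat_t + \lambda\,\nubar_t\right) - 2\phi_k\,\qj{\nu}_t\,q^\omega_t - 2\Psi_k\left(\omega_t\,\qj{\nu}_t + \nu_t\,q^\omega_t\right) \right\} dt\right].
\]
The key manipulation is to rewrite every term containing $q^\omega_t=\int_0^t\omega_u\,du$ so that $\omega_t$ appears as a bare factor. For each such term I would apply Fubini on the simplex $\{0\le u\le t\le T\}$, i.e. $\int_0^T q^\omega_t\,g_t\,dt = \int_0^T \omega_t\big(\int_t^T g_u\,du\big)\,dt$ for an integrable process $g$, the swap being justified by the integrability above. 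This converts the $(\Ahat+\lambda\nubar)$ and $-2\phi_k\qj{\nu}$ contributions into the inner integral $\int_t^T\{\Ahat_u + \lambda\,\nubar_u - 2\phi_k\,\qj{\nu}_u\}\,du$ appearing in \eqref{eq:j-gateax-deriv}.

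The step I expect to be the main obstacle is assembling the two $\Psi_k$ contributions into the single terminal term $-2\Psi_k\qj{\nu}_T$. After the Fubini rearrangement, the cross term $-2\Psi_k\,\nu_t\,q^\omega_t$ produces $\omega_t\int_t^T(-2\Psi_k\,\nu_u)\,du$, while $-2\Psi_k\,\omega_t\,\qj{\nu}_t$ remains; their sum is $-2\Psi_k\,\omega_t\big(\qj{\nu}_t + \int_t^T\nu_u\,du\big)$, and I would invoke $\qj{\nu}_t + \int_t^T\nu_u\,du = \qj{\nu}_T$, again from \eqref{eq:agent-inventory-dynamics}, to collapse it to $-2\Psi_k\,\qj{\nu}_T\,\omega_t$. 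Finally, since $\omega\in\A^j$ is $\F^j$-predictable, $\omega_t$ is $\F_t^j$-measurable, so applying the tower property pointwise in $t$ (and one last Fubini to return the $t$-integral inside the expectation) lets me wrap the conditioning $\E[\,\cdot\mid\F_t^j]$ around the bracketed process without altering the value, giving exactly \eqref{eq:j-gateax-deriv}.
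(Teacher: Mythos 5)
Your proposal is correct and follows essentially the same route as the paper's proof: expand $\Hbar_j(\nu+\epsilon\omega)$ using linearity of the inventory in the control so the objective is quadratic in $\epsilon$, identify the linear coefficient as the derivative, use Fubini on the simplex to turn every $q^\omega_t$ term into a bare $\omega_t$ factor with an inner $\int_t^T$ integral, collapse the two $\Psi_k$ contributions via $\qj{\nu}_t+\int_t^T\nu_u\,du=\qj{\nu}_T$, and finish with Fubini plus the tower property using the $\F^j$-predictability of $\omega$. The only cosmetic difference is that you observe differentiability directly from the polynomial structure of $\epsilon\mapsto\Hbar_j(\nu+\epsilon\omega)$, whereas the paper computes the difference quotient and takes $\epsilon\searrow 0$, which amounts to the same thing.
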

\begin{proof}
	The proof is found in \ref{sec:appendix-proof-lemma-differentiable}.
\end{proof}

Since the objective functional $\Hbar_j$ is concave and G\^ateaux differentiable, an element $\nu\in\A^j$ that makes the G\^ateax derivative vanish in an arbitrary direction $\omega\in\A^j$ is guaranteed to be a maximizer. Moreover, since the concavity of $\Hbar_j$ is strict, the maximizer is unique. The explicit form of the derivative in expression~\eqref{eq:j-gateax-deriv} allows us to find a representation of the agent's optimal strategy. The following proposition uses these last two results to represent the trader's optimal strategy as the solution to an FBSDE.

\begin{proposition} \label{thm:optim-bsde-prop}
The collection of controls $\left\{\nujst\right\}_{j\in\mfN}$ forms a Nash equlibrium if and only if for each agent-$j$ in sub-population $k$, $\nujst\in\HT$ and $\nujst$ is the unique strong solution to the FBSDE
\begin{equation} \label{eq:thm-sol-statement-2}
	\begin{cases}
		&-d(2\,a_k\,\nujst_t)
		= \left( \Ahat_t + \lambda \,\nubar_t - 2\,\phi_k \,\qj{\nujst}_t \right) \,dt - d\mathcal{M}_t^j\,,
		\\
		&\hspace{1em}
		2\,a_k \, \nujst_T = -2\,\Psi_k \,\qj{\nujst}_T\,,
	\end{cases}
\end{equation}
where $\mathcal{M}^j\in\HT$ is an $\F^j$-adapted martingale and $\nubar_t := \displaystyle\lim_{N\rightarrow\infty} \tfrac{1}{N} \sum_{j=1}^\infty \nujst_t$.
\end{proposition}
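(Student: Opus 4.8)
The plan is to exploit the two preceding lemmas to reduce the equilibrium characterization to a first-order condition, and then to recast that condition as the FBSDE. Since $\A^j$ is a convex subset of $\HT$ and, by Lemma~\ref{lemma:obj-is-concave} and Lemma~\ref{lemma:GateauxDeriv}, $\Hbar_j$ is strictly concave and everywhere G\^ateaux differentiable on $\A^j$, I would first invoke the standard convex-analysis fact that a point $\nujst\in\A^j$ is the (necessarily unique) global maximizer of $\Hbar_j$ if and only if $\langle\D\Hbar_j(\nujst),\omega\rangle = 0$ for every direction $\omega\in\A^j$. Because the collection $\{\nujst\}$ being a Nash equilibrium means precisely that each $\nujst$ maximizes $\Hbar_j$ against the mean field $\nubar$ that the collection itself induces through the consistency relation, this turns the equilibrium property, simultaneously for all $j$, into the statement that the G\^ateaux derivative vanishes in every admissible direction.

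Next I would apply the fundamental lemma of the calculus of variations to the explicit derivative~\eqref{eq:j-gateax-deriv}. Writing $\xi_t$ for the inner conditional expectation appearing there, the vanishing of $\E[\int_0^T \omega_t\,\xi_t\,dt]$ for every $\omega\in\A^j$ forces $\xi_t = 0$, $\P\times\mu$-almost everywhere. The point requiring care is that admissible directions must be $\F^j$-predictable: since $\xi$ is itself $\F^j$-adapted (being a conditional expectation given $\F_t^j$) and square-integrable, one takes $\omega$ to be a predictable version of $\xi$ (or argues by density of predictable processes in $\A^j$), which yields $\E[\int_0^T \xi_t^2\,dt] = 0$. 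This produces the first-order condition
\[
	2\,a_k\,\nujst_t = \E\!\left[\,-2\,\Psi_k\,\qj{\nujst}_T + \int_t^T\!\big(\Ahat_u + \lambda\,\nubar_u - 2\,\phi_k\,\qj{\nujst}_u\big)\,du \;\Big|\; \F_t^j\right].
\]

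To turn this into the FBSDE, I would define the process $\mathcal{M}_t^j := \E[\,-2\,\Psi_k\,\qj{\nujst}_T + \int_0^T(\Ahat_u + \lambda\,\nubar_u - 2\,\phi_k\,\qj{\nujst}_u)\,du \mid \F_t^j]$, which is an $\F^j$-adapted martingale as the conditional expectation of a fixed integrable terminal random variable; its membership in $\HT$ follows from $\Ahat\in\HT$, $\nubar\in\HT$, and the square-integrability of $\qj{\nujst}$ (which holds since $\nujst\in\HT$ and $\mfQ_0^j$ has finite variance). The first-order condition then reads $2\,a_k\,\nujst_t = \mathcal{M}^j_t - \int_0^t(\Ahat_u + \lambda\,\nubar_u - 2\,\phi_k\,\qj{\nujst}_u)\,du$, and taking differentials gives exactly the backward dynamics in~\eqref{eq:thm-sol-statement-2}; evaluating at $t=T$, where $\qj{\nujst}_T$ is $\F_T^j$-measurable, delivers the terminal condition $2\,a_k\,\nujst_T = -2\,\Psi_k\,\qj{\nujst}_T$. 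Each implication in this chain is reversible, so conversely any $\F^j$-adapted strong solution of~\eqref{eq:thm-sol-statement-2} satisfies the first-order condition and is therefore, by strict concavity, the unique maximizer of $\Hbar_j$; assembling these over all $j$ together with the consistency relation defining $\nubar$ yields the `if' direction as well as the uniqueness claim for the FBSDE solution.

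I expect the main obstacle to be the variational step: rigorously passing from ``the derivative vanishes in all admissible directions'' to the pointwise first-order condition while respecting the $\F^j$-predictability constraint on the directions $\omega$. One must confirm that the candidate direction built from the adapted process $\xi$ genuinely lies in $\A^j$ (predictable and in $\HT$), so that the conclusion $\xi \equiv 0$ is obtained over the constrained class of directions rather than merely over a larger unconstrained one. A secondary, more routine point is verifying the square-integrability of $\mathcal{M}^j$ and the legitimacy of the differentiation of its martingale-plus-finite-variation decomposition, both of which follow from the standing integrability assumptions on $\Ahat$, $\Mhat$, and $\nubar$.
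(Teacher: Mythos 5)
Your proposal is correct and follows essentially the same route as the paper: it reduces the equilibrium property to the vanishing of the G\^ateaux derivative via strict concavity (Lemmas~\ref{lemma:obj-is-concave} and~\ref{lemma:GateauxDeriv}), derives the pointwise first-order condition by testing against the direction $\omega=\xi$ built from the conditional expectation itself (the paper phrases this as a contradiction argument, but it is the same verification that $\xi\in\A^j$), and then constructs the martingale $\mathcal{M}^j$ to recast that condition as the FBSDE, with the converse obtained by reversing the chain. The integrability and adaptedness checks you flag as obstacles are precisely the ones the paper carries out, so no gap remains.
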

\begin{proof}
	The proof is found in \ref{sec:proof-thm-optim-bsde-prop}.
\end{proof}

Equation~\eqref{eq:thm-sol-statement-2} is an FBSDE since it has a forward component coming from the processes $\qj{\nujst}$ and $\Ahat$, as well as a backward component $\nujst$, which must be solved simultaneously. A solution to equation~\eqref{eq:thm-sol-statement-2} is the unique optimal control for agent-$j$ and maximizes their objective functional.
Note that all agent's optimal strategies are coupled through these FBSDEs via the mean-field process $\nubar$ which appears in the driver of equation~\eqref{eq:thm-sol-statement-2}, and they must all satisfy the consistency condition. As well, the parameters in the performance criteria depend on the specific sub-population to which the agent belongs.

\subsection{Solving the Mean Field Equations}

In the MFG limit, the infinite dimensional stochastic game is reduced to solving the FBSDE~\eqref{eq:thm-sol-statement-2} system. The FBSDE for agent-$j$'s optimal control shows there is no direct dependence on any other individuals' choice of strategy. Instead, the effect of all other agents appear through the mean field process $\nubar$. Hence, rather than having the explicit dependence of one $\nu^{j,\ast}$ on ${\nu}^{-j,\ast}$, we have an implicit dependence on ${\nu}^{-j,\ast}$ through the mean field process $\nubar$. Furthermore, since the FBSDE~\eqref{eq:thm-sol-statement-2} depends on a particular agent's sub-population, it is necessary to separate the problem across sub-populations and solve for each of their mean-fields and optimal controls.

In the remainder of the section we solve the FBSDE~\eqref{eq:thm-sol-statement-2}. The main obstacle in solving the FBSDE is that the mean-field process $\nubar$ depends on the solution of each $\nuj$ and vice versa, through the driver of the FBSDE. To overcome this obstacle, we first decompose the mean-field process as the average of the sub-population mean fields, i.e., write $\nubar_t=\sum_{k\in\mfK} p_k \, \nubark_t$. Next, we formulate an ansatz for each $\nubark_t$, which we then use to find a corresponding ansatz of the solution to $\nuj$. We then conclude by demonstrating that the ansatz solutions for both $\{\nuj\}_{j\in\mathbb{N}}$ and $\{\nubark\}_{k\in\mfK}$ indeed form the unique solution to the FBSDE problem~\eqref{eq:thm-sol-statement-2}.

\subsubsection{An Ansatz for the Mean-Field Processes} \label{sec:Mean-Field-Ansatz}

Our first task is to propose an appropriate form of each sub-population mean-field process $\nubark$. Our proposed ansatz for each sub-population mean-field  is denoted by the process $\nutk = (\nutk_t)_{t\in[0,T]}$, $\forall\;k\in\mfK$, and solves the FBSDE
\begin{equation} \label{eq:mean-field-ansatz-FBSDE}
\left\{
	\begin{array}{rl}
			-d(2 \,a_k\,\nutk_t)
			&\!\!= \left( \Ahat_t + \lambda \sum_{k^{\prime}\in\mfK} p_{k^\prime} \, \nut^{k^\prime}_t - 2\,\phi_k \,\qtk{\nutk}_t \right) dt - d\mMbar_t^k\,,
			\\
			2 \,a_k \, \nutk_T &\!\!= -2\,\Psi_k \,\qtk{\nutk}_T\,,
	\end{array}
\right.
\end{equation}
where the (controlled) forward processes $\qt^{k,\nutk}=(\qt_t^{k,\nutk})_{t\in[0,T]}$ are given by $\qtk{\nutk}_t = \invmean_0^k + \int_0^t \nutk_u \, du$, $\invmean_0^k = \E[\mfQ_0^j]$, and where $\mMbar^k=(\mMbar^k_t)_{t\in[0,T]}$ is a suitable $\F$-adapted martingale satisfying $\mMbar^k\in\HT$. It is worth pointing out that the martingales appearing here are all $\F$-adapted, and not $\F^j$-adapted as in \eqref{eq:thm-sol-statement-2}. We will nonetheless see that the ansatz does indeed provide a solution to our original problem.

The FBSDEs in \eqref{eq:mean-field-ansatz-FBSDE} can be viewed as resulting from taking the average over all $j\in\mfK_k$ of the FBSDEs~\eqref{eq:thm-sol-statement-2}, and explicitly splitting the overall mean-field in terms of the sub-population mean-fields.
Since the system is an average over an infinite number of objects, there is no guarantee that the solution to this `average' FBSDE will exactly match the sub-population mean-field process $\nubark$. We show in Theorem \ref{thm:mean-field-is-true} that the solution to \eqref{eq:mean-field-ansatz-FBSDE} does indeed provide us with the mean field process $\nubark_t = \lim_{N\nearrow\infty} \frac{1}{N_k^{\N}}\sum_{j\in\mfK_k^{\N}}\nu_t^j$, where $\nuj$ is the solution to the FBSDE~\eqref{eq:thm-sol-statement-2}.

We now solve the collection FBSDEs~\eqref{eq:mean-field-ansatz-FBSDE} using an approach similar to (but not the same as) the `four-step method' of \cite{ma1994solving}. The linear structure of the coupling of the FBSDEs plays a key role, and we re-write the collection as a single vector-valued equation. First, let $\bnut_t = \left( \nutk_t \right)_{k\in\mfK}$ be the column vector of the ansatzes for each sub-population mean-field. Stacking each of the mean-field FBSDE~\eqref{eq:mean-field-ansatz-FBSDE} results in the vector valued equation,
\begin{equation} \label{eq:stacked-FBSDE}
	\begin{cases}
			&-d(2 \, \ba \,\bnut_t)
			= \left( \Ahat_t \, \bm{1}^{(K\times 1)} + \bLambda \, \bnut_t - 2 \, \bphi \, \bqt_t \right) \,dt - d\bmMbar_t\,,
			\\
			&\hspace{1em}
			2 \, \ba \,\bnut_T = - 2 \, \bPsi \, \bqt_T\,,
		\end{cases}
\end{equation}
where $\ba,\bphi,\bPsi$ and $\bLambda$ are all real-valued $K \times K$ matrices defined as
\begin{align*}
	\ba &= \text{diag}\left( \{a_k\}_{k\in\mfK}\right) \,, \hspace{3em}
	\bphi = \text{diag}\left( \{\phi_k\}_{k\in\mfK}\right)\,,  \\ \vspace{5cm}
	\bPsi &= \text{diag}\left( \{\Psi_k\}_{k\in\mfK}\right) \,, \hspace{3em}
	\bLambda =
	\begin{pmatrix}
		\lambda \, p_1 & \dots & \lambda \, p_K \\
		\vdots &  & \vdots \\
		\lambda \, p_1 & \dots & \, \lambda p_K
	\end{pmatrix}
	\;,
\end{align*}
$\bqt_t = \boldsymbol{\invmean}_0 + \int_0^t \bnut_u \, du$, and  $\bmMbar_t = ({\mMbar}^k_t)_{k=1}^K$ is a column vector of $\F$-adapted martingales with $\mMbar_t^k \in \HT$, $\forall k\in\mfK$.

Due to the linear structure of the vector-value FBSDE~\eqref{eq:stacked-FBSDE}, we make the further ansatz that there are two $\F$-adapted processes $\g_1 = ( \bg_{1,t} )_{t\in[0,T]}$ and $\g_2 = ( \bg_{2,t} )_{t\in[0,T]}$, where $\bg_{1,t}\in\R^{K}$ and $\bg_{2,t}\in\R^{K\times K}$, such that the solution to \eqref{eq:stacked-FBSDE} can be expressed as
\begin{equation}
	2 \,\ba \,\bnut_t = \bg_{1,t} + \bg_{2,t} \,\bqt_t
	\;.
\end{equation}
Applying It\^o's lemma to the above expression, inserting the result back into \eqref{eq:stacked-FBSDE}, and grouping terms by $\bqt$, yields
\begin{equation} \label{eq:ansatz-dynamics}
	\begin{aligned}
	0 =&\phantom{+~}
	\Big\{
	d\bg_{1,t} + \left(
	\bm{1}^{(K\times 1)}\Ahat_t
	+ \left( \bLambda + \bg_{2,t} \right) \, \left( 2\,\ba  \right)^{-1} \, \bm{g}_{1,t}
	\right) dt
	- d\bmMbar_t
	\Big\}
	\\ &+
	\Big\{
	d\bg_{2,t} + \left( \left( \bLambda + \bg_{2,t} \right) \left( 2 \ba \right)^{-1} \bm{g}_{2,t}
	- 2\bm{\phi} \right) \, dt
	\Big\} \, \bqt_t
	\;.
	\end{aligned}
\end{equation}
Equation~\eqref{eq:ansatz-dynamics} must hold $\P\times\mu$ almost everywhere for all $\bqt_t$, hence, the terms within each curly brace vanish independently. Moreover, we can apply the same argument to the boundary condition of $\bnut$ to yield two coupled BSDEs for $\bg_{1}$ and $\bg_{2}$ which no longer depend on the forward process $\bqt_t$. The first of these is a linear BSDE for $\bg_{1}$,
\begin{equation}
	\label{eq:linear-BSDE}
	\begin{cases}
		-d\hspace{-1em}&\bm{g}_{1,t} =
	\left(
	\bm{1}^{(K\times 1)}\Ahat_t
	+ \left( \bLambda + \bg_{2,t} \right) \, \left( 2\,\ba  \right)^{-1} \, \bm{g}_{1,t}
	\right) dt
		- d\bmMbar_t\,,
		\\
		&\bm{g}_{1,T} = \bm{0}^{(K \times 1)}\,,
	\end{cases}
\end{equation}
and a matrix-valued ODE for the value for $\bg_2$,
\begin{equation}
	\label{eq:riccati-ODE}
	\hspace{-6.5em}
	\begin{cases}
	-d\hspace{-1em}&\bm{g}_{2,t} = \left(
	 \left( \bLambda + \bg_{2,t} \right) \left( 2 \ba \right)^{-1} \bm{g}_{2,t}
	- 2\bm{\phi} \right) dt\,,
	\\
	&\bm{g}_{2,T} = -2\bPsi\,.
\end{cases}
\end{equation}

There is a one-way dependence structure in these equations. Equation~\eqref{eq:linear-BSDE} is a linear BSDE that depends on the solution for $\bg_2$, while equation~\eqref{eq:riccati-ODE} is a matrix-valued non-symmetric Riccati equation that is independent of $\bg_1$. Let us also note that equation~\eqref{eq:riccati-ODE} is an ordinary differential equation which is deterministic since it has no Martingale term and has a deterministic boundary condition. Such vector and matrix-valued BSDEs are reminiscent of those appearing in \cite{bouchard2017equilibrium}.
The solutions to \eqref{eq:linear-BSDE} and \eqref{eq:riccati-ODE} are presented in the proposition that follows.
\begin{proposition} \label{prop:solution-g1-g2}
	There exists a unique solution $\bg_{2,t}$ to the matrix valued ODE~\eqref{eq:riccati-ODE} that is bounded over the interval $[0,T]$.

Moreover, let $\bm{Y}_t:[0,T]\rightarrow\R^{2K\times K}$ be defined as
	\begin{equation}
		\bm{Y}_t =
		e^{(T-t)\bm{B}}
		\left( \bm{I}^{(K \times K)} ,\; -2\,\bPsi \right)^\T
		\;,
	\end{equation}
	where $\bm{B} \in \R^{2K \times 2K}$ is the block matrix
	\begin{equation}
		\bm{B} =
		\begin{pmatrix}
			\bm{0}^{(K\times K)} & -(2\ba)^{-1} \\
			-2\bphi & \bLambda (2\ba)^{-1}
		\end{pmatrix}
		\;.
	\end{equation}
If we define the matrix partition $\bm{Y}_t = \left( \bm{Y}_{1,t} , \bm{Y}_{2,t} \right)^\T$, where $\bm{Y}_{1,t},\bm{Y}_{2,t} \in \R^{K\times K}$, then $\bm{g}_{2,t}$ can be expressed as
	\begin{equation}\label{eqn:proposition-solution-g2}
		\bg_{2,t} = \bm{Y}_{2,t} \, \bm{Y}_{1,t}^{-1}
		\;.
	\end{equation}
	\\
Furthermore, the BSDE~\eqref{eq:linear-BSDE} admits a closed form solution,
	\begin{equation} \label{eq:proposition-solution-g1}
		\bg_{1,t} = \int_t^T
		\bm{:} e^{\int_t^u \, \left( \bLambda + \bg_{2,s} \right) \left( 2 \ba \right)^{-1} \, ds }\bm{:} \,
		\bm{1}^{(K\times 1)} \, \E \left[ A_u \lvert \F_t \right]
		\, du
		\;,
	\end{equation}
	where $\bm{:} e^{\int_t^u \cdot \, ds }\bm{:}$ represents the time-ordered exponential
	\footnote{We define the time-ordered exponential of a matrix-valued function $\bm{f}:[0,T]\rightarrow\R^{K\times K}$, $\bm{\eta}_{t,u} =\, {\bm{:} e^{\int_t^u \bm{f}_s \, ds }\bm{:}}$, with $t\leq u$, to be the unique solution to the matrix-valued ODE $d\, \bm{\eta}_{t,u} = \bm{\eta}_{t,u} \, \bm{f}_u \, du$ with the initial condition $\bm{\eta}_{t,u}=\bm{I}^{(K\times K)}$.}.
	Moreover, $\E[\int_0^T \bg_{1,t}^\T \, \bg_{1,t} \, dt] < \infty$.
\end{proposition}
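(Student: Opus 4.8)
The three assertions have a one-way dependence, so I would prove them in turn: the matrix Riccati ODE~\eqref{eq:riccati-ODE} first, since its solution $\bg_2$ enters the coefficients of the linear BSDE~\eqref{eq:linear-BSDE}, then $\bg_1$, and finally the integrability bound. For the Riccati ODE the plan is to confirm~\eqref{eqn:proposition-solution-g2} by the classical linearisation of a matrix Riccati equation. Differentiating $\bm{Y}_t = e^{(T-t)\bm{B}}(\bm{I}^{(K\times K)},-2\bPsi)^\T$ gives $-\tfrac{d}{dt}\bm{Y}_t = \bm{B}\,\bm{Y}_t$, which in the partition $\bm{Y}_t = (\bm{Y}_{1,t},\bm{Y}_{2,t})^\T$ becomes $\tfrac{d}{dt}\bm{Y}_{1,t} = (2\ba)^{-1}\bm{Y}_{2,t}$ and $\tfrac{d}{dt}\bm{Y}_{2,t} = 2\bphi\,\bm{Y}_{1,t} - \bLambda(2\ba)^{-1}\bm{Y}_{2,t}$. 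Setting $\bg_{2,t} = \bm{Y}_{2,t}\bm{Y}_{1,t}^{-1}$ and differentiating the product with $\tfrac{d}{dt}\bm{Y}_{1,t}^{-1} = -\bm{Y}_{1,t}^{-1}(\tfrac{d}{dt}\bm{Y}_{1,t})\bm{Y}_{1,t}^{-1}$, I would check term by term that the quadratic terms reassemble into $(\bLambda+\bg_{2,t})(2\ba)^{-1}\bg_{2,t}-2\bphi$, so that~\eqref{eq:riccati-ODE} holds, while $\bm{Y}_{1,T}=\bm{I}^{(K\times K)}$ and $\bm{Y}_{2,T}=-2\bPsi$ give $\bg_{2,T}=-2\bPsi$. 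This step is routine \emph{provided} $\bm{Y}_{1,t}$ is invertible on $[0,T]$.

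\textbf{Main obstacle.} The technical heart is precisely the invertibility of $\bm{Y}_{1,t}$ for every $t\in[0,T]$: since $\bg_{2,t}=\bm{Y}_{2,t}\bm{Y}_{1,t}^{-1}$, a singular $\bm{Y}_{1,t}$ is exactly a finite-time blow-up of the Riccati solution. Because the right-hand side of~\eqref{eq:riccati-ODE} is polynomial, Picard--Lindel\"of gives a unique maximal solution backward from $t=T$, where $\bm{Y}_{1,T}=\bm{I}^{(K\times K)}$ is invertible, so $\bm{Y}_{1,t}$ stays invertible on a maximal subinterval $(\tau,T]$; I would rule out $\tau\ge 0$ by an a priori bound on $\bg_2$ that is uniform as $t\downarrow\tau$. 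The bound should be driven by the sign structure of the data --- $a_k>0$, $\phi_k\ge0$, $\Psi_k>0$, so that $\bg_{2,T}=-2\bPsi$ and the source $-2\bphi$ are negative semidefinite --- which I expect confines $\bg_2$ to a fixed bounded region and forbids escape. I anticipate this to be the main difficulty, because the coupling matrix $\bLambda$ is not symmetric (it is the rank-one $\lambda\,\bm{1}^{(K\times1)}\bm{p}^\T$ with $\bm{p}=(p_1,\dots,p_K)^\T$), so $\bm{B}$ is not a Hamiltonian matrix and the standard symmetric-Riccati comparison theorems do not apply directly; the rank-one form of $\bLambda$ can, however, be exploited to make the estimate explicit. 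Once invertibility is established, boundedness of $\bg_2$ follows from continuity on the compact $[0,T]$, and uniqueness from the local Lipschitz property of the right-hand side.

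\textbf{The linear BSDE and the closed form for $\bg_1$.} With $\bg_2$ bounded, the coefficient $\bm{C}_t:=(\bLambda+\bg_{2,t})(2\ba)^{-1}$ of~\eqref{eq:linear-BSDE} is deterministic and bounded, the inhomogeneity $\bm{1}^{(K\times1)}\Ahat_t$ lies in $\HT$ by Lemma~\ref{thm:predictable-representation}, and the terminal datum is $\bm{0}$; standard linear BSDE theory then yields a unique solution pair $(\bg_1,\bmMbar)$ in $\HT$. To obtain the explicit form, I would use the integrating-factor method: let $\bm{\Phi}_t$ solve $\tfrac{d}{dt}\bm{\Phi}_t=\bm{\Phi}_t\bm{C}_t$, apply the product rule to $\bm{\Phi}_t\bg_{1,t}$ to cancel the $\bm{C}_t\bg_{1,t}$ term, integrate over $[t,T]$, insert $\bg_{1,T}=\bm{0}$, and take $\E[\,\cdot\mid\F_t]$ to annihilate the increment $\int_t^T\bm{\Phi}_u\,d\bmMbar_u$ of a martingale. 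Identifying $\bm{\Phi}_t^{-1}\bm{\Phi}_u$ with the time-ordered exponential $\bm{:}e^{\int_t^u(\bLambda+\bg_{2,s})(2\ba)^{-1}\,ds}\bm{:}$, and using the tower property $\E[\Ahat_u\mid\F_t]=\E[A_u\mid\F_t]$, valid since $\F_t\subseteq\F_u$ and $\Ahat_u=\E[A_u\mid\F_u]$, recovers exactly~\eqref{eq:proposition-solution-g1}.

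\textbf{Integrability.} Finally, the bound $\E[\int_0^T\bg_{1,t}^\T\bg_{1,t}\,dt]<\infty$ follows from~\eqref{eq:proposition-solution-g1}. The time-ordered exponential is continuous on the compact $[0,T]^2$ since $\bg_2$ is bounded, hence uniformly bounded by some constant $C$, giving $|\bg_{1,t}|\le C\int_t^T|\E[A_u\mid\F_t]|\,du$. Applying Cauchy--Schwarz in $u$, conditional Jensen $\E[A_u\mid\F_t]^2\le\E[A_u^2\mid\F_t]$, Tonelli, and the tower property reduces the estimate to a constant multiple of $\E[\int_0^T A_u^2\,du]$, which is finite because $A\in\HT$.
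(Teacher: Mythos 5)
Your treatment of the linear BSDE and of the integrability bound coincides with the paper's proof essentially step for step (integrating factor solving $d\bm{\Phi}_t=\bm{\Phi}_t(\bLambda+\bg_{2,t})(2\ba)^{-1}dt$, conditioning to annihilate the martingale increment, then boundedness of the ordered exponential plus Jensen and Fubini to reduce to $A\in\HT$), so those parts are fine. The genuine gap is in your part (a), and it sits exactly where you flagged it: the invertibility of $\bm{Y}_{1,t}$ on all of $[0,T]$ is never proved. Your plan is to confine $\bg_2$ to a bounded region ``by the sign structure of the data,'' but as you yourself observe, $\bLambda$ is non-symmetric, hence $\bg_{2,t}$ is non-symmetric, and the comparison/monotonicity arguments that give such confinement for symmetric Riccati equations (propagation of definiteness, ordering of quadratic forms) are simply unavailable; non-symmetric matrix Riccati equations do exhibit finite-time blow-up in general. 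As written, your argument yields the representation \eqref{eqn:proposition-solution-g2} only on the maximal interval $(\tau,T]$ on which $\bm{Y}_{1,t}$ is invertible, and gives no reason why $\tau<0$. Since the existence, boundedness, and uniqueness assertions of the proposition are precisely the statement that $\tau<0$, what is missing is the mathematical core of the claim, not a routine verification.

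The paper closes this gap by appealing to the theory of non-symmetric matrix Riccati equations: after reversing time, it verifies the hypotheses of Theorem 2.3 of \cite{freiling2000non} concretely, choosing $D=\bm{I}^{(K\times K)}$ and $C=5\bPsi$ so that $C+DW_0+W_0^\T D^\T=\bPsi>0$ (with $W_0=-2\bPsi$) and so that the associated block-triangular matrix $L$, whose diagonal blocks are $-2\bphi\leq 0$ and $-(2\ba)^{-1}<0$, satisfies the required negativity condition. That theorem yields existence of a solution on all of $[0,T]$; boundedness then follows from continuity on a compact interval, uniqueness from Theorem 3.1 of \cite{freiling2002survey}, and only afterwards is the linearized representation $\bg_{2,t}=\bm{Y}_{2,t}\,\bm{Y}_{1,t}^{-1}$ read off from the $P_tQ_t^{-1}$ representation of that reference --- in other words, in the paper the invertibility of $\bm{Y}_{1,t}$ is a consequence of global existence rather than a prerequisite for it, which is the reverse of the logical order you propose. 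To repair your proof you would need either to import such a global-existence result for non-symmetric Riccati equations, or to actually produce the a priori estimate you gesture at (possibly exploiting the rank-one form $\bLambda=\lambda\,\bm{1}^{(K\times 1)}\bm{p}^\T$); the proposal does neither.
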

\begin{proof}
	The proof is found in \ref{sec:proof-solution-g1-g2}.
\end{proof}

The processes $\bg_1$ and $\bg_2$ found in the above proposition provide us with a solution to the vector-valued FBSDE~\eqref{eq:stacked-FBSDE}. We summarize the results in the proposition that follows.

\begin{proposition} \label{thm:ansatz-prop-summary}
	Define the process $\bnut = (\bnut_t)_{t\in[0,T]}$, where $\bnut \in \R^{K}$ and
	\begin{equation}
		\bnut_t = ( 2\,\ba )^{-1} \left( \bg_{1,t} + \bg_{2,t} \; \bqt_t \right)
		\;.
	\end{equation}
	where $\bg_{1,t}$ and $\bg_{2,t}$ are the functions given in the statement of Proposition~\ref{prop:solution-g1-g2}, i.e., equations \eqref{eq:proposition-solution-g1} and \eqref{eqn:proposition-solution-g2}. Then $\bnut_t$ is the unique solution to the FBSDE~\eqref{eq:stacked-FBSDE}.

Furthermore, let $\nutk_t$ be the $k$th element of the vector $\bnut_t$, then for each $k\in\mfK$, $\nutk\in\bigcap_{j=1}^\infty \A^j$ and $\{\nutk_t\}_{k\in\mfK}$ form the solution to the collection of FBSDEs~\eqref{eq:mean-field-ansatz-FBSDE}.
\end{proposition}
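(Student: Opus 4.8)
The plan is to treat Proposition~\ref{thm:ansatz-prop-summary} as the converse of the ansatz derivation that produced~\eqref{eq:linear-BSDE} and~\eqref{eq:riccati-ODE}: having solved those two equations in Proposition~\ref{prop:solution-g1-g2}, I would show that reassembling $\bg_{1}$ and $\bg_{2}$ through the relation $2\,\ba\,\bnut_t = \bg_{1,t} + \bg_{2,t}\,\bqt_t$ yields a genuine, and the only, solution of the stacked FBSDE~\eqref{eq:stacked-FBSDE}. First I would make the definition of $\bnut$ well posed: since $\bqt$ itself depends on $\bnut$ through $\bqt_t = \boldsymbol{\invmean}_0 + \int_0^t \bnut_u\,du$, I substitute the ansatz to obtain the closed forward equation $d\bqt_t = (2\,\ba)^{-1}(\bg_{1,t} + \bg_{2,t}\,\bqt_t)\,dt$ with $\bqt_0 = \boldsymbol{\invmean}_0$. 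This is a linear random ODE whose coefficient $(2\ba)^{-1}\bg_{2,t}$ is bounded (Proposition~\ref{prop:solution-g1-g2}) and whose forcing $(2\ba)^{-1}\bg_{1,t}$ lies in $\HT$; variation of constants produces a unique, $\F$-adapted solution $\bqt$, and setting $\bnut_t := (2\ba)^{-1}(\bg_{1,t}+\bg_{2,t}\bqt_t)$ defines the candidate. A Gronwall estimate using the boundedness of $\bg_2$ and $\E[\int_0^T \bg_{1,t}^\T\bg_{1,t}\,dt]<\infty$ gives $\E[\sup_{t}|\bqt_t|^2]<\infty$ and hence $\bnut\in\HT$ componentwise.

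Next I would verify the candidate actually solves~\eqref{eq:stacked-FBSDE}. This is the algebraic reversal of the computation leading to~\eqref{eq:ansatz-dynamics}: applying It\^o to $2\ba\bnut_t = \bg_{1,t}+\bg_{2,t}\bqt_t$, substituting the dynamics of $\bg_1$ from~\eqref{eq:linear-BSDE}, of $\bg_2$ from~\eqref{eq:riccati-ODE}, and $d\bqt_t = \bnut_t\,dt$, the terms in $\bg_1$ and those multiplying $\bqt_t$ collapse, using $(2\ba)^{-1}(\bg_{1,t}+\bg_{2,t}\bqt_t)=\bnut_t$, into exactly the driver $\Ahat_t\,\bm{1}^{(K\times1)}+\bLambda\,\bnut_t-2\,\bphi\,\bqt_t$, with the martingale $\bmMbar$ of~\eqref{eq:linear-BSDE} serving as the martingale term of~\eqref{eq:stacked-FBSDE}. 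The terminal condition is immediate from $\bg_{1,T}=\bm{0}^{(K\times1)}$ and $\bg_{2,T}=-2\bPsi$, which give $2\ba\bnut_T=-2\bPsi\bqt_T$. No surprises are expected here; it is the calculation that produced the decoupled equations, read backwards.

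The substantive step is uniqueness, which I would establish by a decoupling-field (four-step) argument rather than a direct energy estimate, since the coupling matrix $\bLambda$ is not sign-definite and a naive a priori estimate leaves an indefinite cross term involving $\int_0^T \bqt_t^\T\bLambda\,\bnut_t\,dt$ that cannot be controlled. Given any solution $(\bnut,\bqt,\bmMbar)$ of~\eqref{eq:stacked-FBSDE}, I define the deviation $\bm{u}_t := 2\ba\bnut_t - \bg_{1,t} - \bg_{2,t}\bqt_t$ and compute its dynamics from the FBSDE together with~\eqref{eq:linear-BSDE} and~\eqref{eq:riccati-ODE}; the $\Ahat_t$ and $\bphi\bqt_t$ terms cancel and the remaining drift factors as $-(\bLambda+\bg_{2,t})(2\ba)^{-1}\bm{u}_t$, so that $\bm{u}$ solves a homogeneous linear BSDE with bounded driver coefficient and terminal value $\bm{u}_T=\bm{0}$. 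Uniqueness for such linear BSDEs, which relies crucially on the boundedness of $\bg_2$ from Proposition~\ref{prop:solution-g1-g2}, forces $\bm{u}\equiv\bm{0}$; that is, every solution must satisfy the ansatz relation, after which the forward ODE for $\bqt$ has a unique solution and pins down $\bqt$ and hence $\bnut$. I expect this reduction to a linear BSDE with null terminal data to be the crux of the argument.

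Finally, to obtain the componentwise claims I would note that $\bg_2$ is deterministic, $\bg_1$ is $\F$-adapted by its closed form~\eqref{eq:proposition-solution-g1}, and $\bqt$ is a continuous $\F$-adapted process, so each coordinate $\nutk$ is $\F$-predictable (passing to a predictable version if necessary) and, being in $\HT$, lies in $\A^j$ for every $j$ because $\F_t\subseteq\F^j_t$; hence $\nutk\in\bigcap_{j=1}^\infty\A^j$. Since~\eqref{eq:stacked-FBSDE} is by construction the stacking of~\eqref{eq:mean-field-ansatz-FBSDE}, reading the vector identity coordinate by coordinate shows that $\{\nutk\}_{k\in\mfK}$ solves the collection~\eqref{eq:mean-field-ansatz-FBSDE}, completing the proof.
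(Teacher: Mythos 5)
Your first, second, and fourth steps coincide with the paper's own proof: it likewise plugs the relation $2\,\ba\,\bnut_t = \bg_{1,t}+\bg_{2,t}\,\bqt_t$ into \eqref{eq:stacked-FBSDE}, observes that the resulting expression is exactly \eqref{eq:ansatz-dynamics} and vanishes because $\bg_1$ and $\bg_2$ solve \eqref{eq:linear-BSDE} and \eqref{eq:riccati-ODE}, checks the terminal condition from $\bg_{1,T}=\bm{0}^{(K\times 1)}$ and $\bg_{2,T}=-2\bPsi$, reads off \eqref{eq:mean-field-ansatz-FBSDE} row by row, and establishes $\nutk\in\bigcap_{j=1}^\infty\A^j$ by solving the closed forward equation for $\bqt$ explicitly via a time-ordered exponential and bounding it using the boundedness of $\bg_2$ and $\E\int_0^T\bg_{1,t}^\T\bg_{1,t}\,dt<\infty$ (where you invoke Gr\"onwall instead of the explicit variation-of-constants solution -- the same estimate in different packaging). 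The genuine difference is your third step: the paper's proof never argues uniqueness at all. It verifies that the ansatz solves \eqref{eq:stacked-FBSDE} and stops, even though the proposition asserts the solution is unique; the uniqueness statements it does prove (for the Riccati ODE \eqref{eq:riccati-ODE} in Proposition~\ref{prop:solution-g1-g2}) do not by themselves rule out a solution of the FBSDE that fails to be of the affine form. Your decoupling argument -- taking an arbitrary solution $(\bnut,\bqt,\bmMbar)$, setting $\bm{u}_t = 2\ba\bnut_t-\bg_{1,t}-\bg_{2,t}\bqt_t$, checking that the $\Ahat_t$ and $\bphi\,\bqt_t$ terms cancel so that $\bm{u}$ solves a homogeneous linear BSDE with bounded deterministic coefficient $(\bLambda+\bg_{2,t})(2\ba)^{-1}$ and null terminal data, hence vanishes (multiply by the time-ordered exponential $\bxi_t$ already used in the paper and take conditional expectations) -- closes this gap correctly, provided one uses the integrability built into the definition of a solution ($\bnut\in\HT$ componentwise and $\bmMbar$ an $\HT$-martingale) so that the relevant stochastic integrals are true martingales. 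Your observation that a direct energy estimate would leave an uncontrollable indefinite cross term involving $\bLambda$ is also a fair justification for preferring the decoupling route. In short, your proposal is a strict superset of the paper's proof: identical on existence and admissibility, and supplying a uniqueness argument the paper claims but omits.
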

\begin{proof}
	The proof is found in \ref{sec:proof-ansatz-prop-summary}.
\end{proof}

Furthermore, since $\bnut_t$ represents the ansatz vector for each sub-population mean-field, we may represent the total mean-field effect implied by the ansatz $\bnut_t$ as $\nut_t = \sum_{k\in\mfK} \, p_k \, \nut_t^k$.

\subsection{An Ansatz for Agent's Optimal Control}

We now use the ansatz derived in section~\ref{sec:Mean-Field-Ansatz} to derive an optimal control for each individual agent under the assumption that $\nubar_t = \nut_t$ (we show that indeed the solution is optimal in the next subsection). Consider the FBSDE~\eqref{eq:thm-sol-statement-2} for the optimal trading rate of an agent $j$ in sub-population $k$. Replacing the true mean field $\nubar_t$ with our ansatz $\nut_t$ in the optimality equation~\eqref{eq:thm-sol-statement-2}, we obtain the FBSDE
\begin{equation} \label{eq:agent-FBSDE-limit-mfansatz}
		\begin{cases}
			&-d(2\, a_k\,\nuj_t)
			= \left( \Ahat_t + \blambda^\T \, \bnut_t - 2\,\phi_k \,\qj{\nuj}_t \right) \,dt - d\mathcal{M}_t^j\,,
			\\
			&\hspace{1em}
			2 \,a_k \, \nuj_T = - 2\,\Psi_k \,\qj{\nuj}_T\,,
		\end{cases}
\end{equation}
where $\blambda = \left( \lambda p_k \right)_{k\in\mfK}$ and $\mathcal{M}_t^j$ is some square-integrable $\F^j$-adapted martingale. By solving this FBSDE, we find each agent's optimal control, assuming that the mean-field process is exactly equal to the ansatz derived in section~\ref{sec:Mean-Field-Ansatz}.


We solve the FBSDE \eqref{eq:agent-FBSDE-limit-mfansatz} along similar lines as the approach we took in solving \eqref{eq:stacked-FBSDE}. Doing so leads to the following proposition.

\begin{proposition} \label{prop:agents-optimal-control}
	Let agent$-j$ be a member of sub-population $k$, then the solution to the FBSDE~\eqref{eq:agent-FBSDE-limit-mfansatz} is
	\begin{equation} \label{eq:form-optimal-solution-mf-j}
		\nuj_t = \nut_t^k + \tfrac{1}{2 a_k} h_{2,t}^k
		\left( \qj{\nuj}_t - \qtk{\nutk}_t \right)
		\;,
	\end{equation}
	where $\nut_t^k$ is defined in Proposition~\ref{thm:ansatz-prop-summary} and $h_{2,t}^k:[0,T]\rightarrow \R$ is defined as
	\begin{equation}
		h_{2,t}^k = -2 \xi_k \left(
		\frac{
		\Psi_k \cosh\left( -\gamma_k (T-t)\ \right)
		- \xi_k \sinh\left( -\gamma_k (T-t)\ \right) }{
		 \xi_k \cosh\left( -\gamma_k (T-t)\ \right)
		- \Psi_k \sinh\left( -\gamma_k (T-t)\ \right)
		}
		\right)
		\;,
	\end{equation}
	where we the constants $\gamma_k = \sqrt{\phi_k/a_k}$ and $\xi_k=\sqrt{\phi_k a_k}$. Moreover, $h_{2,t}^k \leq 0$ for all $t\in[0,T]$, and $\nuj \in \A^j$.
\end{proposition}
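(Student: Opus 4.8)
The plan is to exploit the linearity of the FBSDE~\eqref{eq:agent-FBSDE-limit-mfansatz} by subtracting off the sub-population mean-field ansatz, which eliminates the latent signal and the mean-field feedback and leaves a clean homogeneous system. The first observation is that $\blambda^\T\bnut_t = \lambda\sum_{k^\prime\in\mfK}p_{k^\prime}\nut^{k^\prime}_t$ is exactly the mean-field driver term in the ansatz FBSDE~\eqref{eq:mean-field-ansatz-FBSDE} for sub-population $k$. Writing $D_t := \qj{\nuj}_t - \qtk{\nutk}_t$ for the inventory gap and subtracting \eqref{eq:mean-field-ansatz-FBSDE} from \eqref{eq:agent-FBSDE-limit-mfansatz}, both $\Ahat_t$ and $\blambda^\T\bnut_t$ cancel, yielding the reduced system
\begin{equation}
\begin{cases}
-d\!\left(2a_k(\nuj_t-\nutk_t)\right) = -2\phi_k\,D_t\,dt - d\!\left(\mathcal{M}_t^j-\mMbar_t^k\right),\\
2a_k(\nuj_T-\nutk_T) = -2\Psi_k\,D_T,\qquad dD_t = (\nuj_t-\nutk_t)\,dt,
\end{cases}
\end{equation}
where $\mathcal{M}^j-\mMbar^k$ is an $\F^j$-adapted martingale (since $\F\subseteq\F^j$) and $D_0 = \mfQ_0^j-\invmean_0^k$. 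The disappearance of the alpha signal, the innovations martingale, and the entire mean-field coupling is the essential simplification.

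Next I would make the linear ansatz $2a_k(\nuj_t-\nutk_t) = h_{2,t}^k\,D_t$ for a deterministic scalar function $h_2^k$, mirroring the $\bg_2$ ansatz used to solve \eqref{eq:stacked-FBSDE}. Since $D$ has finite variation and $h_2^k$ is deterministic, the product rule produces no quadratic-variation terms; substituting $dD_t = \tfrac{1}{2a_k}h_{2,t}^k D_t\,dt$ and matching the finite-variation part against the reduced driver forces the scalar Riccati ODE
\begin{equation}
\dot h_{2,t}^k = 2\phi_k - \tfrac{1}{2a_k}\,(h_{2,t}^k)^2,\qquad h_{2,T}^k = -2\Psi_k,
\end{equation}
while matching the martingale part shows $\mathcal{M}^j-\mMbar^k$ is constant, confirming the ansatz is consistent. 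I would then verify by direct differentiation that the stated hyperbolic expression solves this equation: using $\gamma_k^2=\phi_k/a_k$ and $\xi_k^2=a_k\phi_k$, the terminal value $h_{2,T}^k=-2\Psi_k$ is immediate from $\cosh 0=1$, $\sinh 0=0$, and the stationary points $\pm 2\xi_k$ of the ODE correspond to the hyperbolic parametrization.

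To establish $h_{2,t}^k\le 0$ together with boundedness on $[0,T]$, I would argue on the phase line of $\dot h_2 = -\tfrac{1}{2a_k}(h_2-2\xi_k)(h_2+2\xi_k)$ integrated backward from $\tau=T-t=0$: since the terminal datum $-2\Psi_k$ lies strictly below $2\xi_k$, the trajectory remains trapped between $-2\max(\Psi_k,\xi_k)$ and $-2\min(\Psi_k,\xi_k)$ and hence stays strictly negative. Equivalently, and more self-containedly, one checks directly that both the numerator $\Psi_k\cosh(\gamma_k(T-t))+\xi_k\sinh(\gamma_k(T-t))$ and the denominator $\xi_k\cosh(\gamma_k(T-t))+\Psi_k\sinh(\gamma_k(T-t))$ are strictly positive for $t\le T$ (as $\Psi_k,\xi_k>0$, $\cosh\ge 1$, $\sinh\ge 0$), which simultaneously gives $h_2^k\le 0$, precludes any blow-up of the denominator, and thereby yields boundedness on $[0,T]$.

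Finally, for admissibility $\nuj\in\A^j$ I would note that $\nut^k$ is $\F$-adapted (Proposition~\ref{thm:ansatz-prop-summary}), $h_2^k$ is deterministic, and $D_t$ is $\F^j$-adapted, so $\nuj_t=\nutk_t+\tfrac{1}{2a_k}h_{2,t}^k D_t$ is $\F^j$-predictable. For the $\HT$ bound, solving the gap ODE gives $D_t = D_0\exp\!\big(\int_0^t \tfrac{1}{2a_k}h_{2,s}^k\,ds\big)$, and since $h_2^k\le 0$ the exponent is nonpositive, whence $|D_t|\le|D_0|$; combined with $\E[(\mfQ_0^j-\invmean_0^k)^2]<\infty$, boundedness of $h_2^k$, and $\nut^k\in\HT$, this yields $\E[\int_0^T(\nuj_t)^2\,dt]<\infty$. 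As \eqref{eq:agent-FBSDE-limit-mfansatz} has a unique solution (by the strict-concavity argument underlying Proposition~\ref{thm:optim-bsde-prop}), the constructed process is \emph{the} solution, establishing \eqref{eq:form-optimal-solution-mf-j}. The main obstacle I anticipate is not the Riccati algebra but cleanly justifying that the linear ansatz is legitimate — that the martingale increment genuinely vanishes and that the closed-form $h_2^k$ stays bounded throughout $[0,T]$ — which is precisely what the positivity of the denominator secures.
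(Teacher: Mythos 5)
Your proposal is correct and takes essentially the same approach as the paper's proof: reduce to the homogeneous inventory-gap system (the paper plugs the ansatz \eqref{eq:form-optimal-solution-mf-j} into \eqref{eq:agent-FBSDE-limit-mfansatz} and cancels against \eqref{eq:mean-field-ansatz-FBSDE} with the choice $\mathcal{M}^j=\mMbar^k$, which is algebraically identical to your subtract-first step), match terms to obtain the scalar Riccati ODE with terminal datum $h_{2,T}^k=-2\Psi_k$, verify the hyperbolic closed form and its nonpositivity via positivity of numerator and denominator, and get admissibility from the exponential solution $D_t = D_0\,e^{\frac{1}{2a_k}\int_0^t h_{2,s}^k\,ds}$ of the gap ODE. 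Incidentally, your Riccati coefficient $\tfrac{1}{2a_k}$ is the correct one (the paper's displayed $\tfrac{1}{4a_k^2}$ is a typo, as the stated closed form with stationary points $\pm 2\xi_k$ confirms), and your closing uniqueness remark via strict concavity is a small addition the paper leaves implicit.
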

\begin{proof}
The proof is found in \ref{sec:proof-agents-optimal-control}
\end{proof}

\subsection{Showing the Solution is Optimal}

At this point we have solved the optimality equation~\eqref{eq:thm-sol-statement-2} under the assumption that the mean-field $\nubar$ equals the ansatz mean-field process $\nut$. To show that the solution provided in Proposition~\ref{prop:agents-optimal-control} indeed solves the optimality equation, we need to demonstrate that $\nut$ is the true mean-field, i.e., that $\nut_t = \nubar_t = \lim_{N\rightarrow\infty} \tfrac{1}{N} \sum_{j=1}^N \nu_t^j$. To this end, we consider the error within a sub-population $k\in\mfK$, $\Delta^k_t = \nubark_t - \nutk_t$ and demonstrate that it equals zero. The result is summarized in the following theorem.
\begin{theorem} \label{thm:mean-field-is-true}
	Let $\nuj_t$ be the processes defined in Proposition~\ref{prop:agents-optimal-control}, and $\nutk_t$ be the processes defined in Proposition~\ref{thm:ansatz-prop-summary}. Then for each $k\in\mfK$
	 \begin{equation}
	 	\nut_t^k = \lim_{N\rightarrow\infty} \frac{1}{N_k^{\N}} \sum_{j\in\mfK_k^{\N}} \nuj_t\,,
	 	\qquad \P\times\mu \text{-almost everywhere}
	 	\;.
	 \end{equation}
	  Hence, the process
	  \begin{equation}
	  	\nujst_t = \nut_t^k + \tfrac{1}{2 a_k} h_{2,t}^k
		\left( \qj{\nujst}_t - \qtk{\nutk}_t \right)
	  \end{equation}
is the solution to the optimality equation~\eqref{eq:thm-sol-statement-2}, where $\nutk_t = \nubark_t$ $\P\times\mu$ a.e. and
	  \begin{equation}
	  	\nujst = \arg\sup_{\nu\in\A^j} \Hbar_j(\nu)
	  	\;.
	  \end{equation}
\end{theorem}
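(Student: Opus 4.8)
The plan is to verify the fixed point directly: I would compute the empirical sub-population average of the controls produced in Proposition~\ref{prop:agents-optimal-control} and show that, as $N\to\infty$, it collapses onto $\nutk$, the only surviving randomness being washed out by a law of large numbers on the initial inventories.

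First I would analyse the inventory deviation $\delta q^j_t := \qj{\nuj}_t - \qtk{\nutk}_t$ for an agent $j\in\mfK_k^{\N}$. Subtracting the forward dynamics of $\qtk{\nutk}$ from those of $\qj{\nuj}$ and inserting the feedback form~\eqref{eq:form-optimal-solution-mf-j}, the $\nutk$ and mean-field contributions cancel, leaving the homogeneous linear ODE $\tfrac{d}{dt}\delta q^j_t = \tfrac{1}{2a_k}h_{2,t}^k\,\delta q^j_t$ with random initial datum $\delta q^j_0 = \mfQ_0^j - \invmean_0^k$. Since $h_{2,\cdot}^k$ is deterministic and bounded on $[0,T]$ by Proposition~\ref{prop:agents-optimal-control}, this integrates to $\delta q^j_t = (\mfQ_0^j - \invmean_0^k)\,E_t^k$ with the bounded deterministic factor $E_t^k := \exp\!\big(\int_0^t \tfrac{1}{2a_k}h_{2,u}^k\,du\big)$. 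Substituting back into~\eqref{eq:form-optimal-solution-mf-j} yields the clean decomposition $\nuj_t = \nutk_t + \tfrac{1}{2a_k}h_{2,t}^k E_t^k\,(\mfQ_0^j - \invmean_0^k)$, in which the individual agent enters only through the scalar $\mfQ_0^j - \invmean_0^k$.

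I would then average over sub-population $k$. As neither $\nutk_t$ nor the factor $\tfrac{1}{2a_k}h_{2,t}^k E_t^k$ depends on $j$,
\[
\frac{1}{N_k^{\N}}\sum_{j\in\mfK_k^{\N}}\nuj_t = \nutk_t + \tfrac{1}{2a_k}h_{2,t}^k E_t^k\cdot\frac{1}{N_k^{\N}}\sum_{j\in\mfK_k^{\N}}(\mfQ_0^j - \invmean_0^k),
\]
so the pre-limit error is a bounded deterministic $t$-profile times the single, $t$-independent random variable $\bar\delta_N := \tfrac{1}{N_k^{\N}}\sum_{j\in\mfK_k^{\N}}(\mfQ_0^j - \invmean_0^k)$. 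The $\{\mfQ_0^j\}$ are independent with common mean $\invmean_0^k$ inside sub-population $k$ and uniformly bounded variance, and $N_k^{\N}\to\infty$ by~\eqref{eq:limit-assumption-proportions}; hence Kolmogorov's strong law gives $\bar\delta_N\to0$ almost surely, and I would conclude $\nubark_t = \nutk_t$.

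The one delicate point — and the main obstacle — is the mode of convergence, namely obtaining $\P\times\mu$-a.e.\ rather than merely $L^2$ or in probability. This is exactly where it matters that $\bar\delta_N$ carries no $t$-dependence: the crude estimate $\E[\bar\delta_N^2]\le C/N_k^{\N}$ alone would give only convergence in probability along the full sequence, whereas the Kolmogorov criterion $\sum_n \mathrm{Var}(\mfQ_0^{j_n})/n^2 \le C\sum_n n^{-2}<\infty$ upgrades it to almost sure convergence of the prefix averages within the (nested, increasing) index sets $\mfK_k^{\N}$; multiplying the resulting single $\P$-full event by the bounded deterministic $t$-profile then transports it to a full $\P\times\mu$ set, in fact giving convergence for $\P$-a.e.\ $\omega$ and every $t\in[0,T]$ simultaneously. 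Finally I would close the consistency loop: summing $\nubark_t=\nutk_t$ against $p_k$ via~\eqref{eq:nubar-subpop-rep} gives $\nubar_t=\sum_{k}p_k\nubark_t=\sum_k p_k\nutk_t=\nut_t$, so the consistency condition~\eqref{eq:consistency-condition} holds and the FBSDE~\eqref{eq:agent-FBSDE-limit-mfansatz} — solved in Proposition~\ref{prop:agents-optimal-control} under the hypothesis $\nubar=\nut$ — coincides with the true optimality equation~\eqref{eq:thm-sol-statement-2}; Proposition~\ref{thm:optim-bsde-prop} then identifies its solution as the unique maximizer, so $\nujst=\arg\sup_{\nu\in\A^j}\Hbar_j(\nu)$.
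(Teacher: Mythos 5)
Your proposal is correct and follows essentially the same route as the paper's own proof: the same inventory-deviation formula (this is exactly the paper's Lemma~\ref{lemma:mean-field-ansatz-diff}, which you derive inline), the same reduction of the averaging error to a bounded deterministic $t$-profile multiplying the sample mean of the centred initial inventories, the same law-of-large-numbers step to kill that sample mean, and the same closing argument that consistency plus Proposition~\ref{thm:optim-bsde-prop} identifies $\nujst$ as the unique maximizer. If anything, your handling of the mode of convergence (invoking Kolmogorov's strong law for independent but not identically distributed inventories, and using the $t$-independence of the random factor to get a single $\P$-null exceptional set and hence $\P\times\mu$-a.e.\ convergence) is more careful than the paper's brief appeal to ``the law of large numbers.''
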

\begin{proof}
	The proof is found in~\ref{sec:proof-mean-field-is-true}.
\end{proof}

This theorem guarantees that the ansatz for the mean-field processes and each agent's control are indeed correct. Hence, for the MFG version of our stochastic game, we have closed-form solutions for the optimal strategy of each individual agent, the sub-population mean-fields, and the overall population mean-field.

\subsection{Properties of the Agent's Optimal Control} \label{sec:properties-of-the-optimal-control}

From Proposition \ref{thm:ansatz-prop-summary} (and Theorem \ref{thm:mean-field-is-true}), the mean-field trading rates within a sub-population can be written
\begin{equation} \label{eq:mean-field-structure}
	\bnubar_t = (2\ba)^{-1} \left( \,  \bg_{1,t} + \bg_{2,t} \, \bqbar_t \, \right)
	\;,
\end{equation}
where $\bqbar_t = \bm{\mu}_0 + \int_0^t \bnubar_t \, dt$.
The general structure of this mean-field strategy closely corresponds to the structure obtained for the single agent latent alpha model in \cite{casgrain_jaimungal_2016}.

The within sub-population mean-fields can be  decomposed into two parts. The first part, $( 2 \ba )^{-1} \, \bg_{1,t}$, represents the portion of the mean field trading rate that can be attributed to trading on alpha. This is evident from the representation of $\bg_1$ in equation~\eqref{eq:proposition-solution-g1}, which shows that $\bg_1$ is the weighted average of the expected future drift of the asset. Moreover, this expected future drift  is computed by conditioning on the agent's visible filtration only, meaning that the agent obtains the best possible estimate of the asset's alpha based on the information they have.

The second part, $( 2 \ba )^{-1} \, \bg_{2,t} \, \bqbar_t$, consists of a deterministic function multiplied by the vector of `mean-field inventories'. It admits the interpretation that the sub-population mean-field trading rates each induce the sub-population mean inventories towards zero (so that the terminal liquidation penalty is minimized), while simultaneously being conscious of the `mean-field inventories' of all other sub-populations.

%
%

From the result in Theorem \ref{thm:mean-field-is-true}, an agent $j$ in sub-population $k$, follows the strategy
\begin{equation} \label{eq:individual-optimal-control-real}
	\nujst_t = \nubark_t + \tfrac{1}{2a_k} h_{2,t}^k \left( \qj{\nujst}_t - \qbar_t^k \right)
	\,.
\end{equation}
Thus, each agent trades at the sub-population mean-field rate plus a correction term. Since $h_{2,t}^k$ is strictly negative, this correction term tends to push the agent's inventory towards the sub-population's mean-field inventory. To formally show this, recall that $dq_t^{j,\nu^j}=\nu_t^j\,dt$ for any $\nu^j\in\A^j$. Hence, from \eqref{eq:individual-optimal-control-real}, we have
\begin{equation}
	d\left( \qj{\nujst}_t - \qbar_t^k \right) = \tfrac{1}{2a_k} h_{2,t}^k \left( \qj{\nuj}_t - \qbar_t^k \right) \, dt \;.
\end{equation}
This admits the solution
\begin{equation}
	\left( \qj{\nujst}_t - \qbar_t^k \right) = \left( q_0^j - \qbar_0^k \right) e^{\frac{1}{2a_k} \int_0^t h_{2,u}^k du}
	\;.
\end{equation}
Since $h_{2,t}^k<0$, it is clear that $\lvert \qj{\nujst}_t - \qbar_t^k \rvert$ is a monotonically decreasing function of time.

\section{The \texorpdfstring{$\epsilon$}{Epsilon}-Nash Equilibrium Property}
\label{sec:epsilon-Nash}

In the previous section, we explicitly constructed the unique optimal trading actions of all agents in infinite population limit -- the optimal actions in the MFG. In this section, we explore the properties resulting from applying the MFG optimal controls to the finite player game. More specifically, we show that the controls satisfy the $\epsilon$-Nash equilibrium property.
\begin{definition}
	A set of controls $\left\{ \omega^j \in \A^j \,: j\in\mfN\right\}$ forms an $\epsilon$-Nash equilibrium with a collection of objective functionals $\left\{ J_j(\cdot,\cdot) \, : \, j\in \mfN\right\}$, if there exists $\epsilon>0$, s.t.
	\begin{equation}
		J_j(\omega^j , \omega^{-j}) \leq
		\sup_{\omega \in \A} J_j(\omega , \omega^{-j}) \leq
		J_j(\omega^j , \omega^{-j}) + \epsilon\,, \qquad \forall j\in\mfN.
	\end{equation}
\end{definition}

The definition of an $\epsilon$-Nash equilibrium characterizes a collection of controls that deviates no farther than $\epsilon$ from the Nash equilibrium of the collection of objective functions. We will prove that the optimal MFG controls obtained in section~\ref{sec:solving-the-mfg} satisfies the $\epsilon$-Nash property for any finite game with a large enough population size. In particular, we show that for any given $\epsilon>0$, there exists a population size $N_\epsilon$ so that the $\epsilon$-Nash property holds for any population of size $N>N_\epsilon$.

\begin{theorem}[$\epsilon$-Nash equilibrium] \label{thm:epsilon-nash-thm}
	Consider the collection of objective functions $\left\{ H_j \, : j \in \mfN \right\}$ defined in equation~\eqref{eq:sub-pop-objective-rep} and the set of optimal mean-field controls $\{ \nujst \}_{j=1}^\infty$ defined in equation~\eqref{eq:individual-optimal-control-real}. Suppose that there exists a sequence $\left\{ \delta_N \right\}_{N=1}^{\infty}$ such that $\delta_N \rightarrow 0$ and
	\begin{equation}
		\left\lvert \tfrac{N_k^{\N}}{N} -  p_k \right\rvert = o(\delta_N)
	\end{equation}
	for all $k\in\mfK$, then
	\begin{equation}
		H_j(\nujst , \nu^{\ast,-j}) \leq
		\sup_{\nu \in \A} H_j(\nu , \nu^{\ast,-j}) \leq
		H_j(\nujst , \nu^{\ast,-j}) + o( \tfrac{1}{N} ) + o(\delta_N)
	\end{equation}
	for each $j\in\mfN$.
\end{theorem}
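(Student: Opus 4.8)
The plan is to sandwich the finite-population payoff between the mean-field payoff and a vanishing error. The left inequality is immediate, since $\nujst\in\A^j$ so $H_j(\nujst,\nu^{\ast,-j})\le\sup_{\nu\in\A^j}H_j(\nu,\nu^{\ast,-j})$. For the right inequality, the first step is to relate, for an arbitrary $\nu\in\A^j$, the finite-game objective to the mean-field objective $\Hbar_j$ of~\eqref{eq:Hbar-obj-def}. Projecting the midprice onto $\F$ via Lemma~\ref{thm:predictable-representation} so that the martingale terms vanish under the expectation, the only difference between the two functionals is that the price drift carries $\lambda\,\nubarN$ rather than $\lambda\,\nubar$. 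Hence, writing $c_j=\E[X_0^j+\mfQ_0^j(S_0-\Psi_k\mfQ_0^j)]$ for the $\nu$-independent offset,
\begin{equation}
	H_j(\nu,\nu^{\ast,-j}) - \Hbar_j(\nu) = c_j + \lambda\,\E\left[\int_0^T \qj{\nu}_t\left(\nubarN_t - \nubar_t\right)dt\right],
\end{equation}
where in $\nubarN_t$ agent $j$ plays $\nu$ and every other agent plays its mean-field control. Everything now reduces to bounding this discrepancy uniformly over the relevant $\nu$.

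The second step is a coercivity reduction. Expanding the quadratic form in~\eqref{eq:F-adapted-objective} (with $\nubarN$ in place of $\nubar$) and using $\int_0^T \nu_t\qj{\nu}_t\,dt=\tfrac12\big((\qj{\nu}_T)^2-(\mfQ_0^j)^2\big)$, the two curvature terms $-\Psi_k\E[(\qj{\nu}_T)^2]$ and $-\phi_k\E[\int_0^T(\qj{\nu}_t)^2dt]$ are nonpositive, while the linear term is controlled by Cauchy--Schwarz together with the bound $\|\Ahat+\lambda\nubarN\|_{\HT}$, which is uniform in $N$ for $N$ large because $\nubarN\to\nubar$ in $\HT$. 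This yields $H_j(\nu,\nu^{\ast,-j})\le C(1+\|\nu\|_{\HT})-a_k\|\nu\|_{\HT}^2+C'$; since $a_k>0$ the payoff tends to $-\infty$ as $\|\nu\|_{\HT}\to\infty$, so it is no loss of generality to restrict to controls with $H_j(\nu,\nu^{\ast,-j})\ge H_j(\nujst,\nu^{\ast,-j})$, for which $\|\nu\|_{\HT}$ and hence $\|\qj{\nu}\|_{\HT}$ are bounded by a constant independent of $N$.

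The third step decomposes the gap. Writing $\nubar^{\N,\ast}_t:=\tfrac1N\sum_{i=1}^N\nu^{i,\ast}_t$ for the empirical average when all agents play mean-field controls, one has $\nubarN_t=\nubar^{\N,\ast}_t+\tfrac1N(\nu_t-\nujst_t)$; subtracting the identity for $\nu=\nujst$ and using that $\nujst$ maximizes $\Hbar_j$ over $\A^j$ (Theorem~\ref{thm:mean-field-is-true}), so $\Hbar_j(\nu)-\Hbar_j(\nujst)\le0$, the constant $c_j$ cancels and
\begin{equation}
	H_j(\nu,\nu^{\ast,-j}) - H_j(\nujst,\nu^{\ast,-j}) \le \lambda\,\E\!\left[\int_0^T \!\big(\qj{\nu}_t - \qj{\nujst}_t\big)\big(\nubar^{\N,\ast}_t - \nubar_t\big)dt\right] + \tfrac{\lambda}{N}\,\E\!\left[\int_0^T \!\qj{\nu}_t(\nu_t-\nujst_t)\,dt\right].
\end{equation}
The final term is $O(1/N)$ by the bounds of step two. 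In the first term I would split $\nubar^{\N,\ast}_t-\nubar_t$ into a proportion error $\sum_{k}(\tfrac{N_k^{\N}}{N}-p_k)\,\nutk_t$, which contributes $o(\delta_N)$ by hypothesis, and an empirical fluctuation. For the latter I would invoke the explicit feedback form of Proposition~\ref{prop:agents-optimal-control} and~\eqref{eq:individual-optimal-control-real}, which gives $\qj{\nujst}_t-\qbar_t^k=(\mfQ_0^j-\invmean_0^k)\,e^{\frac1{2a_k}\int_0^t h_{2,u}^k\,du}$, so that within sub-population $k$ the fluctuation is a deterministic time-function times the centred empirical mean $\tfrac1{N_k^{\N}}\sum_{i\in\mfK_k^{\N}}(\mfQ_0^i-\invmean_0^k)$ of independent, bounded-variance initial inventories.

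The main obstacle is the sharp rate of this fluctuation term. A naive Cauchy--Schwarz estimate gives only $O(1/\sqrt N)$, since the centred empirical mean has $L^2$-norm of order $N^{-1/2}$. To reach $O(1/N)$ one must exploit orthogonality: the increments $\{\mfQ_0^i-\invmean_0^k\}_{i\ne j}$ are mutually independent and mean-zero, so each covariance $\E[(\qj{\nu}_t-\qj{\nujst}_t)(\mfQ_0^i-\invmean_0^k)]$ would vanish outright were agent $j$'s deviation independent of $\mfQ_0^i$. It is not exactly independent, because the realized price $S^{\nubarN}$ --- and therefore $\F^j$ and the optimizing $\nu$ --- depends on $\mfQ_0^i$ only through the $\tfrac1N$-weighted term $\tfrac1N\nu^{i,\ast}$; this residual dependence is itself $O(1/N)$, so each of the $O(N)$ cross-covariances is $O(1/N)$ and, after the $\tfrac1{N_k^{\N}}$ prefactor, they sum to $O(1/N)$ rather than $O(1/\sqrt N)$. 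Quantifying this weak dependence of agent $j$'s optimal response on the other agents' initial inventories is the delicate part of the argument; once it is controlled, collecting the three contributions gives the claimed bound $o(1/N)+o(\delta_N)$.
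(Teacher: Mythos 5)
Your overall architecture --- compare $H_j$ with $\Hbar_j$, invoke the MFG optimality of $\nujst$, and triangulate --- is the same as the paper's, and your first three steps are sound; they are in fact more scrupulous than the paper's in two respects (you retain the inventory weight $\qj{\nu}_t$ in the difference $H_j(\nu,\nu^{\ast,-j})-\Hbar_j(\nu)$, and you add a coercivity reduction to make the estimates uniform over the relevant deviations). The problem is that your proof has a genuine gap exactly where you flag one, and an acknowledged obstacle is not a proof. The term
\begin{equation}
\lambda\sum_{k\in\mfK} p_k\int_0^T f_t^k\;\E\!\left[\left(\qj{\nu}_t-\qj{\nujst}_t\right) Z_k\right]dt,
\qquad
Z_k:=\tfrac{1}{N_k^{\N}}\textstyle\sum_{i\in\mfK_k^{\N}}\left(\mfQ_0^i-\invmean_0^k\right),
\end{equation}
with $f_t^k$ deterministic, is only $O(N^{-1/2})$ by Cauchy--Schwarz, and your route to improving this to $o(N^{-1})$ rests on a sensitivity heuristic that has no meaning for an arbitrary admissible deviation: $\nu$ is merely required to be $\F^j$-predictable and in $\HT$, so it can be an arbitrarily irregular functional of the observed price path, and that path carries the aggregate fluctuation $Z_k$ with an $O(1)$ coefficient (it is $Z_k$ itself that is small, of $L^2$-norm $O(N^{-1/2})$, not the coefficient). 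Converting "the residual dependence is itself $O(1/N)$" into a theorem would require a quantitative regularity or smoothing argument (or an explicit restriction of the class of deviations), which is a substantial piece of analysis absent from your sketch; without it, your argument proves only an $O(N^{-1/2})$ error, which is weaker than the claimed $o(\tfrac1N)+o(\delta_N)$.

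It is instructive to see why the paper never meets this obstacle. Its key estimate, Lemma~\ref{lemma:obj-diff-bound}, controls $\lvert H_j(\nu,\nu^{-j,\ast})-\Hbar_j(\nu)\rvert$ through $\lambda\,\bigl\lvert\E\bigl[\int_0^T(\nubarN_t-\nubar_t)\,dt\bigr]\bigr\rvert$, i.e.\ with \emph{no random weight} inside the expectation. By Lemma~\ref{lemma:mean-field-diff}, the within-sub-population fluctuation $\nubarkN_t-\nubark_t$ equals a deterministic function of time multiplied by $Z_k$, so under the outer expectation it vanishes \emph{identically} --- not at rate $N^{-1/2}$, but exactly --- leaving only the deviator's own $\tfrac1N$ contribution and the proportion error $o(\delta_N)$; the theorem then follows by applying that lemma twice around $\Hbar_j(\nujst)=\sup_{\nu\in\A^j}\Hbar_j(\nu)$, which is precisely your triangulation. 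In short, the mechanism producing the paper's rate is exact centering of mean-zero initial inventories under an expectation with deterministic weights, whereas your (more faithful) retention of the random weight $\qj{\nu}_t-\qj{\nujst}_t$ destroys that exact cancellation and replaces it with a cross-covariance estimate that your proposal does not establish.
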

\begin{proof}
	The proof is found in Appendix~\ref{sec:proof-epsilon-nash}.
\end{proof}

Theorem~\ref{thm:epsilon-nash-thm} implies that for any fixed value of $\epsilon$, we can identify the minimum population size, $N_\epsilon$, so that the $\epsilon$-Nash property holds for all $N>N_\epsilon$. More specifically, the theorem states that the size of the quantity $N_\epsilon$ will grow as $\epsilon\rightarrow 0$ at a super-linear rate that is a function of the sequence $\delta_N$. The special case where the rate of growth of $N_\epsilon$ is exactly linear occurs when $o(\delta_N) \leq o(\frac{1}{N})$.

From a more intuitive standpoint, Theorem~\ref{thm:epsilon-nash-thm} simply tells us that the mean-field optimal controls are always a `good enough' substitute for the optimal finite-game controls provided that the population size is large enough.

It is important to note that in the finite payer game, agents cannot use the empirical mean-field in their individual strategies, since it is not measurable with respect to each agent's visible filtration. They instead generate a `fictitious' mean-field inventory process $(\bqbar_t)_{t\in[0,T]}$ which trades at the rate of $(\bnubar_t)_{t\in[0,T]}$ according to \eqref{eq:mean-field-structure}. These fictitious mean-field inventories then feed into the individual agent's trading strategy $\nujst$ according to the result in Theorem \ref{thm:mean-field-is-true}.

\section{Numerical Experiments}
\label{sec:Numerical-Experiments}

In this section we study the behaviour of the mean-field optimal controls through simulations of the finite player game. We consider a model where the asset price process is a mean-reverting pure-jump process with a latent (unobservable) process driving the dynamics. Agents must filter the value of the latent process from observed paths of the asset price, and to use this filter to make predictions on the future (expected) value of the asset price process. We conclude by exploring some of the properties of the agent's trading decisions by interpreting simulation results.

We begin by defining the un-impacted asset price process $F=(F_t)_{t\in[0,T]}$. The process $F_t$ is defined as
\begin{equation}
	F_t = \alpha \left( L_t^+ - L_t^- \right)\,,
\end{equation}
where $L^\pm = (L_t^\pm)_{t\in[0,T]}$ are counting processes each with respective stochastic intensity processes $\gamma^\pm = ( \gamma_t^\pm )_{t\in[0,T]}$, and the constant $\alpha >0$ represents the tick size for the asset price. The process $F$ is defined so that it may only jump up or down by a single tick during any small instant in time.

As previously mentioned, we wish the asset price process to be mean reverting and to include some latent component in its dynamics. To achieve this, we define each intensity process so that
\begin{equation}
	\gamma_t^\pm = \sigma + \kappa \left( \Theta_t - F_t \right)_{\pm}
	\;,
\end{equation}
where $(\,\cdot\,)_\pm$ represents the positive or negative part of its argument and $\Theta=\left( \Theta_t \right)_{t\in[0,T]}$ is a latent process. This specification causes the un-impacted asset price $F$ to mean-revert to $\Theta$. Lastly, we define the process $\Theta$ to be an $M$-state continuous time Markov chain with generator matrix\footnote{The generator matrix $\bm{C}\in\mathbb{R}^{M\times M}$ of a $M$-state continuous time Markov chain $\Theta$ has non-diagonal entries $\bm{C}_{i,j}\geq 0$ if $i\neq j$ and diagonal entries $\bm{C}_{i,i}= -\sum_{j\neq i} \bm{C}_{i,j}$. $\bm{C}$ is defined so that $\P \left( \Theta_t = \theta_j {\lvert} \Theta_0 = \theta_i \right) = \left( e^{t \, \bm{C}} \right)_{i,j}$, where $\left( e^{t \, \bm{C}} \right)_{i,j}$ is element $(i,j)$ of the matrix exponential of $t \, \bm{C}$.} $\bm{C}$, taking values in the set $\left\{ \theta_i \right\}_{i=1}^M$. We also assumed that the initial value of the latent process, $\Theta_0$, has prior distribution $\bm{\pi}=\{ \pi_m \}_{m=1}^M$, where $\pi_m =\P(\Theta_0=\theta_m)$. The un-impacted asset price process can be viewed as a pure-jump analogue to an Ornstein-Uhlenbeck process. The parameter $\kappa$ controls the strength of the mean reversion of $F$ towards $\Theta$, while the parameter $\sigma$ controls the base level of noise in the paths of $F$. We point the reader to \cite[Section 6]{casgrain_jaimungal_2016} for further exposition of this model.

As before, we assume there is a total population of $N$ players divided into $K$ sub-populations all trading the same asset $S$. The asset price process $S$ is assumed to be given by
\begin{equation}
	S_t^{\nubarN} = F_t + \lambda \, \qbar^{(N)}_t
	\;,
\end{equation}
which can also be recast in the semi-martingale representation as in equation~\eqref{eq:S-price-dynamics}. Each agent chooses their trading strategy according to the mean-field optimal control derived in Section~\ref{sec:solving-the-mfg}. Each of the terms that need to be computed for this control can be obtained in closed form up to inverses and the computation of inverses, matrix exponentials and ordered exponentials. We set the initial inventory values of all agents to i.i.d. Gaussian random variables. More specifically, for an agent $j$ in sub-population $k$, we assume that
\begin{equation}
	\mfQ_0^k \sim \mathcal{N}(\invmean_0^k,\bar{\sigma}_0^k)
\end{equation}
for constants $\{\invmean_0^k\}_{k=1}^K$ and $\{\bar{\sigma}_0^k\}_{k=1}^K$. Each agent participating in the game must compute the values of $\{\nubar^k\}_{k=1}^{K}$ and $\{\qbar^k\}_{k=1}^{K}$ to be able to determine their own trading strategy. The agents achieve this by using the results of Propositions~\ref{prop:solution-g1-g2} and~\ref{prop:agents-optimal-control} to compute $\bnubar_t$ and to evolve the value of the `fictitious' mean-field inventory process $\bqbar_t$. To compute $\bg_{1,t}$, each agent must compute the conditional expected value $\E \left[ A_u \big{\lvert} \F_t \right]$, where $A_t = \kappa \left( F_t - \Theta_t\right)$. To achieve this, agents use the observed path of $S^{\nubarN}$ up until time $t$ to compute the posterior distribution of the current value of the latent process $\Theta_t$, which they use to compute the expected future return on the asset price. The expected value can be computed in closed form for this latent alpha model and its solution is presented and discussed in detail in \cite[Section 6]{casgrain_jaimungal_2016}.

We perform simulations of a game with $K=2$ distinct (unequal) sub-populations and a total of $N=30$ agents. We allow agents to trade in the finite time interval ending at $T=1$, representing the length of one whole trading day. The Table~\ref{tab:sim-player-params} below lays out the parameters  for the starting distribution of inventories and for the objective function for each sub-population.
\begin{table}[htbp]
  \centering
    \begin{tabular}{r|rrrrll}
    \multicolumn{1}{l|}{$k$} & \multicolumn{1}{l}{$\invmean_0^k$} & \multicolumn{1}{l}{$\bar{\sigma}_0^k$} & \multicolumn{1}{l}{$N_k$} & \multicolumn{1}{l}{$\Psi_k$} & $\phi_k$ & $a_k$ \bigstrut[b]\\
\cline{2-7}    1     & 100   & 50    & 20    & 100   & $10^{-2}$ & $10^{-4}$ \bigstrut[t]\\
    2     & 0     & 50    & 10    & 100   & $10^{-3}$ & $10^{-4}$ \\
    \end{tabular}%
  \caption{Simulation Parameters for the two sub-population of agents.}
  \label{tab:sim-player-params}%
\end{table}%

We set the first sub-population of agents to be long the asset on average at the start of the game, whereas the second starts off holding no inventory on average. We also make sub-population 2 more inclined to trade on alpha by making their $\phi_1$ parameter (which controls their risk appetite) considerably smaller than for sub-population 1. Both sub-populations have the same instantaneous transaction cost parameter $a_k$, and we set the size of sub-population 2 to be twice that of sub-population 1. Lastly, we set the $\Psi_k$ parameter to be very large to force agents to fully liquidate their inventories by $t=T$.

In the simulation, we assume that the latent process can take $M=2$ possible states and that $\Theta_t \in \{ 4.95 , 5.05 \}$. This causes the price to either mean-revert up or down depending on the state of $\Theta_t$. The remaining asset price process parameters are presented in Table~\ref{tbl:sim-asset-params} below.
\begin{table}[h!]
  \centering
  \begin{tabular}{lllll}
  $F_0 = \$ 5$, & $\alpha = \$ 0.01$ & $\bm{\pi} = \left(\begin{smallmatrix} 0.5 \\ 0.5 \end{smallmatrix}\right)$ &
  $\bm C= \left[ \begin{smallmatrix} -1 & 1 \\ 1 & -1  \end{smallmatrix} \right]$, \\
  $\kappa = 360$, & $\sigma = 120.24$, & $\lambda = \$ 10^{-3}$.
  \end{tabular}%
  \caption{The parameters used for the asset price dynamics and for the latent process.}
  \label{tbl:sim-asset-params}
\end{table}

The asset price is set to start at $\$5$ and can mean-revert to either $\$4.95$ or $\$5.05$ over the course of the trading period. We set the tick value , $\alpha$, in this model to be 1 cent so that the un-impacted asset price may only jump by increments of this size. The values of $\bm{\pi}$ are chosen specifically so that agents have no particular preference for the starting value of the latent process. Furthermore, because of the choice of the generator matrix $\bm{C}$, the agent expects that the latent process will switch states (on average) once over the course of the trading period $t\in[0,1]$. The values of $\kappa$ and $\sigma$ are chosen to be relatively close to re-scaled values obtained from calibrating the model to market prices in \cite{casgrain_jaimungal_2016}. Lastly, the permanent impact parameter $\lambda$ is chosen to be 10 times larger than the temporary impact parameter $a_k$ -- which is line with the empirical studies in \cite{cartea2016incorporating}.

 During the simulation, we set a fixed path for $\Theta_t$ to be able to observe the agents' filtering performance. More specifically, we let $\Theta_t$ start off at $\Theta_0 = 4.95$, and then make it jump to $\Theta_t = 5.5$ at time $t=0.5$ and let it remain there until the end of the trading period. We show an example simulated path for the asset price and latent process, the agent's posterior distribution for the value of $\Theta_t$, and the individual agents' inventory paths over the course of the trading period in Figure~\ref{fig: Simulation PureJump Results 1}.
 \begin{figure}[h!]
    \centering
    \begin{subfigure}[t]{0.32\textwidth}
        \centering
        \includegraphics[width=\textwidth]{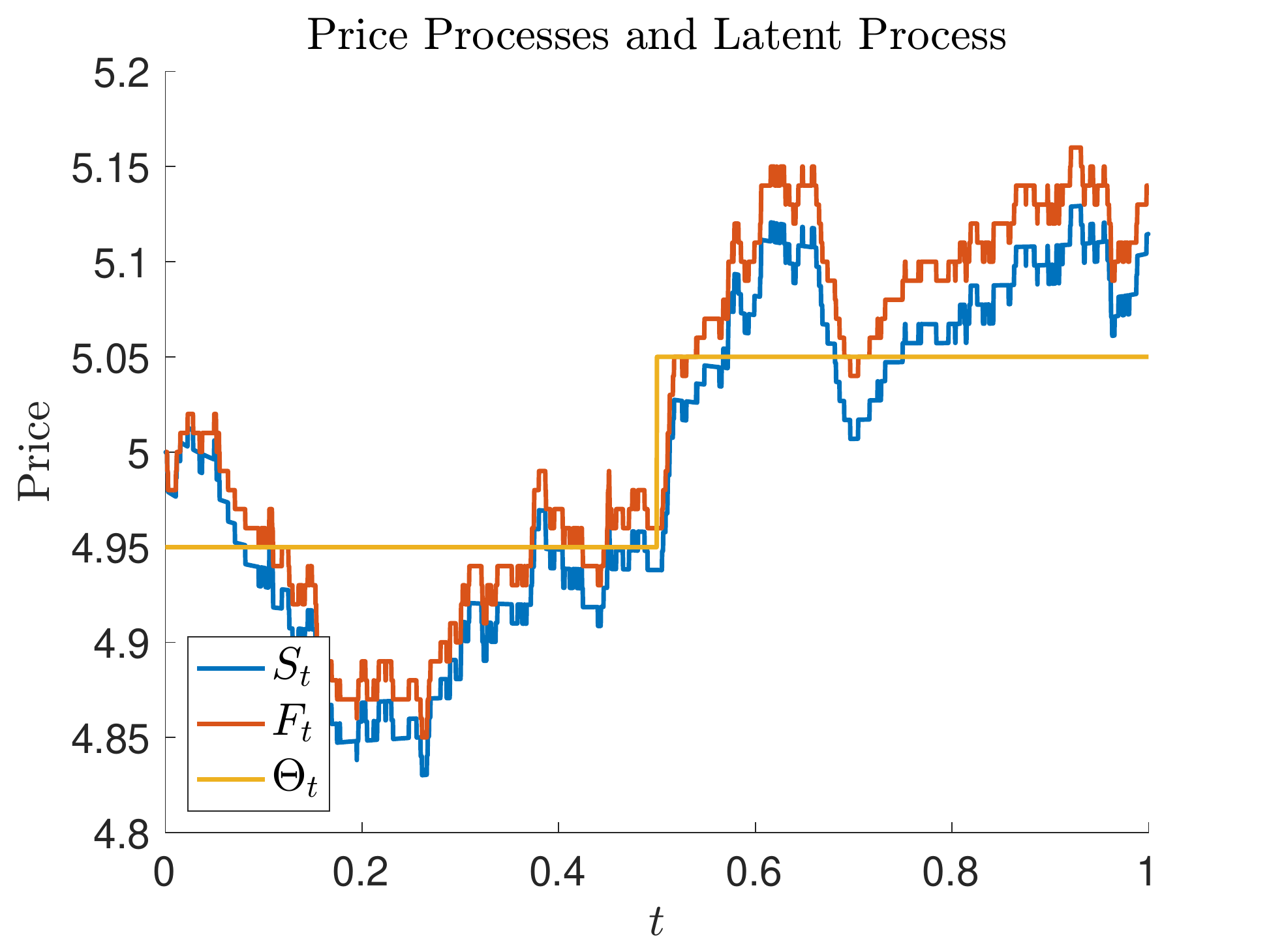}
    \end{subfigure}
    ~
    \begin{subfigure}[t]{0.32\textwidth}
        \centering
        \includegraphics[width=\textwidth]{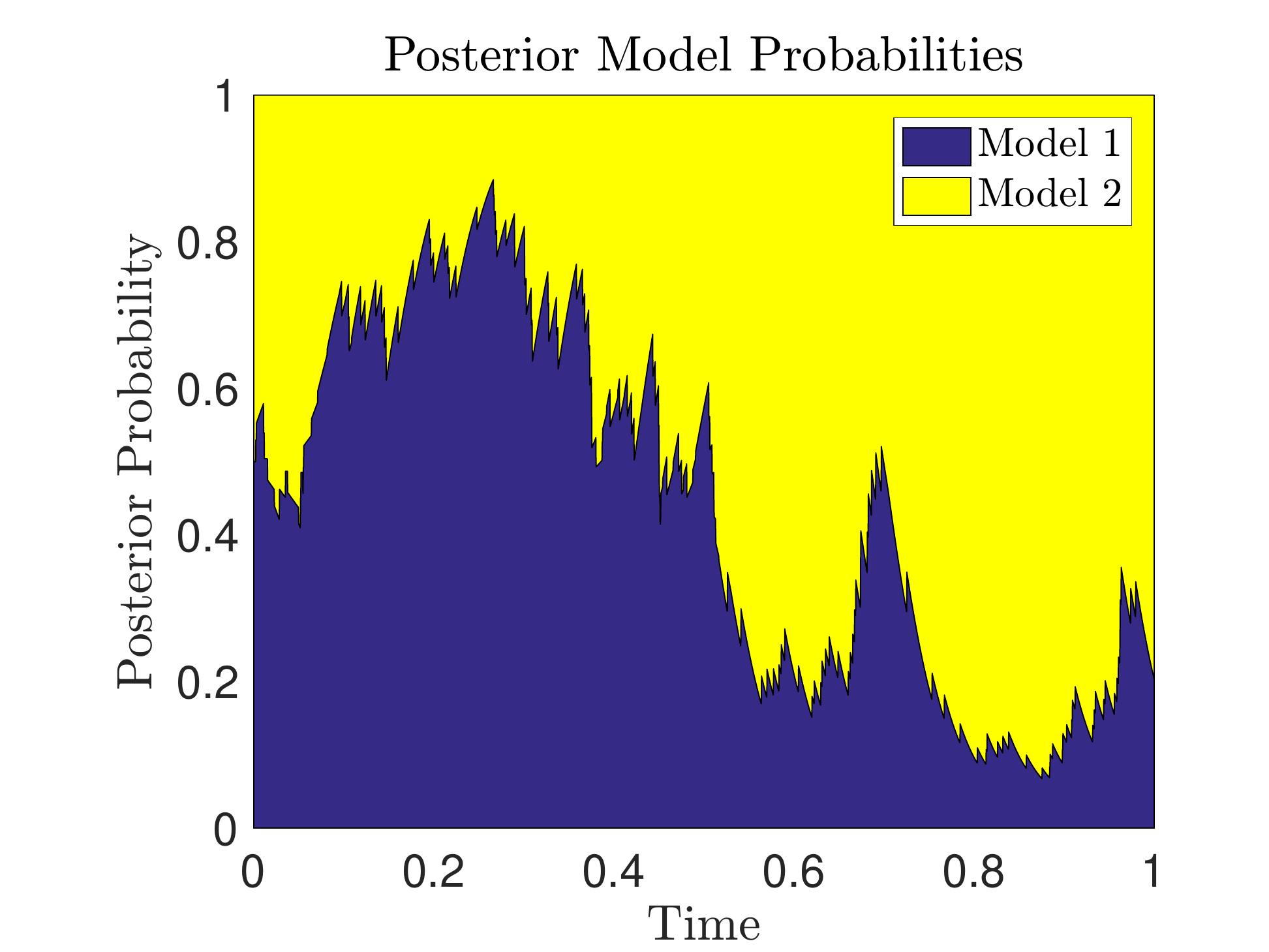}
    \end{subfigure}
    \begin{subfigure}[t]{0.32\textwidth}
        \centering
        \includegraphics[width=\textwidth]{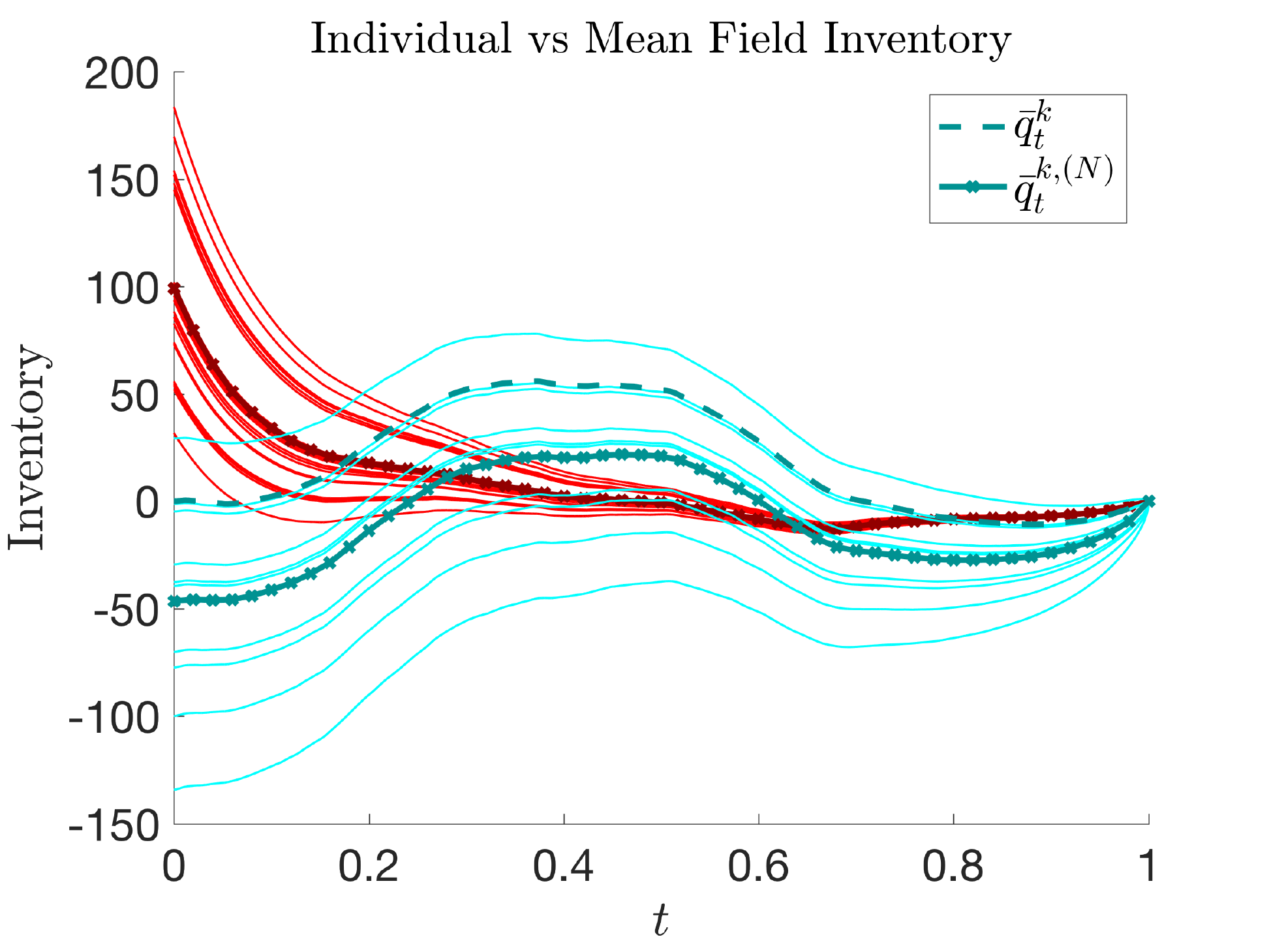}
    \end{subfigure}
    \caption{Simulated paths for the price process, latent process, the agent's filter, and the individual agent's inventory paths. In the right most panel: the red lines represents sub-population 1 and the blue lines represent sub-population 2. The thin lines represent the paths of an agent's inventory, the thick dashed lines represent the average of inventories for all agents within a sub-population. The thick crossed lines represent the mean-field process for a given sub-population.}
    \label{fig: Simulation PureJump Results 1}
\end{figure}

The left panel in Figure~\ref{fig: Simulation PureJump Results 1} shows how the asset price first mean-reverts downwards when $\Theta_t$ is in its lower state, and then mean reverts back up towards $\$5.05$ after $\Theta_t$ switches states at $t=0.5$. We also see from the centre panel of Figure~\ref{fig: Simulation PureJump Results 1} that agents are able to correctly learn the value of the latent process just by observing the paths of the asset price. The agents' posterior begins without a preference for the state, but then evolve to realize that the initial state of the latent process is at $\Theta_t = 4.95$. After $t=0.5$, the agent also identifies the switch in $\Theta_t$ and adjusts their posterior distribution accordingly. The price impact caused by the trading activity of agents can be seen in the left panel of Figure~\ref{fig: Simulation PureJump Results 1} as the difference between $F_t$ and $S_t$, which varies over time.


The right panel of Figure~\ref{fig: Simulation PureJump Results 1} displays the inventory values of agents over the trading period. Agents from sub-population $1$ (red) start with (on average) a higher inventory value than those in sub-population $2$ (blue). Agents in sub-population $1$ have inventories that very quickly converge towards the sub-population mean-field, which itself tends to zero quickly. Agents in sub-population $2$ also have inventories that mean-revert to their sub-population mean-field, but at a much slower rate than those in sub-population $1$. This behaviour is consistent with the observation in Section~\ref{sec:properties-of-the-optimal-control} that an individual agent's inventory and trading rate will always tend towards their sub-population's mean-field levels. Furthermore, the speed of the reversion towards the sub-population mean is controlled by the function $h_{2,t}^k$, which is defined in Proposition~\ref{prop:agents-optimal-control}. Larger values of the parameter $\phi_k$ increase the magnitude of $h_{2,t}^k$, and an increase in the magnitude of $h_{2,t}^k$ increases the speed of the reversion towards the mean-field as pointed out in Section~\ref{sec:properties-of-the-optimal-control}. Therefore, the difference in the speed of mean-reversion for the two sub-populations can be explained by differences in the value of the parameter $\phi_k$.


Moreover, agents in sub-population $2$ trade on the latent alpha considerably more than those in sub-population $1$. Members of sub-population $1$ seek to liquidate their inventories quickly, while members of sub-population $2$ are more inclined to take on inventory exposure due to deviation of the asset price from its filtered mean-reversion level. This also ties back to the value of the parameter $\phi_k$, which controls the agent's risk appetite. Since a lower value of $\phi_k$ corresponds to a higher risk tolerance, we see more alpha trading stemming from sub-population $2$.

The latent alpha trading observed in sub-population $2$ also matches the posterior probabilities (center panel of Figure~\ref{fig: Simulation PureJump Results 1}). When the agents estimate that the latent process is in the lower mean-reversion state, they begin taking on a long position in anticipation of a switch from the lower state to the upper state, which they expect will occur at least once before the end of the trading period. After the switch occurs, they begin reversing their position to a net short, expecting the reverse behaviour. The net short that agents take has a lower magnitude than the net long that agents previously took because less time remains until the end of the trading period, since it is both less likely that they will witness another switch and there is less time remaining before they are forced to completely liquidate their inventories. As the trading period nears its end, we see agents gradually reduce their exposure to zero so that they are flat by the end of the trading period.

\section{Conclusion}
\label{sec:Conclusion}

In this paper, we presented a stochastic game model for a market in which a finite population of players divided into heterogenous sub-populations trades a single asset. Agents have access to incomplete information of the market and of the actions of other agents, and we derive the mean-field game in the limit of infinite number of players. Using techniques from convex analysis allows us to obtain closed-form solutions for the mean-field of each sub-population and the optimal action for each individual agent. We then show that the solution obtained by solving the mean-field game in fact satisfies the $\epsilon$-Nash property in the finite player version of the game. Lastly, we present a simulated example of the finite stochastic game and analyze some of the agent's behaviour.

There are a number of future directions for this research. Here we outline a few directions, which is by no means exhaustive. One direction, which we have already begun investigating, is account for model-heterogeneity. That is, to allow agents in different sub-population to belief in different models, in addition to having heterogeneous preferences. Another direction of research is to restrict agents to trade at stopping times, rather than continuously as done here. Lastly, accounting for model uncertainty/ambiguity aversion along the lines of \cite{cartea2016incorporating} and \cite{huang2017mean} would be a very interesting direction to explore so that agents' strategies become more robust.

\newpage

\appendix

\footnotesize

\section{Proofs for Section~\ref{sec:solving-the-mfg} -- The Optimal Control Problem}

\subsection{Proof of Lemma~\ref{thm:predictable-representation}}
\label{sec:proof-thm-predictable-representation}

\begin{proof}
	First, let us define $F_t = S_t^{\nubar} - \int_0^t \lambda \nubar_u \, du$. Since $\nubar_t\in\HT$ and $\nubar_t$ is $\F$-adapted, it is clear that $F_t \in \HT$ and that $F_t$ is $\F$-adapted.

	Let $\Ahat_t = \E\left[{A_t} \mid \F_t \right]$. By it's definition, $\Ahat_t$ is $\F$-adapted. Furthermore, by Jensen's inequality,
	\begin{align}
		\E \int_0^T \E\left[{A_t} \mid \F_t \right]^2 \, dt
		&\leq
		\E \int_0^T \E\left[ A_t^2 \mid \F_t \right] \, dt
		\;.
	\end{align}
	since the integrand is non-negative, we may apply Fubini's theorem and the tower property,
	\begin{align}
		\E \left[ \E\left[ \int_0^T  A_t^2  \, dt \mid \F_t \right] \right]
		= \E \left[ \int_0^T  A_t^2 \, dt \right] < \infty
		\;,
	\end{align}
	therefore $\Ahat\in\HT$.

	Next, let us define
	\begin{equation}
		\Mhat_t = F_t - \int_0^T \Ahat_t \, dt
		\;.
	\end{equation}
	Since $\Ahat$ and $F$ are $\F$-adapted, $M$ is also $\F$-adapted. Furthermore, for any $t\in[0,T]$
	\begin{align}
		\E[\Mhat_t^2]
		&\leq
		4 \left( \E [F_t^2] + \E \left[\textstyle\int_0^T \Ahat_u^2 \, du \right]\right)
\leq
		32\, \E \left[\textstyle\int_0^T \left(A_u^2 + M_u^2\right) du \right]
		< \infty
		\;,
	\end{align}
	which demonstrates that $\Mhat\in\LT$. Using the dynamics of $F$, for $0\leq t \leq u \leq T$
	\begin{align}
		\E\left[ \Mhat_u - \Mhat_t \mid \F_t \right]
		&= \E \left[ \left.  ( F_u - F_t) - \textstyle\int_u^t \Ahat_s \, ds\, \right| \F_t \right] \\
		&=
		\E \left[ \left.\textstyle\int_u^t ( A_s -  \Ahat_s) \, ds + (M_u - M_t)\,\right|\F_t \right]
=
		\E \left[ \left.\textstyle\int_u^t ( A_s -  \Ahat_s) \, ds \,\right|\F_t\right]
		\;,
	\end{align}
	applying Fubini's theorem and the tower property
	\begin{align}
		\E \left[ \left.\textstyle\int_u^t ( A_s -  \Ahat_s) \, ds \,\right| \F_t \right]
		=
		\E \left[ \left. \int_u^t \E \left[\left. A_s -  \Ahat_s \,\right|\F_s \right] \, ds \,\right| \F_t \right]
		= 0
		\;.
	\end{align}
	Therefore,
	\begin{equation}
		\E\left[ \Mhat_u \mid \F_t \right] = \Mhat_t
		\,
	\end{equation}
	which shows that $\Mhat$ is a martingale.
	From the definitions of $\Mhat$ and $\Ahat$, it is easy to verify that \eqref{eq:SinTermsOfMhat} is satisfied.
\end{proof}

\subsection{Proof of Lemma~\ref{lemma:obj-is-concave}}
\label{sec:appendix-proof-obj-is-concave}

\begin{proof}
	To show that the functionals $\Hbar_j$ are strictly concave, we must show that for any $0<\rho<1$, and $\nu,\omega\in\A^j$ where $(\mathbb{P} \times \mu) ( \nu_t \neq \omega_t) > 0 $, that
	\begin{equation} \label{eq:proof-concave-obj-1}
		\Hbar_j(\rho \, \nu+(1-\rho) \, \omega) - \rho \, \Hbar_j(\nu) - (1-\rho) \, \Hbar_j(\omega) > 0
		\;.
	\end{equation}
First, observe that $q^{j,\cdot}$ is linear in controls:
	\begin{equation*}
		\qj{\rho\nu+(1-\rho)\omega}_t = \rho \, \qj{\nu}_t + (1-\rho) \qj{\omega}_t
		\;,
	\end{equation*}
	for all $0<\rho<1$ and $\nu,\omega\in\A^j$. If we let
	\begin{equation*}
		\bm\Gamma_k =
		\begin{pmatrix}
			a_k & \Psi_k \\ \Psi_k & \phi_k
		\end{pmatrix}
		\;,
	\end{equation*}
	and expand the left side of the inequality \eqref{eq:proof-concave-obj-1}, we may use the linearity of $q^j$ to cancel out constant terms and $\qj{\cdot}_t (\Ahat_t + \lambda \, \nubar_t)$ terms. This yields
	\begin{align*}
	\text{left part of \eqref{eq:proof-concave-obj-1}}
		=&\E\Bigg[ \int_0^T
		 \rho \smallqfm{\nu}^\T \bm\Gamma_k \smallqfm{\nu} \\
		&\quad+
		(1-\rho) \smallqfm{\omega}^\T \bm\Gamma_k \smallqfm{\omega} \\
		&\quad-
		\left( \rho \smallqfm{\nu} + (1-\rho) \smallqfm{\omega} \right)^\T
		\bm\Gamma_k
		\left( \rho \smallqfm{\nu} + (1-\rho) \smallqfm{\omega} \right)
		\, dt \Bigg]
\\
\text{(completing the square)}		=& \E\Bigg[ \int_0^T
		\rho\,(1-\rho)
		\left( \smallqfm{\nu} - \smallqfm{\omega} \right)^\T
		\bm\Gamma_k
		\left( \smallqfm{\nu} - \smallqfm{\omega} \right)
		\, dt \Bigg]
		\;.
	\end{align*}
	expanding the above, and letting $\Delta_t = \nu_t - \omega_t$ and since $q^{\Delta}_t=\qj{\nu}_t-\qj{\omega}_t$,
	\begin{equation} \label{eq:proof-convex-sum-int}
		=
		\rho\,(1-\rho)\,
		\E\left[ \int_0^T
		\left\{
		a_k \Delta_t^2 + \phi_k \left( q_t^{\Delta} \right)^2 + 2\Psi_k \Delta_t q_t^{\Delta}
		\right\} \,dt
		\right] \;.
	\end{equation}
	Since $0\leq\rho\leq1$, we only need to demonstrate that the inside of the expected value is greater than zero. Since $\phi_k \geq 0$, we can guarantee that the middle term in \eqref{eq:proof-convex-sum-int} is $\geq 0$. Next, we may look at the right-most term in equation~\eqref{eq:proof-convex-sum-int}. Since we can write $q_t^{\Delta} = \int_0^t \Delta_u \, du$, integrating by parts yields
	\begin{equation}
		\E\int_0^T 2 \Delta_t q_t^{\Delta} \, dt
		=
		\E\left[
		\left( q_T^\Delta \right)^2 \right]
		\geq 0
		\;.
	\end{equation}
	Since $\Psi_k \geq 0$, this last result implies that the right-most term in \eqref{eq:proof-convex-sum-int} is $\geq 0$. Lastly, notice that if $(\mathbb{P} \times \mu) ( \nu_t \neq \omega_t) > 0 $, then
	\begin{equation}
		\E\left[ \int_0^T
		\Delta_t^2
		\; dt \right] >0
		\;.
	\end{equation}
	Since $a_k>0$, this last comment shows that \eqref{eq:proof-convex-sum-int} is strictly greater than zero.

\end{proof}

\subsection{Proof of Lemma~\ref{lemma:GateauxDeriv}}
\label{sec:appendix-proof-lemma-differentiable}

\begin{proof}
	Using the definition of the G\^ateaux derivative,
	\begin{equation}
		\left\langle \D\Hbar_j(\nu) , \omega \right\rangle =
		\lim_{\epsilon\searrow0}
		\frac{\Hbar_j (\nu + \epsilon\,\omega) - \Hbar_j (\nu)}{\epsilon}
	\end{equation}
	we will show that this limit exists and is equal to the result provided in the lemma. Using the representation for the objective $\Hbar_j$~\eqref{eq:F-adapted-objective} and canceling out the $t=0$ terms and using the linearity of the process $\qj{\nu}_t - \qj{\nu}_0$ in the variable $\nu$, we have
\begin{align}
	\Hbar_j(\nu+\epsilon\,\omega) - \Hbar_j(\nu)
	&=
	\epsilon \,
	\E\left[
	\int_0^T
	\left\{
	( \qj{\omega}_t - \qj{\omega}_0 ) (\Ahat_t + \lambda \, \nubar_t)
	- 2
	\left(\begin{smallmatrix} \nu_t \\ \qj{\nu}_t \end{smallmatrix}\right)^\T
	\bm\Gamma_k	
	\left(\begin{smallmatrix} \omega_t \\ \qj{\omega}_t - \qj{\omega}_0 \end{smallmatrix}\right)
	\, dt
	\right\}
	\right]
	\\&-
	\epsilon^2
	\E\left[
	\int_0^T
	\left(\begin{smallmatrix} \omega_t \\ \qj{\omega}_t - \qj{\omega}_0 \end{smallmatrix}\right)^\T
	\bm\Gamma_k
	\left(\begin{smallmatrix} \omega_t \\ \qj{\omega}_t - \qj{\omega}_0 \end{smallmatrix}\right)
	\, dt
	\right]
	\;,
\end{align}
where
\begin{equation*}
	\bm\Gamma_k =
	\begin{pmatrix}
		a_k & \Psi_k \\ \Psi_k & \phi_k
	\end{pmatrix}
	\;.
\end{equation*}
Dividing by $\epsilon$ and taking the limit yields
\begin{equation} \label{eq:j-gateax-1}
	\left\langle \D \Hbar_j(\nu) , \omega \right\rangle
	=
	\E\left[
	\int_0^T
	\left\{
	(\qj{\omega}_t - \qj{\omega}_0 )( \Ahat_t + \lambda \nubar_t )
	- 2
	\left(\begin{smallmatrix} \nu_t \\ \qj{\nu}_t \end{smallmatrix}\right)^\T
	\bm\Gamma_k	
	\left(\begin{smallmatrix} \omega_t \\ \qj{\omega}_t - \qj{\omega}_0 \end{smallmatrix}\right)
	\right\} \, dt
	\right]
	\;.
\end{equation}
Expanding the right part of the integrand in \eqref{eq:j-gateax-1} and re-grouping terms,
\begin{equation} \label{eq:j-gateax-2}
	\left\langle \D \Hbar_j(\nu) , \omega \right\rangle
	=
	\E\left[
	\int_0^T
	( \qj{\omega}_t - \qj{\omega}_0 ) \left( \Ahat_t + \lambda \nubar_t - 2 ( \phi_k \qj{\nu}_t + \Psi_k \nu_t ) \right)
	 dt
	-2
	\int_0^T
	\omega_t \left(
	a_k \nu_t + \Psi_k \qj{\nu}_t
	\right)  dt
	\right].
\end{equation}

Since $\nu,\omega\in\A^j$ and $\nubar,\hat{A}\in\HT$, the sufficient conditions for Fubini's theorem are met. Applying Fubini's theorem and the tower property
\begin{align}
\left\langle \D \Hbar_j(\nu) , \omega \right\rangle	 &=
	\int_0^T
	\E\left[
	\omega_t \, \left(
	-2a_k \nu_t - 2 \Psi_k \qj{\nu}_T
	+ \int_t^T \left\{  \Ahat_t + \lambda \nubar_t - 2\phi_k \qj{\nu}_u \right\} \,du  \right)
	\right] \, dt
	\\ &=
	\int_0^T
	 \,
	\E\left[
	\omega_t \,
	\E\left[
	-2a_k \nu_t - 2 \Psi_k \qj{\nu}_T
	+ \int_t^T \left\{ \Ahat_t + \lambda \nubar_t - 2\phi_k \qj{\nu}_u  \right\}\,du
	\, \Bigg\lvert \, \F_t
	\right]
	\right] \, dt
	\\ &=
	\E\left[
	\int_0^T
	\omega_t \,
	\E\left[
	-2a_k \nu_t - 2 \Psi_k \qj{\nu}_T
	+ \int_t^T \left\{ \Ahat_t + \lambda \nubar_t - 2\phi_k \qj{\nu}_u  \right\} \,du
	\, \Bigg\lvert \, \F_t
	\right] \, dt
	\right]
	\;,
\end{align}
which gives the desired result.
\end{proof}

\subsection{Proof of Proposition~\ref{thm:optim-bsde-prop}}
\label{sec:proof-thm-optim-bsde-prop}

\begin{proof}
	By using lemmas~\ref{lemma:obj-is-concave} and~\ref{lemma:GateauxDeriv} we may apply the results of \cite[Section 5]{ekeland1999convex} which state that if
	\begin{equation}
		\langle \D \Hbar_j(\nujst) , \omega \rangle = 0
	\end{equation}
	for all $\omega\in\A^j$ if and only if
	\begin{equation}
		\nujst = \arg\sup_{\nujst\in\A^j} \Hbar_j(\nu)
		\;,
	\end{equation}
	the strict concavity of $\Hbar$ implies that $\nujst$ must be unique up to $\P\times\mu$ null sets. Therefore all we need to demonstrate is that the derivative vanishes if and only if it is the solution to the stated FBSDE.

	\textbf{Sufficiency:} Let us suppose that $\nujst$ is the solution to the FBSDE in the statement of the proposition and that $\nujst\in\HT$. We need to show that $\nujst\in\A^j$ and that it makes the G\^ateaux derivative vanish.

	First, let us note that we may represent the solution to the FBSDE implicitly as
	\begin{equation} \label{eq:proof-prop-optim-FBSDE-eq0}
		2 \, a_k \, \nujst_t = \E \left[ - 2 \,\Psi_k \,\qj{\nujst}_T
	+ \int_t^T \left\{ \Ahat_u + \lambda \,\nubar_u - 2\phi_k\, \qj{\nujst}_u  \right\} \,du \Big\lvert \F_t^j \right]
	\;,
	\end{equation}
	which demonstrates that $\nujst$ is $\F^j$-adapted. Therefore, since $\nujst\in\HT$ and $\nujst$ is $\F^j$-adapted, we have that $\nujst \in \A^j$.

	Lastly we show that $\nujst$ makes the G\^ateaux derivative vanish. By plugging \eqref{eq:proof-prop-optim-FBSDE-eq0} into the expression for the G\^ateaux derivative from lemma~\ref{lemma:GateauxDeriv} and using the tower property, we find that it vanishes almost surely.

	\textbf{Necessity:} Let us suppose that $\langle \D \Hbar_j(\nujst) , \omega \rangle = 0$ for all $\omega\in\A^j$. This implies that
	\begin{equation} \label{eq:proof-prop-optim-FBSDE-eq1}
	\E\left[
	-2a_k {\nujst}_t - 2 \Psi_k \qj{\nujst}_T
	+ \int_t^T \left\{ \Ahat_u + \lambda \nubar_u - 2\phi_k \qj{\nujst}_u  \right\} \,du \Big\lvert \F_t^j \right] = 0
	\end{equation}
	$\P\times\mu$ almost everywhere.

To see this, suppose that $\langle \D \Hbar_j(\nujst) , \omega \rangle = 0$ for all $\omega\in\A^j$, but \eqref{eq:proof-prop-optim-FBSDE-eq1} does not hold. Then, choose
	\begin{equation} \label{eq:proof-prop-optim-FBSDE-eq2}
		\omega_t = \E\left[
	-2a_k {\nujst}_t - 2 \Psi_k \qj{\nujst}_T
	+ \int_t^T \left\{ \Ahat_u + \lambda \nubar_u - 2\phi_k \qj{\nujst}_u  \right\} \,du \Big\lvert \F_t^j \right]
	\;.
	\end{equation}
	First, it is clear that this choice of $\omega$ is $\F_t^j$ adapted by its very definition. Second, using the fact that $\nubark \in \HT$, by using Jensen's inequality and the triangle inequality on \eqref{eq:proof-prop-optim-FBSDE-eq2}, we can obtain the bound
	\begin{align*}
		\E \int_0^T ( \omega_t )^2 \, dt
		&\leq
		4 \left( a_k^2 + T \, \Psi_k^2 + T^2 \phi_k^2  \right) \E \int_0^T ( \nujst_t )^2 \, dt
		\, +
		\E \int_0^T \left( \Ahat_t^2 + \lambda^2 \nubar^2_t \right) \, dt
		\\ &<\infty
		\;,
	\end{align*}
	which implies that $\omega\in \HT$ and therefore $\omega\in\A^j$. When we plug this choice of $\omega$ into the expression for the  G\^ateaux derivative, we see that $\langle \D\Hbar_j(\nujst) , \omega \rangle > 0$, which contradicts the assumption that $\langle \D \Hbar_j(\nujst) , \omega \rangle = 0$ for all $\omega\in\A^j$.

	Using \eqref{eq:proof-prop-optim-FBSDE-eq1} and noting that $\nujst_t\in\F_t^j$, we may write
	\begin{equation}
		2 \, a_k \, \nujst_t = \E \left[ - 2 \,\Psi_k \,\qj{\nujst}_T
	+ \int_t^T \left\{ \Ahat_u + \lambda \,\nubar_u - 2\,\phi_k \,\qj{\nujst}_u  \right\} \,du \,\Big\lvert \F_t^j \right]
	\;,
	\end{equation}
	and
	\begin{equation}
		2 \, a_k \, \mMbar^j_t = \E \left[ - 2 \,\Psi_k \,\qj{\nujst}_T
	+ \int_0^T \left\{ \Ahat_u + \lambda \,\nubar_u - 2\,\phi_k \,\qj{\nujst}_u  \right\} \,du \,\Big\lvert \F_t^j \right]
	\;,
	\end{equation}
	which solves the FBSDE in the statement of the proposition.
\end{proof}

\section{Proofs for Section~\ref{sec:solving-the-mfg} -- Solving the BSDEs}

\subsection{Proof of Proposition~\ref{prop:solution-g1-g2}}
\label{sec:proof-solution-g1-g2}

\begin{proof}
	The proof will is split in the following parts: We show that
	\begin{enumerate}[(a)]
		\item $\bg_{2,t}$ defined in the statement of the proposition is a bounded and is the unique solution the Riccati ODE~\eqref{eq:riccati-ODE}.
		\item $\bg_{1,t}$ defined in the statement of the proposition is the solution to the BSDE~\eqref{eq:linear-BSDE}.
		\item $\E\left[\int_0^T \, \bg_{1,t}^\T \bg_{1,t} \, dt \right] < \infty$.
	\end{enumerate}
	\textbf{Part (a).} Let us first point out that the ODE~\eqref{eq:riccati-ODE} is a matrix-valued non-symmetric Riccati-type ODE. We prove the claims concerning the ODE~\eqref{eq:riccati-ODE} by applying theorems and tools for non-symmetric Riccati ODEs found in the set of papers~\cite{freiling2000non} and \cite{freiling2002survey}. First of all, let us define $\bgt_{2,t}=\bg_{2,T-t}$.

	We will show that all of the claims   hold for $\bgt_{2,t}$, and hence also for $\bg_{2,t}$. From ODE~\eqref{eq:riccati-ODE}, we find that
	\begin{equation} \label{eq:proof-prop-riccati-ODE-backwards}
\left\{
		\begin{array}{rl}
		\partial_t \bgt_{2,t} \!\!\!&=
		 \left( \bm{\Lambda} + \bgt_{2,t} \right) \left( 2 \ba\right)^{-1} \bgt_{2,t}
		- 2\bm{\phi}
		\\
		\bgt_{2,0} \!\!\!&= -2\bm{\Psi}
		\end{array}{rl}
\right.
	\;.
	\end{equation}
	Our objective is now to apply \cite[Theorem~2.3]{freiling2000non} on $\tilde\bg_{2,t}$ to show the existence and boundedness of a solution. Using the notation of \cite{freiling2000non}, we define
	\begin{equation}
		B_{11} = \bm{0} ,\; B_{12} = -J ,\;
		B_{21} = - 2\bm{\phi} ,\; B_{22} = \bm{\Lambda} J
		\;.
	\end{equation}
	and $W_0 = -2\bm{\Psi}$, where $J = (2\ba)^{-1}$. To meet the requirements of theorem~2.3 in \cite{freiling2000non}, we must find $C,D \in \mathbb{R}^{K\times K}$, $C=C^\T$ so that $L + L^\T \leq 0$ and $C + D W_0 + W_0^\T D^\T > 0$, where
	\begin{equation}
		L =
		\begin{pmatrix}
			- 2 D\bm{\phi} & -C J + D \bm{\Lambda} J \\
			0 & -J^\T D
		\end{pmatrix}
		\;.
	\end{equation}

	Let $D = \bm{I}^{(K\times K)}$ and $C = 5\bm{\Psi}$. With these choices of $C,D$, and using the fact that $\Psi$ is a diagonal matrix with positive entries, we find that
	\begin{equation}
		C + D W_0 + W_0^\T D^\T = \bm{\Psi} > 0
		\;,
	\end{equation}
	which meets one of the necessary conditions. The choices of $C$ and $D$ also imply that the matrix $L$ takes the form
	\begin{equation}
		L =
		\begin{pmatrix}
			- 2 \bm{\phi} & - (5 \bm{\Psi} + \bm{\Lambda}) J \\
			0 & -J
		\end{pmatrix}
		\;.
	\end{equation}

	Now, let us note that $\det(L) = \det(-2\bm{\phi}) \times \det(-J)$. This directly implies that the set of eigenvalues of $L$ is the union of the set of eigenvalues of $-2\bm{\phi}$ and those of $-J$. Since $-2\bm{\phi} \leq 0$ and $-J < 0$, all of the eigenvalues of $L$ are guaranteed to be non-positive and at least one of them is guaranteed to be non-zero, which implies that $L<0$. Hence, $L + L^\T < 0$ which meets the second condition of \cite[Thm. 2.3]{freiling2000non}, and guarantees the existence of a solution to the ODE~\ref{eq:proof-prop-riccati-ODE-backwards} and hence of \eqref{eq:riccati-ODE}.

	Since the solution to $\bm{g}_{2,t}$ exists and is continuous on the interval $[0,T]$, it follows that it is also bounded on this interval. Furthermore, the existence and boundedness of the solution and \cite[Thm 3.1]{freiling2002survey} guarantees that the solution is also unique. Using the representation $\bgt_{2,t} = P_t \, Q_t^{-1}$ from \cite{freiling2002survey} and solving the appropriate linear ODE system for each, we obtain the solution presented in the statement of the theorem.

	\textbf{Part (b).} In this part we show that $\bg_1$ presented in the statement of the proposition solves the linear BSDE~\eqref{eq:linear-BSDE}. First let us consider the process $\bxi=(\bxi_t)_{t\in[0,T]}$ with $\bxi_t\in\R^{K\times K}$, defined as
	\begin{equation}
		\bxi_t = \bm{:} e^{\int_0^t \, \left( \bLambda + \bg_{2,s} \right) \left( 2 \ba \right)^{-1} \, da }\bm{:}
		\;,
	\end{equation}
	which is the unique solution to the matrix-valued ODE,
	\begin{equation} \label{eq:proof-solution-g1-g2-eq1}
		-d\bxi_t = \, -\bxi_{t} \left( \bLambda + \bg_{2,t} \right) \left( 2 \ba \right)^{-1}  \, dt,
	\end{equation}
	with the initial condition $\bxi_0= \bm{I}^{(K\times K)}$, where $\bm{I}^{(K\times K)}\in \R^{K\times K}$ is the identity matrix. Using the above ODE and the BSDE~\eqref{eq:linear-BSDE} to compute the dynamics of the process $\bxi_t \, \bg_{1,t}$, we find that
	\begin{equation}
		d(\bxi_t \, \bg_{1,t}) = \bxi_t \bm{1}^{(K\times 1)} \Ahat_t \, dt
		-\bxi_t \, d\bmMt_t
		\;,
	\end{equation}
	with the boundary condition $\bxi_T \, \bg_{1,T} = \bm{0}^{(K\times K)}$. We may solve the BSDE above explicitly to yield
	\begin{equation}
		\bxi_t \, \bg_{1,t} =
		\E \left[
		\int_t^T \, \bxi_u \bm{1}^{(K\times 1)} \Ahat_u \, du
		\, \Big\lvert \,  \F_t^j
		\right]
		\;.
	\end{equation}
	Since $\bxi_t$ is guaranteed to be positive definite, we multiply by $\bxi_t^{-1}$ on both sides to obtain the solution for $\bg_{1,t}$,
	\begin{equation}
		\bg_{1,t} =
		\E \left[
		\int_t^T \, \bxi_t^{-1} \bxi_u \, \bm{1}^{(K\times 1)} \Ahat_u \, du
		\, \Big\lvert \, \F_t^j
		\right]
		\;,
	\end{equation}
	where we may replace $\bxi_t^{-1} \bxi_u$ by the ordered exponential $\bm{:} e^{\int_t^u \, \left( \bLambda + \bg_{2,s} \right) \left( 2 \ba \right)^{-1} \, da }\bm{:}$
	to obtain the final solution.

	\textbf{Part (c).} Let $\lVert \cdot \rVert$ represent the euclidean norm in $\R^K$. Since $\bg_{2,t}$ is a bounded function, the time-ordered exponential $\bxi_t$ is positive definite and bounded over $[0,T]$. Therefore there exists a constant $c>0$ so that for any column vector $\bm{x}\in\R^{K}$
	\begin{equation}
	 	\sup_{t,u\in[0,T]} \lVert \bxi_t^{-1} \, \bxi_u \, \bm{x} \rVert^2 \leq c \, \lVert \bm{x} \rVert^2
	 	\;.
	\end{equation}
	Applying Jensen's inequality and Fubini's theorem, along with this last result, to solution for $\bg_{1,t}$, we find
	\begin{align*}
		\E \int_0^T \lVert \, \bg_{1,t} \, \rVert^2 \, dt
		&\leq
		\E \int_0^T \E \left[  \int_{t}^T \,
		\lVert
		\bxi_t^{-1} \bxi_u \, \bm{1}^{(K\times 1)} \Ahat_u
		\rVert^2 \, du \, \Big\lvert \, \F_t^j
		\right]\, dt
		\\ &\leq
		c K \,
		\E \int_0^T \E \left[ \int_{t}^T \,
		(\Ahat_u)^2 \, du \, \Big\lvert \, \F_t^j
		\right] \, dt
		\\ &\leq
		c K T \, \E
		\int_{0}^T \,
		(\Ahat_u)^2
		\, du
		< \infty
		\;,
	\end{align*}
	as desired.
\end{proof}


\subsection{Proof of Proposition~\ref{thm:ansatz-prop-summary}}
\label{sec:proof-ansatz-prop-summary}
\begin{proof}
	Plugging in the ansatz
	\begin{equation}
		\bnut_t = ( 2 \ba )^{-1} \left( \bg_{1,t} + \bg_{2,t} \, \bqt_t \right)
	\end{equation}
	into the FBSDE~\eqref{eq:stacked-FBSDE} yields the equation \eqref{eq:ansatz-dynamics}, which vanishes since $\bg_{1}$ and $\bg_{2}$ solve \eqref{eq:linear-BSDE} and \eqref{eq:riccati-ODE}, respectively. By the definitions of $\bg_{1}$ and $\bg_2$ the boundary condition is satisfied, and therefore $\bnut_t$ above solves the FBSDE~\eqref{eq:stacked-FBSDE}.

	Since the FBSDE~\eqref{eq:mean-field-ansatz-FBSDE} is just row $k$ of the vector FBSDE~\eqref{eq:stacked-FBSDE}, $\nutk$ is trivially the solution to \eqref{eq:mean-field-ansatz-FBSDE}.

	Lastly, we must show that $\nutk\in\bigcap_{j=1}^\infty \A^j$. Inspecting the definition of $\A^j$ and $\F_t^j$, we find that
	\begin{equation}
		\bigcap_{j=1}^\infty \A^j =\left\{ \nu\text{ is }\F_t\text{-predictable}\,,\; \nu \in \HT \right\}
		\;.
	\end{equation}
	Therefore, we must show that $\nutk\in\HT$ and that $\nutk$ is $\F$-predictable. First of all, since $\bg_{2,t}$ is deterministic and $\bg_{1,t}$ is $\F$-predictable, it is clear that $\nutk$ is $\F$-predictable. Next, notice that $\nutk\in\HT$ if
	\begin{equation}
		\E \int_0^T \bnut_t^\T \, \bnut_t \, dt < \infty
		\;.
	\end{equation}
	Using the fact that $d\bqt_t = \bnut_t \, dt$, we find that
	\begin{equation}
		d\bqt_t = ( 2 \ba )^{-1} \left( \bg_{1,t} + \bg_{2,t} \, \bqt_t \right) \, dt
		\;.
	\end{equation}
	Solving this SDE yields
	\begin{equation}
		\bqt_t = \bqt_0 \, \bxi_t + ( 2 \ba )^{-1} \int_0^t \, \bxi_u \, \bg_{1,u} \, du
	\end{equation}
	where
	\begin{equation}
		\bxi_t = \bm{I}^{K \times K} + \int_0^t \bxi_u \, ( 2 \ba )^{-1}  \bg_{2,u} \, du
		\;,
	\end{equation}
	for all $t>0$. Since $\ba$ is positive definite and $\bg_{2,t}$ is bounded, we find that $\bxi_t$ must also be continuous and bounded over $[0,T]$. Therefore, using the boundedness of $\bxi$, the triangle inequality and Jensen's inequality, there exists a constant $C>0$ such that
	\begin{align*}
		\bqt_t^\T \bqt_t
		&\leq C \left( \bqt_0^\T  \, \bqt_0  + \int_0^t \bg_{1,u}^\T \bg_{1,u} \, du \right)
		\\ &\leq
		 C \left( \bqt_0^\T  \, \bqt_0  + \int_0^T \bg_{1,u}^\T \bg_{1,u} \, du \right)
		\;.
	\end{align*}
	Now, integrating and taking the expected value,
	\begin{align*}
		\E \int_0^T \bqt_t^\T \bqt_t \, dt
		& \leq C T \left( \bqt_0^\T  \, \bqt_0  + \int_0^T \bg_{1,u}^\T \bg_{1,u} \, du \right)\;.
	\end{align*}
	Noting that $\bqt_0$ is bounded and that $\E\int_0^T \bg_{1,t}^\T \, \bg_{1,t} \, dt
	, \infty$, we find that
	\begin{equation}
		\E \int_0^T \bqt_t^\T \bqt_t \, dt < \infty
		\;.
	\end{equation}

	Now using this result and applying the triangle inequality and Jensen's inequality to the expression for $\bnut$,
	\begin{align*}
		\E \int_0^T \bnut_t^\T \, \bnut_t \, dt
		&\leq (2\ba)^{-2}
		\left(
		\E \int_0^T \bg_{1,t}^\T \bg_{1,t} \, dt
		+
		\E \int_0^T (\bg_{2,t} \bqt_t)^\T ( \bg_{2,t} \bqt_t ) \, dt
		\right)
		\\ &\leq
		\left(
		\E \int_0^T \bg_{1,t}^\T \bg_{1,t} \, dt
		+
		C \, \E \int_0^T \bqt_t^\T \bqt_t \, dt
		\right)
		< \infty
		\;,
	\end{align*}
	where we use the boundedness of $\bg_{2,t}$ in the second line to obtain $C>0$, and thus obtaining the desired result.
\end{proof}

\subsection{Proof of Proposition~\ref{prop:agents-optimal-control}}
\label{sec:proof-agents-optimal-control}

\begin{proof}
	To prove the claims made in the statement of the proposition, we must show that the stated form of $\nu_t^j$ solves the FBSDE~\eqref{eq:agent-FBSDE-limit-mfansatz}. First, by plugging in the ansatz
\begin{equation}
	\nuj_t = \nutk_t + \frac{1}{2 a_k} h_{2,t}^k \left( \qj{\nuj}_t - \qtk{\nutk}_t \right)
\end{equation}
into the FBSDE, we obtain the simplification
\begin{align*}
	-2 a_k \, d\nutk_t - d h_{2,t}^k \left( \qj{\nuj}_t - \qtk{\nutk}_t \right)
	- \frac{1}{4 a_k^2} \left( h_{2,t}^k \right)^2 \left( \qj{\nuj}_t - \qtk{\nutk}_t \right) \, dt
	=
	\left( \Ahat_t + \blambda^\T \, \bnut_t  - 2\phi_k \qj{\nuj}_t \right) \,dt
	- d\mathcal{M}_t^j
	\;.
\end{align*}
Plugging in the FBSDE for $\nutk_t$ from equation~\eqref{eq:mean-field-ansatz-FBSDE} and choosing $\mathcal{M}_t^j = \mMbar^k_t$, we can cancel out terms and obtain the equation
\begin{equation}
	0 =
	\left( \qj{\nuj}_t - \qtk{\nutk}_t \right)
	\left\{ d h_{2,t}^k + \left( \frac{1}{4 a_k^2} \left( h_{2,t}^k \right)^2 - 2 \phi \right) \, dt \right\}
	\;,
\end{equation}
which must hold almost surely for all values of $(\qj{\nuj}_t - \qtk{\nutk}_t)$. Therefore, solving for $h_{2,t}^k$ which will make the terms inside of the curly brackets vanish will also solve  FBSDE~\eqref{eq:agent-FBSDE-limit-mfansatz}. Therefore, setting the terms inside of the curly brackets to zero and inserting the appropriate boundary condition, we get the ODE
\begin{equation}
	\begin{cases}
		&-d h_{2,t}^k = \left( \frac{1}{4 a_k^2} \left( h_{2,t}^k \right)^2 - 2 \phi \right) \, dt
		\\
		&\hspace{1.2em}h_{2,T}^k = -2 \Psi^k
	\end{cases}
	\;.
\end{equation}
This last ODE is of the well studied Riccati-type with the solution presented in the statement of the theorem.

Next, we wish to demonstrate that $h_{2,t}^k \geq 0$. First, let us notice that since $t<T$ and $\gamma_k\geq 0$ that $\sinh(-\gamma_k (T-t)) \leq 0$ and $\cosh(-\gamma_k (T-t)) \geq 1$. Since $\xi_k, \Psi_k \geq 0$ we then get that
\begin{equation}
	\frac{
		\Psi_k \cosh\left( -\gamma_k (T-t)\ \right)
		- \xi_k \sinh\left( -\gamma_k (T-t)\ \right) }{
		 \xi_k \cosh\left( -\gamma_k (T-t)\ \right)
		- \Psi_k \sinh\left( -\gamma_k (T-t)\ \right)
		}
		\geq 0
		\;,
\end{equation}
and the desired result follows.

Lastly, we wish to show that $\nuj\in\A^j$. First, note that $\nuj$ it is sufficient to show that $\nuj - \nutk \in \A^j$, since $\nutk_t\in\A_j$. First, let $\Delta_t = \qj{\nuj}_t - \qtk{\nutk}_t$. From the statement of the proposition we get that
	\begin{equation}
		d\Delta_t = \frac{h_{2,t}^k}{2 \, a_k} \Delta_t \, dt
		\;,
\end{equation}
with the boundary condition $\Delta_0 = \mfQ_0^j - \invmean_0^k$. Since the $\frac{h_{2,t}^k}{2 \, a_k}$ is deterministic and the boundary condition is $\F^j$-adapted, it is clear that $\Delta_t$ is $\F^j$-adapted. We may solve the SDE directly to yield the solution
\begin{equation}
	\Delta_t = \left( \mfQ_0^j - \invmean_0^k \right) e^{\int_0^t \frac{h_{2,u}^k}{2 \, a_k} \, du}
	\;.
\end{equation}
Since $\mfQ_0^j$ has a bounded variance and $h_{2,t}$ is a bounded function, it is clear that $\Delta\in\HT$. Hence , $\Delta\in\A^j$. Now because $\nuj_t - \nutk_t = \frac{h_{2,t}^k}{2 \, a_k} \Delta_t$, and $h_{2,t}$ is a bounded and deterministic function, we find that $\nuj - \nutk \in \A^j$.
\end{proof}

\subsection{Proof of Theorem~\ref{thm:mean-field-is-true}}
\label{sec:proof-mean-field-is-true}

We begin by introducing the following lemma, which will be used in the proof of Theorem~\ref{thm:mean-field-is-true}.

\begin{lemma} \label{lemma:mean-field-ansatz-diff}
	 Let $\nu_t^j$ be the ansatz mean-field optimal control defined in Proposition~\ref{thm:ansatz-prop-summary} for an agent $j$ in sub-population $k$. Then
	 \begin{equation}
	 	\qj{\nuj}_t - \qtk{\nutk}_t = \left( \mfQ_0^j - \invmean_0^k \right) e^{\frac{1}{2 a_k} \int_0^t h_{2,u}^k \, du}
	 	\;,
	 \end{equation}
	 where $\mfQ_0^k$ is the initial value of $j$'s inventory, $\invmean_0^k = \E \mfQ_0^j$, and $h_{2,t}^k$ is the function defined in proposition~\ref{prop:agents-optimal-control} satisfying the property $h_{2,t} < 0$.
\end{lemma}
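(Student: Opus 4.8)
The plan is to reduce the claim to solving a simple deterministic linear ordinary differential equation for the difference process $\Delta_t := \qj{\nuj}_t - \qtk{\nutk}_t$. First I would recall the explicit form of the optimal control established in Proposition~\ref{prop:agents-optimal-control}, namely $\nuj_t = \nutk_t + \tfrac{1}{2a_k}h_{2,t}^k\left(\qj{\nuj}_t - \qtk{\nutk}_t\right)$, which rearranges to $\nuj_t - \nutk_t = \tfrac{1}{2a_k}h_{2,t}^k\,\Delta_t$.

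Next, since both inventory processes obey the dynamics $dq=\nu\,dt$ (explicitly $d\qj{\nuj}_t = \nuj_t\,dt$ and $d\qtk{\nutk}_t = \nutk_t\,dt$), subtracting gives the linear equation
\[
	d\Delta_t = \left( \nuj_t - \nutk_t \right)\,dt = \frac{h_{2,t}^k}{2\,a_k}\,\Delta_t\,dt,
\]
with initial condition $\Delta_0 = \qj{\nuj}_0 - \qtk{\nutk}_0 = \mfQ_0^j - \invmean_0^k$, using that agent-$j$'s inventory starts at $\mfQ_0^j$ while the sub-population ansatz inventory starts at $\invmean_0^k = \E[\mfQ_0^j]$.

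Finally, because $h_{2,t}^k$ is a deterministic bounded function (as shown in Proposition~\ref{prop:agents-optimal-control}), the coefficient $h_{2,t}^k/(2a_k)$ is deterministic and integrable, so the difference equation is an honest scalar linear ODE solvable pathwise by the integrating-factor method, yielding $\Delta_t = \Delta_0\,\exp\!\left(\tfrac{1}{2a_k}\int_0^t h_{2,u}^k\,du\right)$. Substituting $\Delta_0 = \mfQ_0^j - \invmean_0^k$ produces exactly the claimed identity. There is essentially no genuine obstacle here, since the analytic content is already contained in Proposition~\ref{prop:agents-optimal-control}; the only point worth emphasizing is that the \emph{deterministic} nature of $h_{2,t}^k$ makes the equation for $\Delta_t$ a pathwise linear ODE, so the exponential solution is immediate and no stochastic (It\^o) argument is required.
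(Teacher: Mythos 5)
Your proposal is correct and follows essentially the same route as the paper's own proof: both start from the feedback form of $\nu^j_t$ in Proposition~\ref{prop:agents-optimal-control}, subtract the two inventory dynamics to obtain the scalar linear ODE $d\Delta_t = \frac{h_{2,t}^k}{2a_k}\Delta_t\,dt$ with $\Delta_0 = \mfQ_0^j - \invmean_0^k$, and solve it pathwise by an integrating factor. Your added remark that the deterministic, bounded nature of $h_{2,t}^k$ makes this a pathwise ODE (so no It\^o argument is needed) is a fair clarification of a point the paper leaves implicit, but it does not change the argument.
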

\begin{proof}
From Proposition~\ref{prop:agents-optimal-control}, we have
\begin{equation}
	\nu_t^j = \nutk_t + f_t^k \left( \qj{\nuj}_t - \qtk{\nutk}_t \right)
	\;,
\end{equation}
where we let $f^k_t = \frac{h_{2,t}^k}{2 a_k}$. Using the above equation and noting that $\partial_t \left( \qj{\nuj}_t - \qtk{\nutk}_t \right) = \nu_t^j - \nutk_t$, we get that
\begin{equation}
	\partial_t \left( \qj{\nuj}_t - \qtk{\nutk}_t \right) = f_t^k \left( \qj{\nuj}_t - \qtk{\nutk}_t \right)
	\;.
\end{equation}
Solving the above ODE with the initial condition $q_0^k = \mfQ_0^k$ and $\qt_0^k = \invmean_0^k$ yields the desired result.
\end{proof}

Now we proceed with the proof of Theorem~\ref{thm:mean-field-is-true}.

\begin{proof}
	To prove the first result of the theorem, we study the difference $\nubark_t - \nutk_t$, where
	\begin{equation}
		\nubark_t = \lim_{N\rightarrow\infty}\frac{1}{N_k^{\N}} \sum_{j\in\mfK_k^{\N}}
		\nuj_t
		\;.
	\end{equation}
	Using the ansatz for $\nuj_t$ from Proposition~\ref{prop:agents-optimal-control}, we get
	\begin{equation}
		\nuj_t - \nubark_t = \frac{h_{2,t}^k}{2 a_k} \left( \qj{\nuj}_t - \qtk{\nutk}_t \right)
		\;.
	\end{equation}
	Using the result from Lemma~\ref{lemma:mean-field-ansatz-diff}, this becomes
	\begin{equation}
		\nuj_t - \nubark_t =  \left( \mfQ_0^j - \invmean_0^k \right) f_t^k
	\;,
	\end{equation}
	where $f_t^k = \frac{h_{2,t}^k}{2 a_k} \, e^{\frac{1}{2 a_k} \int_0^t h_{2,u}^k \, du}$ is a bounded, continuous function. Taking the average over all $j\in\mfK_{k}^{N}$ and taking the limit, we see that
	\begin{equation} \label{eq:proof-mean-field-is-true-eq1}
		\lim_{N\rightarrow\infty}
		\left( \frac{1}{N_k^{\N}} \sum_{j\in\mfK_k^{\N}}
		\nuj_t \right) - \nutk_t
		=  f_t^k \, \lim_{N\rightarrow\infty} \left( \frac{1}{N_k^{\N}} \sum_{j\in\mfK_k^{\N}}
		\left( \mfQ_0^j - \invmean_0^k \right)
		\right)
		\;.
	\end{equation}
	Since the collection $\{ \mfQ_0^j \}_{j\in\mfK_k^{\N}}$ is a collection of independent random variables with $\E \mfQ_0^j = \invmean_0^k$ and bounded variance, we may apply the law of large numbers which makes the right limit in \eqref{eq:proof-mean-field-is-true-eq1} vanish almost surely and in $L^2$. Therefore computing the left limit we have that
	\begin{equation}
		\nubark_t - \nutk_t = 0
		\;,
	\end{equation}
	almost surely for all $t\in[0,T]$. Which implies that $\bnut_t = \bnubar_t$ almost surely for all $t\in[0,T]$.

	Since $\bnut_t = \bnubar_t$, we find that $\nut_t = \nubar_t = \lim_{N\rightarrow\infty} \frac{1}{N} \sum_{j=1}^N \nujst_t$. Since the proposed form of $\nujst_t$ also solves the FBSDE~\eqref{eq:mean-field-ansatz-FBSDE}, $\nut_t = \nubar_t$ almost surely and $\nujst\in\A^j$, we find that $\nujst_t$ also solves the optimality FBSDE from Theorem~\eqref{thm:optim-bsde-prop}. Hence, applying Theorem~\eqref{thm:optim-bsde-prop}, the collection $\{ \nujst \}_{j=1}^\infty$ is optimal and satisfies
	\begin{equation}
		\nujst = \arg\sup_{\nu\in\A^j} \Hbar_j (\nu)
		\;,
	\end{equation}
	for all $j$.
\end{proof}

\section{Proofs for Section~\ref{sec:epsilon-Nash} -- \texorpdfstring{$\epsilon$-Nash Property}{E-Nash Property}}

\subsection{Proof of Theorem~\ref{thm:epsilon-nash-thm}} \label{sec:proof-epsilon-nash}

We begin the proof of Theorem~\ref{thm:epsilon-nash-thm} by introducing two lemmas. The first is a lemma provides a closed-form expression for the difference of an agent's mean-field optimal control and it's own sub-population's mean-field inventory. The second is a lemma regarding the distance between the mean-field game objective $\Hbar_j$ and the finite player game objective $H_j$.

\begin{lemma} \label{lemma:mean-field-diff}
	 Let $\nujst$ be the mean-field optimal control for an agent $j$ in sub-population $k$. Then
	 \begin{equation}
	 	\qj{\nujst}_t - \qbar_t^k = \left( \mfQ_0^j - \invmean_0^k \right) e^{\frac{1}{2 a_k} \int_0^t h_{2,u} \, du}
	 	\;,
	 \end{equation}
	 where $\mfQ_0^k$ is the initial value of $j$'s inventory, $\invmean_0^k = \E \mfQ_0^j$, and $h_{2,t}$ is the function defined in proposition~\ref{prop:agents-optimal-control} satisfying $h_{2,t} < 0$.
\end{lemma}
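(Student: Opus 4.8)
The plan is to recognize that this is precisely the computation already carried out at the end of Section~\ref{sec:properties-of-the-optimal-control}, now recorded as a standalone lemma for later use in the $\epsilon$-Nash argument. The cleanest route is to observe that the claimed identity is identical to Lemma~\ref{lemma:mean-field-ansatz-diff} once the ansatz sub-population inventory $\qtk{\nutk}_t$ is replaced by the true mean-field inventory $\qbar^k_t$. By Theorem~\ref{thm:mean-field-is-true} we have $\nutk_t = \nubark_t$ $\P\times\mu$-almost everywhere, and since both $\qtk{\nutk}$ and $\qbar^k$ start from $\invmean^k_0$ and are driven by $\nutk$ and $\nubark$ respectively, they coincide. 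The result then follows verbatim from Lemma~\ref{lemma:mean-field-ansatz-diff}.

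Alternatively, and more self-containedly, I would solve the governing ODE directly. By Theorem~\ref{thm:mean-field-is-true} the optimal control admits the representation~\eqref{eq:individual-optimal-control-real},
\[
	\nujst_t = \nubark_t + \tfrac{1}{2a_k} h_{2,t}^k \left( \qj{\nujst}_t - \qbar^k_t \right).
\]
Recalling that $d\qj{\nujst}_t = \nujst_t\,dt$ with $\qj{\nujst}_0 = \mfQ^j_0$, and that the mean-field inventory satisfies $d\qbar^k_t = \nubark_t\,dt$ with $\qbar^k_0 = \invmean^k_0$, I would subtract these two dynamics so that the $\nubark_t$ terms cancel, yielding the closed, homogeneous scalar linear ODE
\[
	d\left( \qj{\nujst}_t - \qbar^k_t \right) = \tfrac{1}{2a_k} h_{2,t}^k \left( \qj{\nujst}_t - \qbar^k_t \right)\,dt,
	\qquad \qj{\nujst}_0 - \qbar^k_0 = \mfQ^j_0 - \invmean^k_0 .
\]
Integrating this ODE produces the stated solution, with the deterministic integrating factor $\exp\bigl(\tfrac{1}{2a_k}\int_0^t h^k_{2,u}\,du\bigr)$.

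There is no genuine obstacle here; the entire content is the cancellation that turns~\eqref{eq:individual-optimal-control-real} into a homogeneous ODE for the inventory gap. The only points requiring care are bookkeeping: confirming that the mean-field inventory $\qbar^k$ is initialized at $\invmean^k_0 = \E[\mfQ^j_0]$, so that the gap at $t=0$ is exactly $\mfQ^j_0 - \invmean^k_0$, and citing the sign property $h^k_{2,t} < 0$ established in Proposition~\ref{prop:agents-optimal-control}, which guarantees that the integrating factor is a decaying exponential and thus underpins the mean-reversion interpretation exploited in the subsequent estimates.
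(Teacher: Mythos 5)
Your proposal is correct and matches the paper's own proof, which consists precisely of your first route: invoking Lemma~\ref{lemma:mean-field-ansatz-diff} together with Theorem~\ref{thm:mean-field-is-true} to replace the ansatz inventory $\qtk{\nutk}_t$ by the true mean-field inventory $\qbar^k_t$. Your alternative self-contained ODE argument is just the computation underlying Lemma~\ref{lemma:mean-field-ansatz-diff} inlined with $\nubark$ in place of $\nutk$, so it is not a genuinely different route.
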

\begin{proof}
The result is found by using Lemma~\ref{lemma:mean-field-ansatz-diff} along with Theorem~\ref{thm:mean-field-is-true}.
\end{proof}

\vspace{1cm}
	
\begin{lemma} \label{lemma:obj-diff-bound}
	Let $\nu \in \A^j$ be some arbitrary admissible control and $\nu^{\ast,-j}\in\A^{-j}$ be the collection $\nu^{-j,\ast}:=\left( \nu^{1,\ast} , \dots , \nu^{j-1,\ast},\nu^{j+1,\ast}, \dots, \nu^{N,\ast} \right)$ of mean-field optimal controls for all agents except for $j$. Then
	\begin{equation}
		\left\lvert H_j(\nu,\nu^{-j,\ast}) - \Hbar_j(\nu) \right\rvert
		= o(\delta_N) + o(\frac{1}{N})
		\;.
	\end{equation}
\end{lemma}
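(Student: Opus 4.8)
The plan is to reduce the difference of the two objectives to a single mean-field impact term and then estimate it. Since $\qj{\nu}$ depends only on $\nu$ and $\mfQ_0^j$ and not on the mean field, the quadratic running and terminal cost terms in $H_j(\nu,\nu^{-j,\ast})$ and $\Hbar_j(\nu)$ coincide, as do the deterministic constant terms, and $S^{\nubarN}$ and $S^{\nubar}$ differ only in their impact drift. Hence the common $A\,dt+dM$ part cancels and
\begin{equation}
 H_j(\nu,\nu^{-j,\ast}) - \Hbar_j(\nu) = \lambda\,\E\left[\int_0^T \qj{\nu}_t\,\big(\nubarN_t - \nubar_t\big)\,dt\right],
\end{equation}
so the whole problem reduces to bounding the right-hand side.

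Next I would expand $\nubarN_t - \nubar_t$ using the explicit equilibrium controls. Writing $\nubarN_t = \tfrac1N(\nu_t-\nujst_t) + \sum_{k\in\mfK}\tfrac{N_k^\N}{N}\,\nubarkN_t$ and recalling $\nubar_t=\sum_{k\in\mfK}p_k\nubark_t$, Theorem~\ref{thm:mean-field-is-true} and Lemma~\ref{lemma:mean-field-diff} give, for any agent $i$ in sub-population $k$, $\nu^{i,\ast}_t-\nubark_t = f_t^k\,(\mfQ_0^i-\invmean_0^k)$ with $f_t^k=\tfrac{h_{2,t}^k}{2a_k}\exp(\tfrac{1}{2a_k}\int_0^t h_{2,u}^k\,du)$ bounded and deterministic. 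Averaging over $i\in\mfK_k^\N$ yields $\nubarkN_t-\nubark_t=f_t^k\,R^{k,\N}$ with $R^{k,\N}:=\tfrac{1}{N_k^\N}\sum_{i\in\mfK_k^\N}(\mfQ_0^i-\invmean_0^k)$, so that
\begin{equation}
 \nubarN_t - \nubar_t = \tfrac1N\big(\nu_t-\nujst_t\big) + \sum_{k\in\mfK}\Big(\tfrac{N_k^\N}{N}-p_k\Big)\nubark_t + \sum_{k\in\mfK}\tfrac{N_k^\N}{N}\,f_t^k\,R^{k,\N}.
\end{equation}
Integrating the first term against $\qj{\nu}$ and using $\qj{\nu},\nu,\nujst\in\HT$ with Cauchy--Schwarz gives a contribution of order $1/N$; the second gives a finite constant (again by Cauchy--Schwarz, since $\qj{\nu},\nubark\in\HT$) times $|\tfrac{N_k^\N}{N}-p_k|=o(\delta_N)$, hence $o(\delta_N)$.

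The main obstacle is the third term, whose contribution is $\lambda\sum_{k}\tfrac{N_k^\N}{N}\int_0^T f_t^k\,\E[\qj{\nu}_t\,R^{k,\N}]\,dt$. Since $\|R^{k,\N}\|_{L^2}=O(N^{-1/2})$, a direct Cauchy--Schwarz estimate only yields $O(N^{-1/2})$, which is too weak; the sharper rate must come from the independence and mean-zero structure of $\{\mfQ_0^i-\invmean_0^k\}_{i\in\mfK_k^\N}$. I would write $\E[\qj{\nu}_t R^{k,\N}]=\tfrac{1}{N_k^\N}\sum_{i\in\mfK_k^\N}\E[\qj{\nu}_t(\mfQ_0^i-\invmean_0^k)]$: the self-term $i=j$ is $O(1)$ but carries weight $1/N_k^\N=O(1/N)$, while for $i\neq j$ one uses that $\qj{\nu}_t$ is $\F_t^j$-measurable to write $\E[\qj{\nu}_t(\mfQ_0^i-\invmean_0^k)]=\E\big[\qj{\nu}_t\,\E[\mfQ_0^i-\invmean_0^k\mid\F_t^j]\big]$. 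The crux is then that $\mfQ_0^i$ enters the observable filtration $\F_t^j$ only through the $O(1/N)$-weighted empirical impact term in the drift of $S^{\nubarN}$, so the posterior mean $\E[\mfQ_0^i-\invmean_0^k\mid\F_t^j]$ departs from its prior value $0$ by only $O(1/N)$ in $L^2$; Cauchy--Schwarz then makes each $i\neq j$ contribution $O(1/N)$, and averaging keeps the whole term at order $1/N$. Collecting the three estimates gives the claimed $o(\delta_N)+o(1/N)$ bound.
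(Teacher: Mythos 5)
Your reduction of the difference to $\lambda\,\E\!\left[\int_0^T \qj{\nu}_t\,(\nubarN_t-\nubar_t)\,dt\right]$ is actually \emph{more} faithful to the two objectives than the paper's own proof: in the paper's key display the factor $\qj{\nu}_t$ is dropped, and the argument proceeds with $\lambda\,\E\!\left[\int_0^T (\nubarN_t-\nubar_t)\,dt\right]$ alone. That omission is precisely what makes the paper's proof short: under a plain (unweighted) expectation, the fluctuation term $\nubarkN_t-\nubark_t = f_t^k\,\tfrac{1}{N_k^{\N}}\sum_{i\in\mfK_k^{\N}}(\mfQ_0^i-\invmean_0^k)$ vanishes \emph{exactly}, because $\E[\mfQ_0^i-\invmean_0^k]=0$; no law-of-large-numbers rate, no conditioning, and no filtering estimate is ever needed. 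Your first two estimates (the $\tfrac1N(\nu-\nujst)$ term and the $\lvert\tfrac{N_k^{\N}}{N}-p_k\rvert$ term) coincide with the paper's. The entire divergence is in your third term, which in the paper's version is identically zero and in your (more honest) version is not.

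That third term is where your proof has a genuine gap. The whole $o(1/N)$ rate rests on the assertion that $\E[\mfQ_0^i-\invmean_0^k\mid\F_t^j]$ has $L^2$ norm $O(1/N)$ for $i\neq j$; this is stated, not proved, and it is a nontrivial quantitative filtering estimate that does not follow from the paper's standing assumptions. Sketch of a counterexample: take $K=1$, $M\equiv 0$ and $A$ deterministic (both admissible, since $M\equiv 0$ is an $\LT$ martingale and any deterministic $A\in\HT$ is allowed). Then the observed path of $S^{\nubarN}$ reveals $\nubarN_t$ exactly, and after subtracting the quantities agent $j$ already knows (its own control and the $\F$-adapted mean-field part of the others' strategies), the agent recovers $f_t^k\,\tfrac{1}{N_k^{\N}}\sum_{i\neq j}(\mfQ_0^i-\invmean_0^k)$ exactly. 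For i.i.d.\ Gaussian initial inventories, the posterior mean $\E[\mfQ_0^i-\invmean_0^k\mid\F_t^j]$ then equals the revealed empirical average itself, whose $L^2$ norm is of order $N^{-1/2}$, not $N^{-1}$; and an admissible $\nu$ can be chosen so that $\qj{\nu}_t$ correlates with this revealed quantity, making your third term of order $N^{-1/2}$. So your key step can only be salvaged under additional non-degeneracy assumptions on the noise $M$ (so that signals entering the drift with $O(1/N)$ weight remain buried in $O(1)$ noise and Kalman--Bucy-type shrinkage yields the $O(1/N)$ posterior bound), together with an actual proof of that shrinkage --- machinery the paper never invokes because its unweighted expectation cancels the term outright.
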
	
\begin{proof}
	We will show that the claim holds by instead demonstrating the equivalent claim that
	\begin{equation} \label{eq:lemma-obj-diff-claim}
	 	\left\lvert H_j(\nu,\nu_{-j,\ast}) - \Hbar_j(\nu) \right\rvert^2
		= o(\delta_N^2) + o(\tfrac{1}{N^2})\;.
	\end{equation}
	 Using the representation for $\Hbar_j$~\eqref{eq:F-adapted-objective} and the representation for $H_j$~\eqref{eq:alternative-objective-def}, we find that the square of their difference is equal to
	\begin{equation} \label{eq:lemma-obj-diff}
		\left\lvert H_j(\nu,\nu^{-j,\ast}) - \Hbar_j(\nu) \right\rvert^2
		=
		\lambda^2 \, \E \left[ \int_0^T \, ( \nubarN_t - \nubar_t ) \, dt \right]^2
		\;.
	\end{equation}
	Therefore it is sufficient for us to show that quantity on the right side of the equation is $o(N^{-2}) + o(\delta_N^2)$. If we consider the expected value appearing in equation~\eqref{eq:lemma-obj-diff}, we can apply the definition of $\nubarN$ to decompose it as
	\begin{equation}
		\E \left[ \int_0^T \, ( \nubarN_t - \nubar_t ) \, dt \right]
		=
		\E \left[ \int_0^T \,  \frac{1}{N} \, \left( \nu_t - \nujst \right) +
		\frac{1}{N} \sum_{i=1}^N \left( \nu^{i,\ast} - \nubar_t \right) \, dt \right]
		\;,
	\end{equation}
	where $\nujst_t$ is the mean-field optimal control for agent $j$.
	Using the triangle inequality and Jensen's inequality on the left, we find that
	\begin{equation}
		\text{\eqref{eq:lemma-obj-diff}}
		\leq
		\frac{\lambda}{N^2} \, \E \left[ \int_0^T \left( \nu_t - \nujst_t \right)^2 \, dt \right]
		+
		\lambda \, \E \left[ \int_0^T \,  \frac{1}{N} \sum_{i=1}^N \left( \nu^{\ast,i} - \nubar_t \right) \, dt \right]^2
			\;.
	\end{equation}
 	$\nu, \nuj \in\A^j \subset \HT$ implies that $\E \left[ \int_0^T \left( \nu_t - \nujst_t \right)^2 \, dt \right] < \infty$, and so
	\begin{equation}
		\frac{\lambda}{N^2} \, \E \left[ \int_0^T \left( \nu_t - \nujst_t \right)^2 \, dt \right]
		= o(N^{-2})
		\;.
	\end{equation}
	
	At this point, all that remains is to investigate the term
	\begin{equation} \label{eq:lemma-squared-diff-1}
		\E \left[ \int_0^T \,  \frac{1}{N} \sum_{i=1}^N \left( \nu_t^{i,\ast} - \nubar_t \right) \, dt \right]^2
		\;.
	\end{equation}
	Using the notation $p_k^{\N} = \frac{N_k^{\N}}{N}$ and $\nubarkN_t = \frac{1}{N_k^{\N}} \sum_{i\in\mfK_k^{\N}} \nu_t^{i,\ast}$, we may write
	\begin{align*}
		  \frac{1}{N} \sum_{i=1}^N \left( \nu_t^{i,\ast} - \nubar_t \right)
		  &=
		  \sum_{k=1}^K \left\{ p_k^{\N} \nubarkN_t  - p_k \nubark_t \right\}
		  \\ &=
		  \sum_{k=1}^K \left\{ \nubarkN_t \left( p_k^{\N} - p_k  \right) + p_k \left( \nubarkN_t - \nubark_t \right)  \right\}
		  \;.
	\end{align*}
	Using this last result and the triangle inequality, we get
	\begin{equation} \label{eq:lemma-squared-diff-2}
		\text{lhs of \eqref{eq:lemma-squared-diff-1}}
		\leq \sum_{k=1}^K
		\left\{
		\E \left[ \int_0^T \nubarkN_t \, dt \right]^2 \, \left( p_k^{\N} - p_k \right)^2
		+
		p_k \, \E\left[ \int_0^T \left( \nubarkN_t - \nubark_t \right) \, dt \right]^2
		\right\}
	\;.
	\end{equation}
	First, by plugging in the result Lemma~\ref{lemma:mean-field-diff} taking the average over all $i\in\mfK_k^{\N}$ to compute $\nubarkN - \nubark$ we get
	\begin{align}
		\E\left[ \int_0^T \left( \nubarkN_t - \nubark_t \right) \, dt \right]
		&=
		\left( \int_0^T \, e^{\int_0^t \tfrac{h_{2,s}}{2 a_k} \, ds} \, dt \right) \,  \frac{1}{N_k^{\N}} \sum_{i\in \mfK_k^{\N}} \E\left[ \mfQ_0^j - \invmean_0^k \right] \,
		\\ &=
		0
		\;,
	\end{align}
	which implies that $\E \left[ \int_0^T \nubarkN_t \, dt \right] = \E \left[ \int_0^T \nubar_t^k \, dt \right]$. Applying this to \eqref{eq:lemma-squared-diff-2}, and noting that $\nubark_t\in \HT$ we get
	\begin{equation}
		\text{\eqref{eq:lemma-squared-diff-2}}
		\leq
		C_0 \sum_{k=1}^K \left( p_k^{\N} - p_k \right)^2 = o(\delta_N^2)
	\end{equation}
	for some $C>0$. Putting this all back together, we find that
	\begin{equation}
		\left\lvert H_j(\nu,\nu_{-j}) - \Hbar_j(\nu) \right\rvert^2
		=
		o(\delta_N^2)
		+ o(N^{-2})
		\;,
	\end{equation}
	for some other constant $C_1>0$. Taking the root of both sides, and noting that
	\begin{equation}
		 \sqrt{o(\delta_N^2) + o(N^{-2})}
		 =
		 o(\delta_N) + o(N^{-1})
	\end{equation}
	 we obtain the final result.
\end{proof}

\subsection{Proof of Theorem~\ref{thm:epsilon-nash-thm}}

\begin{proof}
	We prove the result of the theorem by using the Lemma~\ref{lemma:mean-field-diff}. First, let us note that by the definition of the supremum,
	\begin{equation}
		H_j(\omega,\nu^{-j,\ast}) \leq \sup_{\nu\in\A^j} H_j(\nu,\nu^{-j,\ast})
	\end{equation}
	holds for all $\omega\in\A^{j}$, and therefore the left-most inequality in the statement of Theorem~\ref{thm:epsilon-nash-thm} holds.

	Next, we must show that the right-most inequality in the statement of Theorem~\ref{thm:epsilon-nash-thm} also holds. First let us note that by Lemma~\ref{lemma:mean-field-diff}, for any $\nu\in\A^j$,
	\begin{align}
		H_j(\nu,\nu^{-j,\ast})
		&\leq
		\Hbar_j(\nu) + o(\delta_N) + o(N^{-1})
		\\ &\leq
		\Hbar_j(\nujst) + o(\delta_N) + o(N^{-1})
		\;,
	\end{align}
	 where we use the fact that $\Hbar_j(\nujst) = \sup_{\nu\in\A^j} \Hbar_j(\nu)$. Applying Lemma~\ref{lemma:mean-field-diff} again, we find that
	\begin{equation}
			H_j(\nu,\nu^{-j,\ast}) \leq H_j(\nuj,\nu^{-j,\ast}) + 2 \, o(\delta_N) + 2 \, o(N^{-1})
			\;.
	\end{equation}
	Since the above inequality holds for all $\nu\in\A^j$ we may take the supremum on the left, and cancel out the constant terms multiplying the little-$o$ terms to yield the final result,
	\begin{equation}
		\sup_{\nu\in\A^j} H_j(\nu,\nu^{-j,\ast}) \leq H_j(\nuj,\nu^{-j,\ast}) + o(\delta_N) + o(N^{-1})
		\;.
	\end{equation}
\end{proof}

\clearpage
\bibliographystyle{siamplain}
\bibliography{StochasticGamesPartialBib}

\nocite{*}

\end{document}